\documentclass[journal,twocolumn]{IEEEtran}

\usepackage{cite}
\usepackage{braket}
\usepackage{amsmath,amssymb,amsfonts}
\usepackage{algorithmic}
\usepackage{graphicx}
\usepackage{textcomp}
\def\BibTeX{{\rm B\kern-.05em{\sc i\kern-.025em b}\kern-.08em
    T\kern-.1667em\lower.7ex\hbox{E}\kern-.125emX}}

\usepackage{amsfonts}% Physical Review B
\usepackage{amsmath}
\usepackage{amssymb}
\usepackage{array}
\usepackage{mathrsfs}
\usepackage{leftidx}
\usepackage{graphicx}% Include figure files
\usepackage{epstopdf}
\usepackage{dcolumn}% Align table columns on decimal point
\usepackage{bm}% bold math
\usepackage[usenames]{color}
\usepackage{colortbl,booktabs}
\usepackage{subfigure}
\usepackage{cite}
\usepackage{multirow}
\usepackage{amsthm}

\usepackage{cases}
   
\usepackage[switch]{lineno} % 里面的选项代表双栏

\usepackage{ulem}

\linespread{1}
\begin{document}
%\linespread{1.8}
%\begin{spacing}{1.8}
%\linenumbers  
\title{Quantum coherent feedback control of an $N$-level atom with multiple excitations\thanks{This work is partially financially supported by Innovation Program for Quantum Science and Technology 2023ZD0300600, Guangdong Provincial Quantum Science Strategic Initiative (No. GDZX2203001), Hong Kong Research Grant Council (RGC) under Grant No. 15213924, National Natural Science Foundation of China under Grants No. 62173288.}}
%\author{Haijin Ding, Guofeng Zhang}
\author{Haijin Ding, Guofeng Zhang
%\thanks{This paragraph of the first footnote will contain the date on
%which you submitted your paper for review. It will also contain support
%information, including sponsor and financial support acknowledgment. For
%example, ``This work was supported in part by the U.S. Department of
%Commerce under Grant BS123456.'' }
\thanks{Haijin Ding is with the Laboratoire des Signaux et Syst\`{e}mes (L2S), CNRS-CentraleSup\'{e}lec-Universit\'{e} Paris-Sud, Universit\'{e} Paris-Saclay, 3, Rue Joliot Curie, 91190, Gif-sur-Yvette, France (e-mail: dhj17@tsinghua.org.cn, haijin.ding@centralesupelec.fr). }
\thanks{Guofeng Zhang is with the Department of Applied Mathematics, The Hong Kong Polytechnic University, Hung Hom, Kowloon,  SAR, China, and The Hong Kong Polytechnic University Shenzhen Research Institute, Shenzhen, 518057, China (e-mail: guofeng.zhang@polyu.edu.hk).}
\thanks{Corresponding author: Guofeng Zhang.}}

\maketitle

\begin{abstract}
The purpose of this paper is to study the dynamics of a quantum coherent feedback network, where an $N$-level atom is coupled with a cavity and the cavity is also coupled with a single or multiple parallel waveguides. When the atom is initialized at the highest energy level, it can emit multiple photons into the cavity, and the photons can be further transmitted to the waveguides and re-interact with the cavity quantum electrodynamics (cavity-QED) system. The transmission of photons in the waveguide can construct a feedback channel with a delay determined by the length of the waveguide. We model the dynamics of the atomic and photonic states of the cavity-QED system as a linear control system with delay. By tuning the control parameters such as the coupling strengths among the atom, cavity and waveguide, the eigenstates of the quantum system can be exponentially stable or unstable, and the exponential stability of the linear quantum control system with delay is related with the generation of single- and multi-photon states. 
Besides, when the cavity-QED system is coupled with multiple parallel waveguides, the emitted photons oscillate among different waveguides and the stability of quantum states is influenced by the feedback loop length and coupling strengths among waveguides.
\end{abstract}

\begin{IEEEkeywords}
quantum coherent feedback control, linear time-delay system, multi-photon state, cavity-waveguide interaction;
\end{IEEEkeywords}
%\tableofcontents
\section{Introduction}\label{Sec:Introduction}
Quantum feedback control has attracted increasing attention due to its wide applications in quantum information processing (QIP)~\cite{wiseman2009quantum,nielsen2001quantum,zhang2012quantum}, communication~\cite{cooney2016strong,salathe2018low,yamamoto2016quantum} and engineering~\cite{Zhangjing2017,vahlbruch2007quantum}. 
On one hand, quantum feedback can be realized by means of classical controllers using the measurement information of a quantum system; this is called quantum measurement feedback~\cite{MeasurementFeedbackRef22,CoherentQFLloyld}. On the other hand, the quantum system can also be regulated by a   controller which itself is a quantum system; this is called the coherent feedback control~\cite{CoherentQFLloyld}. For example, to control a trapped ion, another ion can work as a quantum controller, then the system ion and the controller ion can construct a joint system~\cite{CoherentQFLloyld,CiracPRL1995}. By focusing the light separately on the system ion or the controller ion, the system ion can be modulated by the controller ion via the vibration modes between them~\cite{CoherentQFLloyld,CiracPRL1995}.

Recently, a most widely used quantum coherent feedback realization is based on quantum systems closed by a waveguide~\cite{ZollerPRL,AntonPRL,photonfeedback,ding2022quantum,TrajMultiPhoton,regidor2021modeling}. For example, an atom can be coupled with a cavity, and the cavity is coupled to a waveguide through two semi-transparent mirrors~\cite{photonfeedback}. In this set-up, the photons in the cavity can be transmitted into the waveguide, and re-interact with the cavity and atom after propagating along the waveguide. Thus, the flying photons in the waveguide can  facilitate a coherent feedback channel. Physically, this quantum coherent feedback network can stabilize the Rabi oscillation in the cavity~\cite{2012Stabilizing,lang1973laser}, and control the number of photons in a cavity by tuning the feedback loop length~\cite{photonfeedback,ding2022quantum}.
A quantum coherent feedback network can also be constructed when an atom is directly coupled with a semi-infinite waveguide. The photon emitted by the atom can be reflected by the terminal mirror of the waveguide, and then the reflected photon can re-excite the atom to construct the feedback loop~\cite{waveguideOneatom}. When there are two or multiple two-level atoms coupled with the waveguide, the reflection of the photons by  atoms can induce multiple feedback loops, which can be used to create entangled quantum states~\cite{ZhangBin} and generate multi-photon states~\cite{FanPRB,TrajMultiPhoton} for quantum state engineering~\cite{pan2012multiphoton,dell2006multiphoton}. In the coherent feedback architectures mentioned above, the quantum control dynamics are in general {\it non-Markovian} because the future evolution of the quantum state is not only influenced by its current state information, but also by its historical information related to the length of the feedback loop involved~\cite{photonfeedback,ding2022quantum}. 

Similar to some classical non-Markovian networks~\cite{liu2010controllability,garate2010equivalence,wu2014suboptimal,huang2022impulsive}, most of these quantum feedback networks can be modeled as a linear control system with single or multiple time delays. When the waveguide is coupled with one cavity~\cite{photonfeedback,ding2022quantum} or atom~\cite{regidor2021modeling}, the quantum dynamics can be modeled as a linear system with single delay. When there are multiple quantum components (i.e., cavities or atoms) coupled with the waveguide~\cite{AntonPRL,ZhangBin,ClusterZoller}, there can be several closed feedback loops established by the photons transmitted in the waveguide among different cavities or atoms. This type of quantum control networks can be modeled  as  multi-agent networks with multiple delays~\cite{ZollerPRL,AntonPRL,ramirez2018single,munz2011delay,lu2017leader}.
By means of control theory for time-delayed linear systems~\cite{fridman2002improved,2002An}, these quantum coherent feedback networks can be analyzed and  engineered to achieve  physically meaningful control performance such as photon generation~\cite{ClusterZoller,shi2021deterministic}, networking~\cite{dongCui2023dynamics,gough2008linear,gough2009quantum} and entanglement creation~\cite{ZhangBin}. Hence, it is practically important to investigate this type of delay-dependent quantum coherent feedback systems from the perspective of systems and control theory.

In these quantum coherent feedback networks closed by the transmission of photons, the photon source can be multiple excited two-level atoms~\cite{DZWW23}, an excited multi-level atom~\cite{pan2017scattering}, or a single two-level atom which is driven repeatedly~\cite{ClusterZoller}. One typical example is given in Fig.~\ref{fig:Nlevel}, where an excited multi-level atom sits in a cavity which is coupled to a waveguide. Photons traveling along the waveguide form a coherent feedback loop. The eigenstate space of this feedback network can be represented in terms of the atomic state as well as the number of photons in the cavity and waveguide. The number of photons in the cavity and waveguide is related to the oscillation properties of the amplitudes of the eigenstates, and this oscillation of amplitudes can be revealed by the stability properties of the quantum feedback network~\cite{ding2022quantum}. For example, when the amplitudes of the states for which the cavity contains photons converge to zero, then all the emitted photons will eventually be in the waveguide.  

Similarly to a classical linear control system with delays, the stability of the quantum coherent feedback  network  in Fig.~\ref{fig:Nlevel}  can also be analyzed by means of the linear matrix inequality (LMI) approach combined with the Lyapunov method~\cite{mondie2005exponential}, and the convergence of states can be evaluated in terms of the exponential stability for time-invariant systems~\cite{mondie2005exponential,Newexponential,bliman2002lyapunov,gu2001further,Complexspace,Complexsong2016stability,Switchstability,mahmoud2005new} and time-varying systems~\cite{TimeVarykim2001delay,TimeVaryli1997delay,TimeVaryhien2009exponential}. For example, as studied in \cite{QuantumStochasticfeedback} on a quantum system with delayed feedback, the control performance can be evaluated in terms of the quantum state's convergence to the target state, and the feedback control stability can be investigated with the Lyapunov method~\cite{kuang2018stability}. The above methods based on the traditional stability theory can be potentially generalized to the analysis of the performance of coherent feedback loops with photons, which is the case in our study.

In this paper, we investigate the coherent feedback dynamics established by the interactions among a multi-level atom, cavity and waveguide, as shown in Fig.~\ref{fig:Nlevel}. We propose that the above quantum coherent feedback dynamics can be modeled as a linear control system with a round-trip delay. The linear dynamics can be time-varying or time-invariant, depending on whether there are detunings between the atom and cavity or not.  We show that the evolution of eigenstates of the quantum system can be studied in terms of the exponential stability of a linear system with delay. Thus we can analyze the dynamics of the quantum system from the perspective of linear control theory with the Lyapunov techniques, for both time-invariant and time-varying dynamics according to the existence of the detunning between the atom and cavity. The above linear quantum coherent network with delay can be regarded as a multi-photon source, where the generation of multi-photon states can be tuned by the length of the coherent feedback loop. We also generalize the architecture in Fig.~\ref{fig:Nlevel} with one waveguide feedback loop to the circumstance in Fig.~\ref{fig:multiwaveguide} with parallel waveguides, and study how the distribution of the photonic states among waveguides is influenced by the feedback loop length and the coupling parameters.

The rest of the paper is organized as follows. Section~\ref{Sec:Model} concentrates on the coherent feedback interaction between one waveguide and a cavity coupled with a multi-level atom which is initially excited, especially on the distributions of photons influenced by the feedback design and the exponential stability of this coherent feedback network. In Section~\ref{Sec:detuning}, the feedback system is time varying as the detunings between the cavity and multi-level atom are considered. In Section~\ref{Sec:Parallel}, we generalize to the circumstance that the cavity-QED system is coupled to an array of parallel waveguides. Section~\ref{Sec:Conclusion} concludes this paper.

The reduced Planck constant $\hbar$ is set to be 1 in this paper.
\section{Coherent feedback control of an $N$-level atom with one waveguide}\label{Sec:Model}
\begin{figure}[h]
\centerline{\includegraphics[width=1\columnwidth]{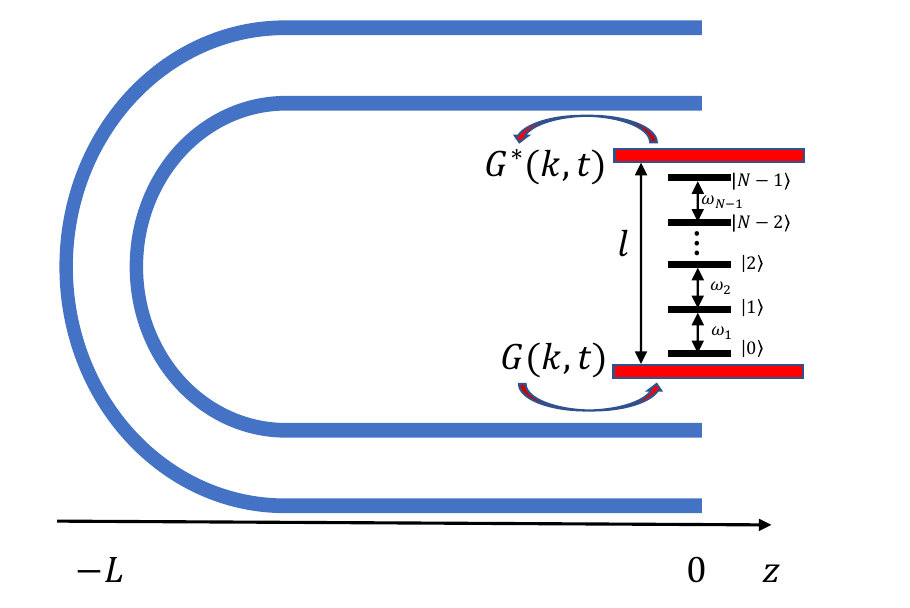}}
\caption{Schematic of a ladder-type $N$-level atom inside a cavity (designated by the two red bars) coupled with a waveguide (designated by the blue band).}
	\label{fig:Nlevel}
\end{figure}
As illustrated in Fig.~\ref{fig:Nlevel}, a ladder-type (also called $\Xi$-type) $N$-level atom is coupled with a cavity and the cavity is coupled to a waveguide of length $2L$. The initially excited atom can emit a photon into the cavity, then the  photon  can be transmitted from the cavity to one side (upper or lower) of the waveguide, and enter back into the cavity from the other side after traveling through the waveguide. Thus, the waveguide with propagating photons  constructs a feedback loop for the cavity-QED system, which can be used to engineer the quantum states  of the system. The length of the cavity is $l$, and it is assumed that $l\ll L$. Physically, this means the cavity can interact with the waveguide with continuous modes~\cite{lang1973laser,gea1990treatment,ding2022quantum}. The overall Hamiltonian of this quantum coherent feedback system can be written as
\begin{small}
\begin{equation} \label{con:Hamiltonian}
\begin{aligned}
H = H_A  + \omega_c a^{\dag}a +\int\omega_k d_k^{\dag}d_k \mathrm{d}k+ H_{I}.
\end{aligned}
\end{equation}
\end{small}%
Here, $H_A = \sum_{n=0}^{N-1}  \tilde{\omega}_n |n\rangle \langle n|$ is the Hamiltonian of the $N$-level atom, whose $n$-th level, represented by $|n\rangle$ with $n=0,1,\cdots,N-1$ as shown in Fig.~\ref{fig:Nlevel},  has the energy $ \tilde{\omega}_n$. The energy gap between two neighboring energy levels of the atom is denoted by $\omega_n= \tilde{\omega}_{n} -  \tilde{\omega}_{n-1}$  ($n=1,2,\cdots,N-1$) as shown in Fig.~\ref{fig:Nlevel}, see also  \cite{PRADynamicsII,pan2017scattering}.
The cavity Hamiltonian is $\omega_c a^{\dag}a$, where $\omega_c$ is the resonant frequency of the cavity and $a$($a^{\dag}$) are the annihilation(creation) operators of the cavity mode. The third item on the RHS of Eq. \eqref{con:Hamiltonian} represents the Hamiltonian of the waveguide of a  continuum of modes, where $\omega_k = kc$ is the frequency of  the continuous mode $k$ with $c$ being the velocity of the field, and $d_k$($d_k^{\dag}$) are the annihilation(creation) operators of the  mode $k$. 
The interaction Hamiltonian $H_I$ on the RHS of Eq. \eqref{con:Hamiltonian}  includes the interaction between the waveguide and the cavity as well as  that between the cavity and the $N$-level atom. As we will investigate the system dynamics in the interaction picture, the specific form of $H_I$ need not be given; instead, its counterpart in the interaction picture, $\tilde{H}_I$, is given in Eq.~(\ref{con:Ham}) below.

If $\omega_1 = \omega_2 =\dots = \omega_{N-1} \equiv  \omega_a=\omega_c$ (the resonant case), then in the interaction picture  $H_I$ can be concisely written  as~\cite{photonfeedback,ding2022quantum,nemet2019stabilizing,whalen2015open,nemet2019time,crowder2020quantum,crowderPRAtraje,regidor2021modeling,AntonPRL}
\begin{small}
\begin{equation} \label{con:Ham}
\begin{aligned}
\tilde{H}_I = &-\sum_{n=1}^{N-1} \gamma_n \left (\sigma_n^-a^{\dag} + \sigma_n^+a\right ) \\
&-  \int  \left [G(k,t) a^{\dag}d_k + G^*(k,t)ad^{\dag}_k\right ]\mathrm{d}k.
\end{aligned}
\end{equation}
\end{small}%
Here, $\sigma_n^- = |n-1\rangle \langle n|$ are the lowering operators that lower the atom's energy  from the $n$th level to the $(n-1)$th energy level, $\sigma_n^+ = |n\rangle \langle n-1|$  are  raising operators that raise the atom's energy from the $(n-1)$th level to the $n$th energy level, $\gamma_n$ is the coupling strength between the cavity and the atomic transition between energy states $|n\rangle$ and $|n-1\rangle$. Thus, $\sigma_n^-a^{\dag}$ means that one photon is created in the cavity via the atom's decaying to a lower energy level, and $\sigma_n^+a$ gives the reverse process, i.e., the atom is excited to a higher energy level by absorbing one photon from the cavity. Analogously,  $a^{\dag}d_k$ and $ad^{\dag}_k$ represent the exchange of a photon between the cavity and  the mode $k$ of the waveguide. The rate of this process is determined by the coupling strength $G(k,t) = G_0\sin(kL)e^{-i(\omega_k-\Delta_0)t}$ between the cavity and   the mode $k$ of the waveguide, where $G_0$ is the amplitude, $\Delta_0 = \omega_c$ is the central mode of the emitted photon, and $\sin(kL)$ represents the superposition of two reciprocal propagating waves in the waveguide emitted from two mirrors of the cavity~\cite{photonfeedback,ding2022quantum}. The continuous modes $k$ in the waveguide are integrated over $(-\infty,+\infty)$ in this paper. More physical interpretations can be found in ~\cite{walls2008quantum,gardiner2004quantum,photonfeedback,fan2010input} and references therein.

Assume that initially the atom is at its highest energy level $N-1$, and the cavity and waveguide are both empty. Thus there are at most $N-1$ photons in the whole network. Moreover, as there are no coherent drives, the number of excitations is preserved during system evolution. The evolution of the quantum state  $|\Psi(t)\rangle$ of the coherent feedback network in  Fig.~\ref{fig:Nlevel} is governed by the Schr\"{o}dinger equation
\begin{small}
\begin{equation} \label{con:SchodingerEQ}
\begin{aligned}
\frac{\mathrm{d}}{\mathrm{d} t}|\Psi(t)\rangle = -i \mathbf{H} |\Psi(t)\rangle,
\end{aligned}
\end{equation}
\end{small}%
where $\mathbf{H}$ is determined according to the representation picture we choose. In the Schr\"{o}dinger picture, $\mathbf{H}$ is $H$ in Eq.~(\ref{con:Hamiltonian}), while in the interaction picture, $\mathbf{H}$ should be $\tilde{H}_I$ in Eq.~(\ref{con:Ham}). For more physical details please refer to Ref.~\cite{photonfeedback}. In the following, we analyze the quantum coherent feedback control dynamics in the interaction picture with the number of excitations being preserved, in this case the quantum state $\ket{\Psi(t)}$ is of the following form~\cite{photonfeedback}
\begin{small}
\begin{equation} \label{con:state}
\begin{aligned}
|\Psi(t)\rangle = &c_0(t)|N-1,0,0\rangle \\
&+ \sum_{j = 1}^{N-1} \sum_{m=0}^{j}\int \cdots \int c_{j,k}^m(t,k_1,\cdots,k_{j-m}) \\
&|N-1-j,m,j-m\rangle \mathrm{d}k_1 \cdots \mathrm{d}k_{j-m},
\end{aligned}
\end{equation}
\end{small}%
where the basis state $|N-1-j,m,j-m\rangle$ means that the atom is at the $(N-1-j)$th level, there are $m$ photons in the cavity and $j-m$ photons in the waveguide,
$c_{j,k}^m(t,k_1,\cdots,k_{j-m})$ is the time-varying amplitude of this state. In particular, $|N-1,0,0\rangle$ with the amplitude $c_0(t)$ means that the atom is at the highest energy level and both the cavity and waveguide are empty. 
The amplitudes of the quantum states are constrained by the normalization
\begin{footnotesize}
\[
|c_0(t)|^2 + \sum_{j=1}^{N-1} \sum_{m=0}^{j}\int \cdots \int |c_{j,k}^m(t,k_1,\cdots,k_{j-m})|^2  \mathrm{d}k_1 \cdots \mathrm{d}k_{j-m} =1,
\]
\end{footnotesize}%
under the initial condition that $c_0(0) = 1$ and all the other amplitudes are zero.

Obviously, $m\leq j \leq N-1$ in Eq. \eqref{con:state}. In what follows we look at three special cases of $|N-1-j,m,j-m\rangle$ for demonstration. Firstly, for $j=m=0$, the initial state $|N-1,0,0\rangle$ means that the atom is at its highest energy level $\ket{N-1}$ and there are no photons in either the cavity or waveguide. Secondly,  when $m=j$,  the waveguide is empty.  Accordingly, $c_{j,k}^m(t,k_1,\cdots,k_{j-m})$ is only a function of time $t$, which can be rewritten as $c_{j,k}^j(t)$.  Finally, when $m < j$, for the example $j-m=2$, the state $|N-1-j,j-2,2\rangle$ means that the atom is at the $(N-1-j)$th level, there are $j-2$ photons in the cavity, and the other two photons are in the waveguide. Moreover, as the waveguide contains a continuum of modes, the two photons in the waveguide can be in any two of these modes. Thus,  in this case, the multiple integral in Eq. \eqref{con:state}  is $\int \int c_{j,k}^{j-2}(t,k_1,k_2) |N-1-j,j-2,2\rangle\mathrm{d}k_1  \mathrm{d}k_2$.

\newtheorem{assumption}{Assumption}

The evolution of quantum state amplitudes can be derived by substituting Eq.~(\ref{con:state}) into the Schr\"{o}dinger equation (\ref{con:SchodingerEQ}). By comparing the coefficients of each basis state $|N-1-j,m,j-m\rangle$ on both sides, we can yield the following set of equations for the amplitudes. For examples, the operator $\sigma_n^- a^{\dag} $ in the interaction Hamiltonian \eqref{con:Ham} acting on the state $|n,m,j-m\rangle$  creates the state $|n-1,m+1,j-m\rangle$, the operator $a^{\dag}d_k$ applied upon the state $|n,m,j-m\rangle $ creates the state $|n,m+1,j-m-1\rangle$, and similarly for other conjugate components in Eq.~(\ref{con:state}). In summary, we have the ordinary differential equations (ODEs) for the amplitudes as 
\begin{footnotesize}
\begin{subequations} \label{con:controleq}
\begin{align}
&\dot{c}_0(t) =  i\gamma_{N-1}  c_{1,k}^{1}(t),   \label{Nmodel1}\\
&\dot{c}_{j,k}^m(t,k_1,\cdots,k_{j-m}) = i\sqrt{m}\gamma_{N-j}  c_{j-1,k}^{m-1}(t,k_1,\cdots,k_{j-m})\notag\\
& + i\sqrt{m+1}\gamma_{N-j-1}  c_{j+1,k}^{m+1}(t,k_1,\cdots,k_{j-m}) \notag\\
&+i\sum_{p=1}^{j-m+1}\int G(k_p,t) c_{j,k}^{m-1}(t,k_1,\cdots,k_{p-1},k_p,k_{p+1}\dots,k_{j-m+1}) \mathrm{d}k_p \notag\\
&+i\sum_{p=1}^{j-m} G^*(k_p,t) c_{j,k}^{m+1}(t,k_1,\cdots,k_{p-1},k_{p+1},\dots, k_{j-m}), m>0, \label{Nmodel2} \\
&\dot{c}_{j,k}^0(t,k_1,\cdots,k_{j}) =  i\gamma_{N-j-1}  c_{j+1,k}^{1}(t,k_1,\cdots,k_{j}) \notag\\
&+i\sum_{p=1}^{j} G^*(k_p,t) c_{j,k}^{1}(t,k_1,\cdots,k_{p-1},k_{p+1},\dots, k_{j}), m=0. \label{Nmodel3}
\end{align}
\end{subequations}
\end{footnotesize}%

Eq.~(\ref{Nmodel1}) means that the atom at the energy level $|N-1\rangle$ can emit one photon into the cavity by decaying to the energy level $|N-2\rangle$, and the rate of this process is determined by the coupling rate $\gamma_{N-1}$. In this case,  the overall system realizes the transition from $|N-1,0,0\rangle$ to $|N-2,1,0\rangle$. 
Next we look at Eq. \eqref{Nmodel2}. A photon in the cavity may be absorbed by the atom (described by the second item on the RHS of Eq. \eqref{Nmodel2}) or enter into the waveguide (described by the fourth item on the RHS of Eq. \eqref{Nmodel2}). Besides, the atom may emit a photon into the cavity  (described by the first item on the RHS of Eq. \eqref{Nmodel2}), and the photon can also be transmitted from the waveguide to the cavity (described by the third item on the RHS of Eq. \eqref{Nmodel2}).
Finally, we look at Eq. \eqref{Nmodel3}. The coefficient $c_{j,k}^0(t,k_1,\cdots,k_{j})$ is for the state  $|N-1-j,0,j\rangle$, which means an empty cavity. This case is resulted from the fact that the only photon in the cavity is absorbed by the atom (described by the first item on the RHS of Eq. \eqref{Nmodel3}) or transmitted from the cavity to the waveguide (described by the second item on the RHS of Eq. \eqref{Nmodel3}.  

In the special case of $N=2$, there is only one excitation in the system, thus there can be at most one photon in the cavity or waveguide, and the system in Eq.~\eqref{con:controleq} reduces to Eqs.~(10-12) in Ref.~\cite{photonfeedback}.

According to the mathematical calculations in Appendix~\ref{Sec:ModeldelayAppend}, Eq.~(\ref{con:controleq}) can be rewritten as a system of time-delayed ODEs:
\begin{footnotesize}
\begin{subequations} \label{con:controleq2}
\begin{align}
&\dot{c}_0(t) =  i\gamma_{N-1}  c_{1k}^{1}(t),   \label{delaymodel1}\\
&\dot{c}_{j,k}^m(t,k_1,\cdots,k_{j-m}) = i\sqrt{m}\gamma_{N-j}  c_{j-1,k}^{m-1}(t,k_1,\cdots,k_{j-m}) \notag\\
&+ i\sqrt{m+1}\gamma_{N-j-1}  c_{j+1,k}^{m+1}(t,k_1,\cdots,k_{j-m}) \notag\\
&-\frac{G_0^2}{4c} \sum_{p=1}^{j-m+1} [c_{j,k}^{m}(t,k_1,\cdots,k_{p-1},k_{p+1},\dots, k_{j-m}) \notag\\
&- e^{i\Delta_0\tau} c_{j,k}^{m}(t-\tau,k_1,\cdots,k_{p-1},k_{p+1},\dots, k_{j-m}) ] \notag\\
&+i\sum_{p=1}^{j-m} G^*(k_p,t) c_{j,k}^{m+1}(t,k_1,\cdots,k_{p-1},k_{p+1},\dots, k_{j-m-1}), m>0, \label{delaymodel2}\\
&\dot{c}_{j,k}^0(t,k_1,\cdots,k_{j}) =  i\gamma_{N-j-1}  c_{j+1,k}^{1}(t,k_1,\cdots,k_{j}) \notag\\
&+i\sum_{p=1}^{j} G^*(k_p,t) c_{j,k}^{1}(t,k_1,\cdots,k_{p-1},k_{p+1},\dots, k_{j}), m=0, \label{delaymodel3}
\end{align}
\end{subequations}
\end{footnotesize}%
where $\tau = 2L/c$ denotes the round-trip delay of the transmission of photons in the waveguide.
Compared with Eq.~(\ref{con:controleq}), the third term on the right-hand side of Eq.~(\ref{delaymodel2}) shows that the amplitude $c_{j,k}^m(t,k_1,\cdots,k_{j-m})$ is not only influenced by its present value, but also the  past value due to the round-trip delay $\tau$. 
This indicates how the feedback loop mediated by the waveguide influences the dynamics of the atom.

In summary, we have derived the control equation of the coherent feedback network in Eq.~(\ref{con:controleq}), or equivalently Eq.~(\ref{con:controleq2}). Based on this general control equation, in the following subsection we take the three-level atom as an example  to study two physically relevant scenarios: one is that the waveguide is so short compared with the lifetime of atom that  the delay is negligible (part  II.A.1), and the other is that the delay is longer than the lifetime of the atom (part  II.A.2).

\subsection{Example: feedback control of the ladder-type three-level atom}
In this section we take the three-level atom as an example to demonstrate the system modeling in the previous section. In this case, $N=3$. Assume  the initial system state is $|\Psi(0)\rangle = |2,0,0\rangle$. In Eq.~(\ref{con:controleq2}), $c_{1,k}^m$ with $m=0,1$ indicates that there is one photon either in the cavity or the waveguide, and $c_{2,k}^m$ with $m=0,1,2$ indicates that there are overall two photons in the cavity and waveguide.  
According to Eq.~(\ref{con:state}), the quantum state is
\begin{small}
\begin{equation} \label{con:statethreelevel}
\begin{aligned}
|\Psi(t)\rangle = &c_0(t)|2,0,0\rangle + c_{1,k}^1(t) |1,1,0\rangle + \int c_{1,k}^0(t,k) |1,0,1\rangle \mathrm{d}k \\
& + c_{2,k}^2(t) |0,2,0\rangle + \int c_{2,k}^1(t,k) |0,1,1\rangle \mathrm{d}k \\
& + \int \int c_{2,k}^0(t,k_1,k_2) |0,0,2\rangle  \mathrm{d}k_1  \mathrm{d}k_2.
\end{aligned}
\end{equation}
\end{small}%
The interaction Hamiltonian in Eq.~(\ref{con:Ham}) reads
\begin{small}
\begin{equation} \label{con:Ham3level}
\begin{aligned}
\tilde{H}_I = &- \gamma_1 (\sigma_1^-a^{\dag} + \sigma_1^+a) - \gamma_2 (\sigma_2^-a^{\dag} + \sigma_2^+a) \\
&- \int  \left[G(k,t) a^{\dag}d_k + G^*(k,t)ad^{\dag}_k\right]\mathrm{d}k.
\end{aligned}
\end{equation}
\end{small}%
By Eqs.~(\ref{con:controleq},\ref{con:controleq2}), we have
\begin{footnotesize}
\begin{subequations} \label{con:controleqthreelevelDelay}
\begin{align}
&\dot{c}_0(t) =  i\gamma_{2}  c_{1,k}^{1}(t),   \label{threedelay1}\\
&\dot{c}_{1,k}^1(t) =  i\gamma_2 c_0(t) + i \gamma_1 c_{2,k}^2(t) -\frac{\left|G_0\right|^2}{4c} \left [c_{1,k}^1(t) -e^{i\Delta_0\tau} c_{1,k}^1(t-\tau)\right], \label{threedelay2}\\
&\dot{c}_{2,k}^2(t) = i\gamma_1 c_{1,k}^1(t)  -  \frac{\left|G_0\right|^2}{4c} \left [ c_{2,k}^2(t) -e^{i\Delta_0\tau} c_{2,k}^2(t-\tau)\right ], \label{threedelay4}\\
&\dot{c}_{1,k}^0(t,k) =i G^*(k,t) c_{1,k}^1(t)  + i\gamma_1 c_{2,k}^1(t,k), \label{threedelay3}\\
&\dot{c}_{2,k}^1(t,k)  = i\gamma_1 c_{1,k}^0(t,k) + i G^*(k,t) c_{2,k}^2(t) \notag\\
&~~~~~~~~~~~~~~ -\frac{\left|G_0\right|^2 }{2c} \left [c_{2,k}^1(t,k) - e^{i\Delta_0\tau}c_{2,k}^1(t-\tau,k) \right ],\label{threedelay5}\\
&\dot{c}_{2,k}^0(t,k_1,k_2) = iG^*(k_2,t) c_{2,k}^1(t,k_1) +  iG^*(k_1,t) c_{2,k}^1(t,k_2). \label{threedelay6}
\end{align}
\end{subequations}
\end{footnotesize}%
Denote $\kappa = \left|G_0\right|^2/4c$ and apply the Laplace transform to Eqs.~(\ref{threedelay1}-\ref{threedelay4}),  we get
\begin{footnotesize}
\begin{subequations} \label{con:Laplace1to3}
\begin{numcases}{}
sC_0(s) -1  =  i\gamma_{2}  C_{1,k}^{1}(s),   \label{Lapthreedelay100}\\
sC_{1,k}^1(s) =  i\gamma_2 C_0(s) + i \gamma_1 C_{2,k}^2(s) \notag\\
~~~~~~~~~~~~~ -\kappa \left [C_{1,k}^1(s) -e^{i\Delta_0\tau} e^{-s\tau}C_{1,k}^1(s)\right ], \label{Lapthreedelay200}\\
sC_{2,k}^2(s) = i\gamma_1 C_{1,k}^1(s)  -  \kappa \left [ C_{2,k}^2(s) -e^{i\Delta_0\tau} e^{-s\tau} C_{2,k}^2(s) \right], \label{Lapthreedelay40}
\end{numcases}
\end{subequations}
\end{footnotesize}%
where $C_0(s)$, $C_{j,k}^j(s)$ are the frequency counterparts of the time-domain functions $c_0(t)$, $c_{j,k}^j(t)$, respectively. Solving Eq.~\eqref{con:Laplace1to3} we have
\begin{scriptsize}
\begin{subequations} \label{con:Laplace1to32}
\begin{numcases}{}
C_0(s)   =  \frac{ i\gamma_{2}  C_{1,k}^{1}(s) + 1}{s},   \label{Lapthree0delay1}\\
sC_{1,k}^1(s) =   \frac{ -\gamma_{2}^2  C_{1,k}^{1}(s) + i\gamma_2}{s} - \frac{\gamma_1^2}{s+ \kappa (1- e^{(i\Delta_0-s)\tau})} C_{1,k}^1(s) \notag\\
~~~~~~~~~~~~~~~~ -\kappa \left (1- e^{(i\Delta_0-s)\tau}\right )C_{1,k}^1(s) , \label{Lapthree0delay2}\\
C_{2,k}^2(s) = \frac{i\gamma_1}{s+ \kappa (1- e^{(i\Delta_0-s)\tau})} C_{1,k}^1(s). \label{Lapthree0delay4}
\end{numcases}
\end{subequations}
\end{scriptsize}%

Finally, Eq.~(\ref{Lapthree0delay2}) can be rewritten as
\begin{scriptsize}
\begin{equation} \label{con:C1k1s0}
\begin{aligned}
~\left [s^2+ \gamma_2^2 + \frac{s\gamma_1^2}{s+ \kappa \left (1- e^{(i\Delta_0-s)\tau}\right )} +   \kappa (1- e^{(i\Delta_0-s)\tau})s \right ]C_{1,k}^1(s) = i\gamma_2.
\end{aligned}
\end{equation}
\end{scriptsize}%

In what follows we discuss two scenarios depending on the relationship between the lifetime of an atom and the transmission delay in the waveguide.

\subsubsection{Feedback delay is much smaller than atom lifetime}
\

When the length of the waveguide $2L$ is small and the delay $\tau  \ll 1$ is much smaller than the lifetime of the atom, then $e^{-is\tau} \approx 1$ when $s$ is close to the origin of the complex plane.

We denote the amplitudes in Eq.~(\ref{con:statethreelevel}) in this parameter setting, i.e., $c_0(t)$, $c_{j,k}^m(t)$ as  $\tilde{c}_0(t)$, $\tilde{c}_{j,k}^m(t)$, and the Laplace transformation in Eq.~(\ref{con:Laplace1to3}), i.e., $C_0(s)$, $C_{j,k}^m(s)$ as $\tilde{C}_0(s)$, $\tilde{C}_{j,k}^m(s)$, respectively.
Then Eq.~(\ref{con:C1k1s0}) can be simplified as:
\begin{scriptsize}
\begin{equation} \label{con:C1k1s000}
\begin{aligned}
&\left [(s^2+ \gamma_2^2) (s+ \kappa (1- e^{i\Delta_0\tau})) + s\gamma_1^2 \right.\\
&\left.+   \kappa (1- e^{i\Delta_0\tau})s (s+ \kappa (1- e^{i\Delta_0\tau})) \right ]\tilde{C}_{1,k}^1(s) = i\gamma_2 (s+ \kappa (1- e^{i\Delta_0\tau})).
\end{aligned}
\end{equation}
\end{scriptsize}%
Consequently,
\begin{scriptsize}
\begin{equation} \label{con:C1k1s}
\begin{aligned}
&~~~~\tilde{C}_{1,k}^1(s)\\
&= \frac{i\gamma_2 (s+ \kappa (1- e^{i\Delta_0\tau}))}{[(s^2+ \gamma_2^2) (s+ \kappa (1- e^{i\Delta_0\tau})) + s\gamma_1^2 +  \kappa (1- e^{i\Delta_0\tau})s (s+ \kappa (1- e^{i\Delta_0\tau})) ]}\\
&=\frac{i\gamma_2 (s+ \kappa (1- e^{i\Delta_0\tau}))}{s^3 +2\kappa (1- e^{i\Delta_0\tau})s^2 + (\gamma_1^2 +\gamma_2^2 + \kappa^2(1- e^{i\Delta_0\tau})^2)s + \gamma_2^2 \kappa (1- e^{i\Delta_0\tau})}.
\end{aligned}
\end{equation}
\end{scriptsize}%

Using the Sign Pair Criterion (SPC) in Refs.~\cite{dongCui2023dynamics,sivanandam2012algebraic}, we can get the generalized Hurwitz matrix of the denominator of Eq.~(\ref{con:C1k1s}) and prove that all its eigenvalues are in the LHS of the complex plane. The proof is omitted because of page limitations.

Denote $\tilde{c}_0(\infty)=\lim_{t\rightarrow \infty}\tilde{c}_{0}(t)$ and $\tilde{c}_{j,k}^m(\infty) = \lim_{t\rightarrow \infty}\tilde{c}_{j,k}^m(t)$ as the steady values of the amplitudes. We have the following results.

\newtheorem{mypro}{Proposition}
\begin{mypro} \label{photonstate}
Let $\tau \ll 1$. When $\Delta_0\tau \neq 2n\pi$, eventually there are two photons in the waveguide. When $\Delta_0\tau = 2n\pi$, the atom oscillates in the cavity and there are no photons in the waveguide.
\end{mypro}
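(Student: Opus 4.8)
The plan is to analyze the long-time behavior of the amplitudes through the final-value theorem applied to the Laplace-domain expression \eqref{con:C1k1s}, and to split into the two cases according to whether $1 - e^{i\Delta_0\tau}$ vanishes. First I would observe that for $\Delta_0\tau \neq 2n\pi$ the quantity $\kappa(1 - e^{i\Delta_0\tau})$ is a nonzero complex number with strictly positive real part (since $\mathrm{Re}(1-e^{i\Delta_0\tau}) = 1 - \cos(\Delta_0\tau) > 0$), which plays the role of an effective damping rate. Invoking the stability claim already asserted in the excerpt (all roots of the cubic denominator of $\tilde{C}_{1,k}^1(s)$ lie in the open left half-plane via the Sign Pair Criterion), the function $s\,\tilde{C}_{1,k}^1(s)$ is analytic on $\mathrm{Re}(s)\geq 0$, so $\tilde{c}_{1,k}^1(\infty) = \lim_{s\to 0} s\,\tilde{C}_{1,k}^1(s) = 0$ because the numerator of \eqref{con:C1k1s} tends to $i\gamma_2\kappa(1-e^{i\Delta_0\tau})\neq 0$ while the denominator tends to $\gamma_2^2\kappa(1-e^{i\Delta_0\tau})\neq 0$, and the extra factor $s$ forces the limit to zero. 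Then from \eqref{Lapthree0delay4}, $\tilde{c}_{2,k}^2(\infty) = 0$ as well (again the denominator $s+\kappa(1-e^{i\Delta_0\tau})$ does not vanish at $s=0$), and from \eqref{Lapthree0delay1} one gets $\tilde{c}_0(\infty) = \lim_{s\to 0}\bigl(i\gamma_2\tilde{C}_{1,k}^1(s) + 1\bigr)$; here $i\gamma_2\tilde{C}_{1,k}^1(s)\to 0$ at the origin only after checking the residue structure — in fact $\tilde C_{1,k}^1(s)$ has a finite nonzero limit at $s=0$ unless... this is the point I must be careful about, see below.

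Next I would handle the cavity-photon amplitudes $c_{1,k}^0$, $c_{2,k}^1$, $c_{2,k}^0$ that carry waveguide photons. Since these are driven by $G^*(k,t)$ (an oscillatory, non-decaying source) multiplied by amplitudes that go to zero, I would argue that the products $G^*(k,t)c_{1,k}^1(t)$ etc. are integrable in $t$ (the decaying factors are exponentially small by the left-half-plane location of all poles), so that $c_{1,k}^0(t,k)$ and $c_{2,k}^1(t,k)$ converge to finite limits, and consequently $c_{2,k}^0(t,k_1,k_2)$ converges as well. Combining with $|c_0|^2 + \dots = 1$ (normalization) and the fact that every amplitude with a nonzero cavity photon count vanishes in the limit, I conclude that all the probability has migrated to the state $|0,0,2\rangle$, i.e., two photons in the waveguide. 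For the resonant sub-case $\Delta_0\tau = 2n\pi$ we have $1 - e^{i\Delta_0\tau} = 0$, so the delay terms cancel exactly in \eqref{threedelay2}–\eqref{threedelay4}: the equations for $c_0, c_{1,k}^1, c_{2,k}^2$ decouple from the waveguide entirely and reduce to a closed conservative (Hermitian) linear ODE system describing Rabi oscillation of the excitation between atomic levels and cavity photons, with $c_{1,k}^0 = c_{2,k}^1 = c_{2,k}^0 \equiv 0$ preserved by the initial condition; hence no photons ever enter the waveguide and the atom/cavity oscillate indefinitely.

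The main obstacle is the rigorous justification of the limit claims for the waveguide-photon amplitudes, i.e., showing $c_{1,k}^0(\infty)$, $c_{2,k}^1(\infty)$, $c_{2,k}^0(\infty)$ exist and, more importantly, that together they absorb the full unit norm. The subtlety is that $G^*(k,t)$ is an $L^2(dk)$-valued oscillatory kernel, not a decaying one, so convergence of $\int c_{2,k}^0 \,dk_1 dk_2$ must be argued in the appropriate $L^2$ sense over the mode variables while using $t$-integrability of the decaying drivers; one clean way is to pass back to the Laplace domain for these variables too, express $C_{1,k}^0(s,k)$, $C_{2,k}^1(s,k)$ in terms of $\tilde C_{1,k}^1(s)$ and $\tilde C_{2,k}^2(s)$, and invoke the final-value theorem once the relevant poles are confirmed to be in the open left half-plane — the only pole on the imaginary axis being the simple pole at $s=0$ inherited from the $1/s$ integration of the oscillatory source, whose residue gives precisely the nonzero steady-state amplitude in the waveguide sector. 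Verifying that this residue computation is consistent with norm conservation is the crux; everything else is bookkeeping with the final-value theorem and the already-granted Hurwitz property.
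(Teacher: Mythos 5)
Your Case~1 argument is essentially the paper's: apply the final value theorem to \eqref{con:C1k1s}, note that $\kappa(1-e^{i\Delta_0\tau})\neq 0$ when $\Delta_0\tau\neq 2n\pi$, and conclude $\tilde c_{1,k}^1(\infty)=\tilde c_{2,k}^2(\infty)=0$ and $\tilde c_0(\infty)=i\gamma_2\tilde C_{1,k}^1(0)+1=-1+1=0$. The step you flag as delicate is in fact immediate from the ratio of numerator and denominator at $s=0$ that you already wrote down, so you may as well finish it. The residual issue you correctly identify --- that vanishing of $c_0$, $c_{1,k}^1$, $c_{2,k}^2$ does not by itself show the one-photon-in-waveguide amplitudes $c_{1,k}^0$ and $c_{2,k}^1$ also vanish, which is needed to place the full unit norm on $|0,0,2\rangle$ --- is a gap the paper shares; it simply asserts the conclusion after killing the three cavity-sector amplitudes. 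Your Laplace-domain sketch for closing it is reasonable but, like the paper, you do not carry it out.

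In Case~2 there is a genuine misstep in your justification. You claim $c_{1,k}^0=c_{2,k}^1=c_{2,k}^0\equiv 0$ is ``preserved by the initial condition,'' but this is false as an ODE statement: by \eqref{threedelay3}, $\dot c_{1,k}^0(t,k)=iG^*(k,t)c_{1,k}^1(t)+i\gamma_1 c_{2,k}^1(t,k)$ has a nonvanishing source whenever $c_{1,k}^1(t)\neq 0$, so the waveguide sector is \emph{not} an invariant subspace of the dynamics and zero initial data does not propagate. What does work --- and is implicit in your own observation that the decoupled $(c_0,c_{1,k}^1,c_{2,k}^2)$ subsystem is generated by $i$ times a real symmetric matrix --- is that this closed three-dimensional subsystem conserves $|c_0|^2+|c_{1,k}^1|^2+|c_{2,k}^2|^2$ at its initial value $1$, so the global normalization forces all remaining amplitudes to vanish in $L^2$. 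The paper reaches the same conclusion by explicitly inverting the Laplace transforms \eqref{con:C1k1sZero}--\eqref{con:C2k2sZero} and verifying the identity \eqref{con:sumpop}; your conservative-subsystem observation is the cleaner route, but you must invoke norm conservation plus normalization rather than an invariance of the initial condition. With that repair, your proof matches the paper's in substance.
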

\begin{proof}
When $\Delta_0\tau \neq 2n\pi$, by Eq.~\eqref{con:C1k1s} and the final value theorem, we have $\tilde{c}_{1,k}^1(\infty) = \lim_{s\rightarrow 0}s\tilde{C}_{1,k}^1(s) = 0$. Moreover, as $1- e^{(i\Delta_0-s)\tau} \neq 0$, by Eqs.~\eqref{Lapthree0delay1} and \eqref{Lapthreedelay40}, we get
\begin{small}
\begin{equation} \label{con:C0inf1}
\begin{aligned}
\tilde{c}_0(\infty) &= \lim_{s\rightarrow 0}s\tilde{C}_0(s) = \lim_{s\rightarrow 0} [ i\gamma_{2}  \tilde{C}_{1,k}^{1}(s) + 1]\\
&= \frac{-\gamma_2^2  \kappa (1- e^{i\Delta_0\tau}) }{\gamma_2^2 \kappa (1- e^{i\Delta_0\tau})} + 1  =0,
\end{aligned}
\end{equation}
\end{small}%
and
\begin{small}
\begin{equation} \label{con:C2k2inf1}
\begin{aligned}
\tilde{c}_{2,k}^2(\infty) &= \lim_{s\rightarrow 0}s\tilde{C}_{2,k}^2(s) \\
&= \lim_{s\rightarrow 0}  s \frac{i\gamma_1}{s+ \frac{G_0^2}{4c} \left (1- e^{(i\Delta_0-s)\tau}\right )} \tilde{C}_{1,k}^1(s)  =0.
\end{aligned}
\end{equation}
\end{small}%
Thus, the waveguide eventually contains two photons.

On the other hand, when $\Delta_0\tau = 2n\pi$, where $n=0,1,2,\cdots$,  $1- e^{i\Delta_0\tau} = 0$.
Then
\begin{small}
\begin{equation} \label{con:C1k1sZero}
\begin{aligned}
\tilde{C}_{1,k}^1(s) &=\frac{i\gamma_2 s}{s^3  + (\gamma_1^2 +\gamma_2^2 )s } =\frac{i\gamma_2 }{s^2  + (\gamma_1^2 +\gamma_2^2 ) }.
\end{aligned}
\end{equation}
\end{small}%
Applying the inverse Laplace transform to  $\tilde{C}_{1,k}^1(s)$ in Eq.~\eqref{con:C1k1sZero} yields
\[
\tilde{c}_{1,k}^1(t) = \frac{i\gamma_2}{\sqrt{\gamma_1^2 +\gamma_2^2}}\sin\left(\sqrt{\gamma_1^2 +\gamma_2^2} t\right).
\]
Moreover, by Eqs.~\eqref{Lapthree0delay1} and \eqref{con:C1k1sZero},
\begin{small}
\begin{equation} \label{con:C0sZero}
\begin{aligned}
\tilde{C}_0(s) &=\frac{ i\gamma_{2}  \tilde{C}_{1,k}^{1}(s) + 1}{s} \\
%&=\frac{s^2  + \gamma_1^2 }{s[s^2  + (\gamma_1^2 +\gamma_2^2 )] }\\
&=\frac{\gamma_1^2}{\gamma_1^2 + \gamma_2^2} \frac{1}{s} + \frac{\gamma_2^2}{\gamma_1^2 + \gamma_2^2}\frac{s}{s^2 + \gamma_1^2 + \gamma_2^2}.
\end{aligned}
\end{equation}
\end{small}%
Therefore, applying the inverse Laplace transform we get
\begin{small}
\begin{equation} \label{con:C0Zerot}
\begin{aligned}
\tilde{c}_{0}(t) &=\frac{\gamma_1^2}{\gamma_1^2 + \gamma_2^2}  \Theta(t) + \frac{\gamma_2^2}{\gamma_1^2 + \gamma_2^2} \cos\left(\sqrt{\gamma_1^2 + \gamma_2^2} t\right),
\end{aligned}
\end{equation}
\end{small}%
where $ \Theta(t)$ represents the Heaviside step function. Finally, by Eqs.~\eqref{Lapthree0delay4} and \eqref{con:C1k1sZero},
\begin{small}
\begin{equation} \label{con:C2k2sZero}
\begin{aligned}
\tilde{C}_{2,k}^2(s) &=\frac{i\gamma_1}{s+ \frac{G_0^2}{4c} (1- e^{(i\Delta_0-s)\tau})} \tilde{C}_{1,k}^1(s)\\
&=\frac{-\gamma_1\gamma_2}{\gamma_1^2 +\gamma_2^2 }\frac{1}{s} + \frac{\gamma_1\gamma_2}{\gamma_1^2 +\gamma_2^2 }\frac{s}{s^2  + (\gamma_1^2 +\gamma_2^2 )}.\\
\end{aligned}
\end{equation}
\end{small}%
Applying the inverse Laplace transform to $C_{2,k}^2(s)$ yields
\begin{small}
\begin{equation} \label{con:C2k2sZerot}
\begin{aligned}
\tilde{c}_{2,k}^2(t) &=\frac{-\gamma_1\gamma_2}{\gamma_1^2 +\gamma_2^2 }\Theta(t) + \frac{\gamma_1\gamma_2}{\gamma_1^2 +\gamma_2^2 }\cos\left (\sqrt{\gamma_1^2 +\gamma_2^2}t \right)\\
&=\frac{\gamma_1\gamma_2}{\gamma_1^2 +\gamma_2^2 }\left [\cos\left (\sqrt{\gamma_1^2 +\gamma_2^2}t \right)-1\right].
\end{aligned}
\end{equation}
\end{small}%
Notice that
\begin{small}
\begin{equation} \label{con:sumpop}
\begin{aligned}
&~~~~|\tilde{c}_0(t)|^2+ |\tilde{c}_{1,k}^1(t)|^2 +|\tilde{c}_{2,k}^2(t)|^2 \\
&=\frac{\gamma_1^4}{\left ( \gamma_1^2 +\gamma_2^2\right )^2} + \frac{\gamma_1^2\gamma_2^2}{\left ( \gamma_1^2 +\gamma_2^2\right )^2} + \frac{\gamma_2^2}{\gamma_1^2 +\gamma_2^2} \\
&+ \left [\frac{\gamma_2^4}{\left ( \gamma_1^2 +\gamma_2^2\right )^2} + \frac{\gamma_1^2\gamma_2^2}{\left ( \gamma_1^2 +\gamma_2^2\right )^2} - \frac{\gamma_2^2}{\gamma_1^2 +\gamma_2^2}\right ]\cos^2\left (\sqrt{\gamma_1^2 +\gamma_2^2}t \right) \\
&=1.
\end{aligned}
\end{equation}
\end{small}%
The normalization condition of populations yields
\[
\tilde{c}_{2,k}^1(t,k)= \tilde{c}_{2,k}^0(t,k) = \tilde{c}_{2,k}^2(t,k_1,k_2) = 0.
\]
 In words, the atom oscillates in the cavity while there are no photons in the waveguide.
\end{proof}

Specially, when $\gamma_1= \gamma_2$, the amplitude of the high-frequency oscillating component in $\left|\tilde{c}_{1,k}^1(t)\right|^2$ is twice of that in $\left|\tilde{c}_0(t)\right|^2$ and $\left|\tilde{c}_{2,k}^2(t)\right|^2$, implying that the atom can be at the middle energy level $|1,1,0\rangle$ by emitting one photon from $|2,0,0\rangle$, or absorbing one photon from $|0,2,0\rangle$.

\newtheorem{remark}{Remark}

In summary, when $\tau \ll 1$, the amplitudes can be approximately and analytically solved 
as in Eqs.~(\ref{con:C1k1s000} - \ref{con:C2k2sZerot}). 
To further clarify and compare the differences between
the dynamics of the original Eq.~(\ref{con:controleq}) and the simplified format in Eqs.~(\ref{con:C1k1s000} - \ref{con:C2k2sZerot}), we do the simulation based on Eq.~(\ref{con:controleq}), $N=3$, in Fig.~\ref{fig:threelevelfeedback}, and the results agree with the analytical results in Eqs.~(\ref{con:C1k1s000} - \ref{con:C2k2sZerot}). The details are described below.

Take $\Delta_0 =50$, $G_0 = 0.2$, $\gamma_1 = \gamma_2 =0.3$.  In Fig.~\ref{fig:threelevelfeedback} the simulation results are compared according to the number of photons in the waveguide. In the upper three subfigures labeled as (1-1), (1-2) and (1-3) respectively, $\Delta_0\tau = 2\pi$. The populations  that there are no photons in the waveguide oscillate, as shown in Fig.~\ref{fig:threelevelfeedback} (1-1). The populations that there are one or two photons in the waveguide finally converge to zero when $t= 200\tau$, as shown in Fig.~\ref{fig:threelevelfeedback} (1-2) and (1-3). Hence, asymptotically there are no photons in the waveguide.  On the other hand, in the lower three subfigures labeled as (2-1), (2-2) and (2-3) respectively, $\Delta_0\tau = 3\pi$.  $|c_0(t)|^2$, $|c_{1,k}^1(t)|^2$ and $|c_{2,k}^2(t)|^2$ converge to zero as shown in Fig.~\ref{fig:threelevelfeedback} (2-1), the single-photon population converges to zero when $t= 200\tau$, as shown in Fig.~\ref{fig:threelevelfeedback} (2-2), and finally there are two photons in the waveguide as shown in Fig.~\ref{fig:threelevelfeedback} (2-3).
In both of the simulations, $\tau$ is small because $\Delta_0 \gg \pi$.

\begin{figure}[h]
\centerline{\includegraphics[width=1\columnwidth]{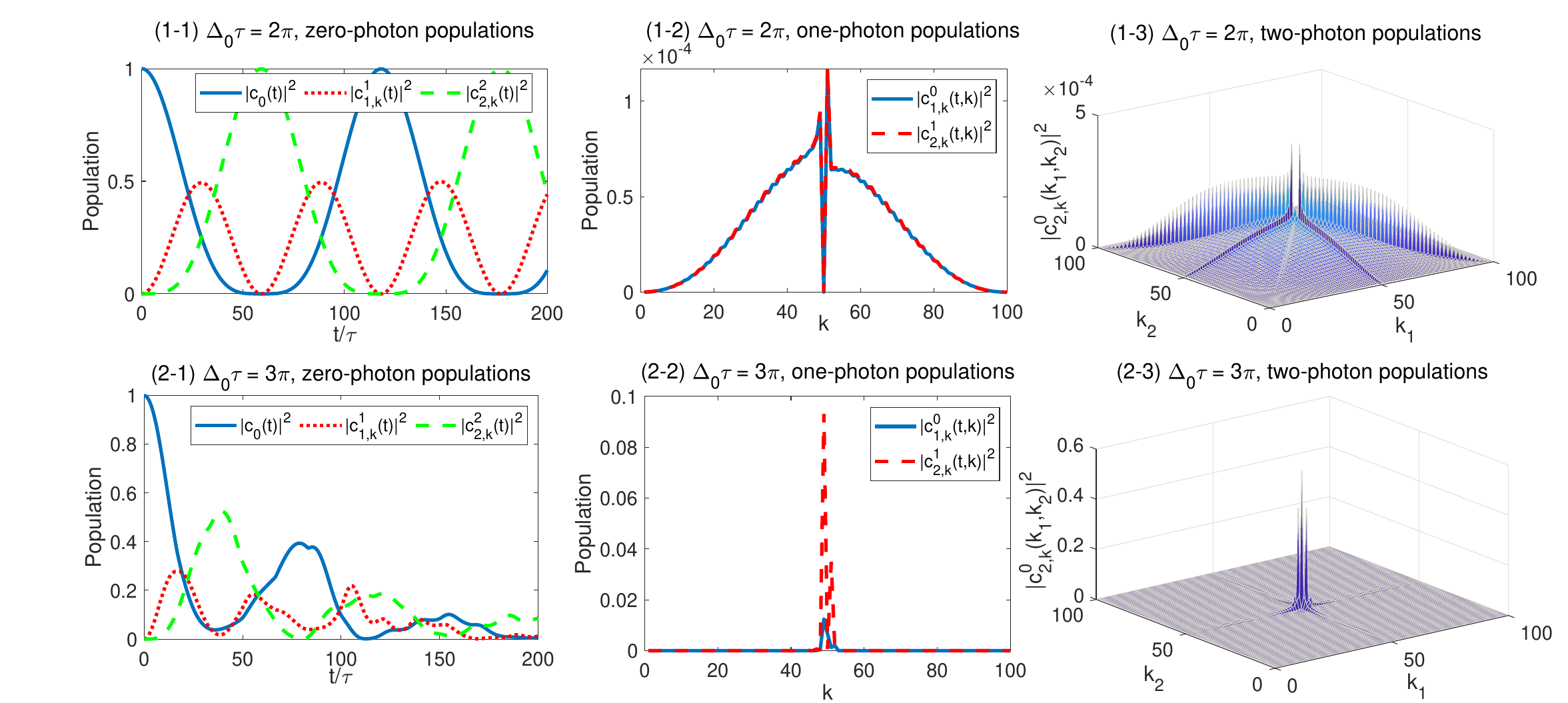}}
\caption{The populations of the zero-, one- and two-photon states in the waveguide.}
	\label{fig:threelevelfeedback}
\end{figure}

\medskip
\subsubsection{Feedback delay is much larger than  atom's lifetime}

When the waveguide is so long that the induced round-trip delay $\tau$ is much larger than the lifetime of the atom, we only need to  consider the dynamics of the $\Xi$-type three-level atom in Eq.~(\ref{con:controleqthreelevelDelay}) when $t< \tau$ and  safely neglect the delay-induced part. In this case, the quantum dynamics in Eq.~(\ref{con:controleqthreelevelDelay}) is equivalent to the following equation by replacing $c_0(t)$, $c_{j,k}^m(t)$ with  $\hat{c}_0(t)$, $\hat{c}_{j,k}^m(t)$, and $C_0(s)$, $C_{j,k}^m(s)$ with $\hat{C}_0(s)$, $\hat{C}_{j,k}^m(s)$, respectively.
\begin{small}
\begin{subequations} \label{con:longwaveguide}
\begin{align}
&\dot{\hat{c}}_0(t) =  i\gamma_{2}  \hat{c}_{1,k}^{1}(t),   \label{longthreedelay1}\\
&\dot{\hat{c}}_{1,k}^1(t) =  i\gamma_2 \hat{c}_0(t) + i \gamma_1 \hat{c}_{2,k}^2(t) -\kappa \hat{c}_{1,k}^1(t) , \label{longthreedelay2}\\
&\dot{\hat{c}}_{2,k}^2(t) = i\gamma_1 \hat{c}_{1,k}^1(t)  -  \kappa \hat{c}_{2,k}^2(t) , \label{longthreedelay4}\\
&\dot{\hat{c}}_{1,k}^0(t,k) =i G^*(k,t) \hat{c}_{1,k}^1(t)  + i\gamma_1 \hat{c}_{2,k}^1(t,k), \label{longthreedelay3}\\
&\dot{\hat{c}}_{2,k}^1(t,k)  = i\gamma_1 \hat{c}_{1,k}^0(t,k) + i G^*(k,t) \hat{c}_{2,k}^2(t),\label{longthreedelay5}\\
&\dot{\hat{c}}_{2,k}^0(t,k_1,k_2) = iG^*(k_2,t) \hat{c}_{2,k}^1(t,k_1) +  iG^*(k_1,t) \hat{c}_{2,k}^1(t,k_2). \label{longthreedelay6}
\end{align}
\end{subequations}
\end{small}%

Notice that Eqs.~(\ref{longthreedelay1}-\ref{longthreedelay4}) are ODEs with constant coefficients. Applying the Laplace transform to $\hat{c}_0(t)$, $\hat{c}_{1,k}^1(t)$ and $\hat{c}_{2,k}^2(t)$ yields
\begin{small}
\begin{subequations} \label{con:LongLaplace1to32}
\begin{align}
&\hat{C}_0(s)   =  \frac{ i\gamma_{2}  \hat{C}_{1,k}^{1}(s) + 1}{s},   \label{Lapthreedelay10000}\\
&s\hat{C}_{1,k}^1(s) =   \frac{ -\gamma_{2}^2  \hat{C}_{1,k}^{1}(s) + i\gamma_2}{s} - \frac{\gamma_1^2}{s+ \kappa} \hat{C}_{1,k}^1(s) -\kappa \hat{C}_{1,k}^1(s) , \label{Lapthreedelay2000}\\
&\hat{C}_{2,k}^2(s) = \frac{i\gamma_1}{s+ \kappa} \hat{C}_{1,k}^1(s),\label{Lapthreedelay4}
\end{align}
\end{subequations}
\end{small}%
respectively, which can be rewritten as
\begin{small}
\begin{equation} \label{con:LongC1k1s}
\begin{aligned}
\hat{C}_{1,k}^1(s)
&=\frac{i\gamma_2 (s+\kappa)}{s^3 + 2\kappa s^2 + s(\gamma_1^2 + \gamma_2^2 + \kappa^2) + \kappa\gamma_2^2},
\end{aligned}
\end{equation}
\end{small}%
\begin{small}
\begin{equation} \label{con:LongC0s}
\begin{aligned}
\hat{C}_0(s) &= \frac{s^2 + 2\kappa s +\gamma_1^2 + \kappa^2}{s^3 + 2\kappa s^2 +s(\gamma_1^2 + \gamma_2^2 +\kappa^2) + \kappa\gamma_2^2},
\end{aligned}
\end{equation}
\end{small}%
and
\begin{small}
\begin{equation} \label{con:LongC2k2s}
\begin{aligned}
\hat{C}_{2,k}^2(s) &=\frac{-\gamma_1\gamma_2 }{s^3 + 2\kappa s^2 + s(\gamma_1^2 + \gamma_2^2 + \kappa^2) + \kappa\gamma_2^2},
\end{aligned}
\end{equation}
\end{small}%
respectively.
Clearly, according to the final value theorem, we get
\[
\lim_{t\rightarrow\infty} \hat{c}_0(t) = \lim_{t\rightarrow\infty} \hat{c}_{1,k}^1(t)=\lim_{t\rightarrow\infty} \hat{c}_{2,k}^2(t) = 0.
\]
That is,  when the feedback loop constructed by the waveguide is infinitely long,  eventually the atom settles to its ground state, the cavity is empty, and there are two photons in the waveguide.

\section{Coherent feedback control with detunings} \label{Sec:detuning}
For a general $N$-level ladder-type atom, the interaction Hamiltonian $\tilde{H}_I$ in Eq.~\eqref{con:Ham} should be generalized to~\cite{detune},
\begin{small}
\begin{equation} \label{con:HamDetune}
\begin{aligned}
\tilde{H}_I = &-\sum_{n=1}^{N-1} \gamma_n \left (e^{-i\delta_nt}\sigma_n^-a^{\dag} + e^{i\delta_nt}\sigma_n^+a\right )\\
&- \int  \left [G(k,t) a^{\dag}d_k + G^*(k,t)ad^{\dag}_k\right ] \mathrm{d}k,
\end{aligned}
\end{equation}
\end{small}%
where $\delta_n = \omega_n-\omega_c$, ($n = 1,2,\cdots,N-1$),  represents the detuning between the $n$-th level of the atom and the cavity. Accordingly,  Eq.~(\ref{con:controleq2}) should be modified to
\begin{small}
\begin{subequations} \label{con:controleq2Detune}
\begin{align}
&\dot{c}_0(t) =  i\gamma_{N-1}  e^{i\delta_{N-1}t} c_{1k}^{1}(t),   \label{Ddelaymodel100}\\
&\dot{c}_{j,k}^m = i\sqrt{m}\gamma_{N-j}  e^{-i\delta_{N-j}t} c_{j-1,k}^{m-1}(t,k_1,\cdots,k_{j-m}) \notag\\
&+ i\sqrt{m+1}\gamma_{N-j-1}  e^{i\delta_{N-j-1}t} c_{j+1,k}^{m+1}(t,k_1,\cdots,k_{j-m}) \notag\\
&+i\sum_{p=1}^{j-m} G^*(k_p,t) c_{j,k}^{m+1}(t,k_1,\cdots,k_{p-1},k_{p+1},\dots, k_{j-m}) \notag\\
&-\kappa \sum_{p=1}^{j-m+1} \left [c_{j,k}^{m}(t,k_1,\cdots,k_{p-1},k_{p+1},\dots, k_{j-m}) \right.\notag\\
&\left.- e^{i\Delta_0\tau} c_{j,k}^{m}(t-\tau,k_1,\cdots,k_{p-1},k_{p+1},\dots, k_{j-m}) \right ], m>0, \label{Ddelaymodel2}\\
&\dot{c}_{j,k}^0(t,k_1,\cdots,k_{j}) =  i\gamma_{N-j-1}   e^{i\delta_{N-j-1}t}  c_{j+1,k}^{1}(t,k_1,\cdots,k_{j}) \notag\\
&+i\sum_{p=1}^{j} G^*(k_p,t) c_{j,k}^{1}(t,k_1,\cdots,k_{p-1},k_{p+1},\dots, k_{j}), m=0. \label{Ddelaymodel300}
\end{align}
\end{subequations}
\end{small}%

When $j=m$, all the emitted photons are in the cavity, then $c_{j,k}^m (t,k_1,\cdots,k_{j-m})$ is only a function of time $t$. Thus we can define a time-domain vector
\begin{small}
\begin{equation} \label{con:statevector}
\begin{aligned}
X(t) = \left[c_0(t), c_{1,k}^1(t),\cdots, c_{N-1,k}^{N-1}(t)\right]^{\top} \in \mathbf{C}^{N},
\end{aligned}
\end{equation}
\end{small}%
where the superscript $\top$ represents the transpose of a vector. Then  from Eq.~(\ref{con:controleq2Detune}) we can get
\begin{small}
\begin{subequations} \label{con:DetuneNlevelDelay}
\begin{numcases}{}
\dot{c}_0(t) =  i\gamma_{N-1}  e^{i\delta_{N-1}t} c_{1k}^{1}(t),   \label{Ddelaymodel1}\\
\dot{c}_{1,k}^1(t) =  i\gamma_{N-1} e^{-i\delta_{N-1}t}c_0(t) + i \sqrt{2}\gamma_{N-2} e^{i\delta_{N-2}t} c_{2,k}^2(t) \notag\\
~~~~~~~~~~~ -\kappa \left [c_{1,k}^1(t) -e^{i\Delta_0\tau} c_{1,k}^1(t-\tau) \right ], \label{Dthreedelay2}\\
\dot{c}_{j,k}^j(t) =
i\sqrt{j}\gamma_{N-j}  e^{-i\delta_{N-j}t} c_{j-1,k}^{j-1}(t)\notag\\
~~~~~~~~~~~+ i\sqrt{j+1}\gamma_{N-j-1}  e^{i\delta_{N-j-1}t} c_{j+1,k}^{j+1}(t) \notag\\
~~~~~~~~~~~-\kappa  \left [c_{j,k}^{j}(t) - e^{i\Delta_0\tau} c_{j,k}^{j}(t-\tau) \right ], \label{DNdelay3}\\
\dot{c}_{N-1,k}^{N-1}(t) = i\sqrt{N-1}\gamma_{1}  e^{-i\delta_{1}t} c_{N-2,k}^{N-2}(t) \notag \\
~~~~~~~~~~~~~~~ -\kappa  \left [c_{N-1,k}^{N-1}(t) - e^{i\Delta_0\tau} c_{N-1,k}^{N-1}(t-\tau) \right ], \label{DNdelay4}
\end{numcases}
\end{subequations}
\end{small}%
where $j =2,3,\cdots, N-2$. We define two matrices
\begin{scriptsize}
\begin{equation}
\begin{aligned}
  &A(t,\gamma_1,\cdots,\gamma_{N-1},\delta_1,\cdots,\delta_{N-1})\triangleq\\
     & \begin{bmatrix}
    0 & i\gamma_{N-1}  e^{i\delta_{N-1}t}  &\cdots & 0 &0\\
    i\gamma_{N-1} e^{-i\delta_{N-1}t} & -\kappa  &\cdots & 0 &0\\
    \vdots & \vdots & \ddots  &  \vdots & \vdots\\
    0 & 0  & \cdots   & i\sqrt{N-1}\gamma_{1}  e^{-i\delta_{1}t} & -\kappa \\
  \end{bmatrix},\\
  \end{aligned}
\end{equation}
\end{scriptsize}%
and
\begin{small}
\begin{equation}
\begin{aligned}
  B &\triangleq \begin{bmatrix}
    0 & 0 & 0 &\cdots  &0\\
    0 & \kappa e^{i\Delta_0\tau} & 0 &\cdots  &0 \\
    \vdots & \vdots &\vdots & \ddots  & \vdots  \\
    0 &  0 & 0 & \cdots  &\kappa e^{i\Delta_0\tau} \\
  \end{bmatrix}.\\
  \end{aligned}
\end{equation}
\end{small}%
Clearly, $A(t)$ is time-varying, and the constant matrix $B$ is determined by the round trip delay $\tau$ and the coupling strength $\kappa$ between the waveguide and cavity.
With the aid of these two matrices, Eq.~(\ref{con:DetuneNlevelDelay}) can be rewritten in a more compact form as
\begin{small}
\begin{subequations} \label{con:LinearControlDelay}
\begin{numcases}{}
\dot{X}(t) = A(t) X(t) + B X(t-\tau),\\
X(t) = \varphi(t), ~~\forall t\in [-\tau,0],
\end{numcases}
\end{subequations}
\end{small}%
where $\varphi(t) \equiv \left [1, 0,\cdots, 0 \right ]^{\top} $  for all $t\in [-\tau,0]$.

\begin{mypro} \label{Trapstate}
When $\gamma_{1}, \gamma_{2},\cdots,\gamma_{N-1} \neq 0$, there is no such $X(t)$ that $\lim_{t\rightarrow \infty}\dot{X}(t) = 0$ and $\lim_{t\rightarrow \infty}X(t) \neq 0$.
\end{mypro}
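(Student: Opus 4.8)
The plan is to argue by contradiction. Suppose $X(t)$ solves \eqref{con:LinearControlDelay} with $\dot X(t)\to 0$ and $X(t)\to X^{*}\neq 0$ as $t\to\infty$; since $\tau$ is a fixed constant, $X(t-\tau)\to X^{*}$ as well. Every entry of $A(t)$ has modulus at most $\max\{\kappa,\sqrt{N-1}\,\max_{n}|\gamma_{n}|\}$, so $A(t)\bigl(X(t)-X^{*}\bigr)\to 0$ and $BX(t-\tau)\to BX^{*}$; substituting these into $\dot X(t)=A(t)X(t)+BX(t-\tau)$ and letting $t\to\infty$ yields
\[
\bigl(A(t)+B\bigr)X^{*}\longrightarrow 0 .
\]
It therefore suffices to show that no nonzero vector is asymptotically annihilated by $A(t)+B$ while being consistent with the equations \eqref{con:DetuneNlevelDelay}.

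Next I would exploit the algebraic structure of $A(t)$ and $B$. Write $A(t)=S(t)+D$, where $D=\operatorname{diag}(0,-\kappa,\dots,-\kappa)$ is the diagonal part and $S(t)$ collects the off-diagonal entries; the identity $\overline{i\gamma_{n}e^{i\delta_{n}t}}=-\,i\gamma_{n}e^{-i\delta_{n}t}$ shows that $S(t)$ is skew-Hermitian, $S(t)^{\dagger}=-S(t)$. Moreover $D+B=\mu P$ with $P=\operatorname{diag}(0,1,\dots,1)$ and $\mu=\kappa\bigl(e^{i\Delta_{0}\tau}-1\bigr)$, so $A(t)+B=S(t)+\mu P$. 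Pairing the displayed limit with $X^{*}$ and using $\langle X^{*},S(t)X^{*}\rangle\in i\mathbf{R}$, the real part gives the time-independent quantity $\operatorname{Re}(\mu)\,\|PX^{*}\|^{2}=\kappa(\cos\Delta_{0}\tau-1)\,\|PX^{*}\|^{2}$, which must tend to $0$; hence $\kappa(\cos\Delta_{0}\tau-1)\|PX^{*}\|^{2}=0$.

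When $\Delta_{0}\tau\neq 2n\pi$ this forces $PX^{*}=0$, i.e.\ $c^{1}_{1,k}(\infty)=\cdots=c^{N-1}_{N-1,k}(\infty)=0$, so only $c_{0}(\infty)$ can be nonzero. I would then read off the limit of \eqref{Dthreedelay2}: its bracketed delay term tends to $-\kappa(1-e^{i\Delta_{0}\tau})c^{1}_{1,k}(\infty)=0$ and $i\sqrt{2}\gamma_{N-2}e^{i\delta_{N-2}t}c^{2}_{2,k}(t)\to 0$, leaving $\dot c^{1}_{1,k}(t)=i\gamma_{N-1}e^{-i\delta_{N-1}t}c_{0}(\infty)+o(1)$; since the modulus of the surviving term is the constant $|\gamma_{N-1}|\,|c_{0}(\infty)|$ whereas $\dot c^{1}_{1,k}(t)\to 0$, we obtain $c_{0}(\infty)=0$, hence $X^{*}=0$ — a contradiction. (The same ``the modulus of an oscillating term is constant'' observation applied to \eqref{Ddelaymodel1} gives $c^{1}_{1,k}(\infty)=0$ directly, a convenient first reduction valid for any value of $\Delta_{0}\tau$.)

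The remaining case $\Delta_{0}\tau=2n\pi$ is the delicate one, and where I expect the main difficulty, because then $\mu=0$, $B=\kappa P=-D$, and the reduction only gives $S(t)X^{*}\to 0$ with $\|S(t)\|$ bounded. I would diagonalize the oscillations by writing $S(t)=V(t)\,S_{0}\,V(t)^{-1}$ with $V(t)$ a diagonal unitary absorbing the phases $e^{\pm i\delta_{n}t}$ and $S_{0}$ a constant tridiagonal skew-Hermitian matrix whose sub- and super-diagonal entries are all nonzero (exactly because every $\gamma_{n}\neq 0$); then $S(t)X^{*}\to 0$ is equivalent to $S_{0}Y(t)\to 0$ with $Y(t)=V(t)^{-1}X^{*}$ of constant norm $\|X^{*}\|$. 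For $N$ even, $S_{0}$ is invertible (a zero-diagonal tridiagonal matrix with nonzero off-diagonals and an even number of rows has trivial kernel), so $\|S_{0}Y(t)\|\ge\sigma_{\min}(S_{0})\,\|X^{*}\|>0$ contradicts $S_{0}Y(t)\to 0$; for $N$ odd, $\ker S_{0}$ is one-dimensional — spanned by a vector $v$ with vanishing odd-indexed components — and one must still exclude $X^{*}$ asymptotically lining up with $V(t)v$, which occurs only when the partial detuning sums $\sum_{i=1}^{2\ell}\delta_{N-i}$ vanish, a non-generic resonance that appears to be exactly where the statement is sharp (cf.\ the trapped cavity oscillation at $\Delta_{0}\tau=2n\pi$ in Proposition~\ref{photonstate}). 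In the complementary and physically generic regime $\Delta_{0}\tau\neq 2n\pi$ one may alternatively just invoke exponential stability — established for the characteristic polynomials arising here via the sign-pair criterion — so that $X(t)\to 0$ for every solution; managing the case distinction around $\Delta_{0}\tau=2n\pi$ is the only non-routine step.
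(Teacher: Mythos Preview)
Your approach differs substantially from the paper's. The paper gives a two-line determinant argument: it asserts that $D_N:=|A(t)+B|\neq 0$ for all $t\ge 0$ whenever each $\gamma_n\neq 0$, and then concludes directly from $\lim_{t\to\infty}(A(t)+B)\bar X=\lim_{t\to\infty}\dot X(t)=0$ that $\bar X=0$. You share the same first step --- reducing to $(A(t)+B)X^{*}\to 0$ --- but then replace the determinant with structure: the decomposition $A(t)+B=S(t)+\mu P$ with $S(t)$ skew-Hermitian and $\mu=\kappa(e^{i\Delta_0\tau}-1)$, followed by a real-part pairing to force $PX^{*}=0$ when $\Delta_0\tau\neq 2n\pi$, and a modulus argument on the first row to kill $c_0(\infty)$. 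This is correct and in fact more robust than the paper's route: the determinant claim the paper relies on actually \emph{fails} at $\Delta_0\tau=2n\pi$ for odd $N$ (e.g.\ for $N=3$ one computes $D_3=\gamma_2^{2}\kappa(e^{i\Delta_0\tau}-1)$, which vanishes there), so your instinct that this boundary case is the delicate one is exactly right --- the paper's own argument is silent precisely where yours flags the difficulty. What the paper's approach buys is brevity in the generic regime; what yours buys is a transparent explanation of \emph{why} the generic case is easy (the dissipative diagonal $\mu P$ does the work) and where the obstruction lives otherwise (one is left with only the conservative tridiagonal piece $S(t)$, whose kernel for odd $N$ is the source of the possible trap). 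Your case analysis at $\Delta_0\tau=2n\pi$ is not fully closed, but neither is the paper's; the paper simply does not notice the issue.
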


\begin{proof}
Notice that
\begin{small}
\begin{equation}
\begin{aligned}
  &A(t)+B =\\
   & \begin{bmatrix}
    0 & i\gamma_{N-1}  e^{i\delta_{N-1}t}  &\cdots  & 0 &0\\
    \vdots & \vdots &\ddots  &  \vdots & \vdots\\
    0 & 0  & \cdots  & i\sqrt{N-1}\gamma_{1}  e^{-i\delta_{1}t} & \kappa(e^{i\Delta_0\tau}-1) \\
  \end{bmatrix}.\\
  \end{aligned}
\end{equation}
\end{small}%
Denote the determinant $|A(t)+B| = D_N$. Clearly, $D_N \neq 0$ for all $t\geq0$ when $\gamma_{1}, \gamma_{2},\cdots,\gamma_{N-1} \neq 0$. 
Assume that   $\lim_{t\rightarrow \infty}X(t) = \bar{X}\neq 0 $. Then by Eq.~ \eqref{con:LinearControlDelay} we get $\lim_{t\rightarrow \infty}\dot{X}(t) = \lim_{t\rightarrow \infty}(A(t)+B)\bar{X}$. However, as $D_N \neq 0$ for all $t\geq0$, $\lim_{t\rightarrow \infty}\dot{X}(t)=0$ if and only if $\bar{X}=0$, which contradicts the condition that $\bar{X}\neq 0$.
\end{proof}

Notice that when an amplitude $c_{\psi}(t)$ of a quantum state $|\psi\rangle$ satisfies $\lim_{t\rightarrow\infty} \dot{c}_{\psi}(t) = 0$, then the state will not oscillate. We have the following proposition.

\begin{mypro} \label{OscillationProperty}
The oscillation property of the quantum feedback control system described in Eq.~(\ref{con:controleq2Detune}) is equivalent to the oscillation property of the amplitude $c_{1,k}^1(t)$ under the condition that $\gamma_{1}, \gamma_{2},\cdots,\gamma_{N-1} \neq 0$.
\end{mypro}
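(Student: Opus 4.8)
Following the remark preceding the proposition, call the feedback network \emph{non-oscillating} if $\lim_{t\to\infty}\dot c_{\psi}(t)=0$ for every amplitude $c_\psi$ appearing in $|\Psi(t)\rangle$; the statement to be proved is that this holds if and only if $\lim_{t\to\infty}\dot c_{1,k}^{1}(t)=0$. One direction is immediate, since $c_{1,k}^{1}$ is one of these amplitudes. So the work lies entirely in the converse: assuming $\dot c_{1,k}^{1}(t)\to 0$, deduce $\dot c_{\psi}(t)\to 0$ for \emph{every} $\psi$.

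First I would assemble the structural facts that make the coupling rigid. By unitarity of the Schr\"odinger evolution, $\sum_{\psi}|c_\psi(t)|^{2}\le 1$, so all amplitudes are uniformly bounded. The amplitudes $X(t)=[c_0(t),c_{1,k}^{1}(t),\dots,c_{N-1,k}^{N-1}(t)]^{\top}$ of the ``all photons in the cavity'' states form the closed tridiagonal delay system \eqref{con:LinearControlDelay}: no waveguide-photon amplitude enters its right-hand side. Inside this chain, \eqref{Ddelaymodel1} ties $c_0$ to $c_{1,k}^{1}$ directly, and solving the generic chain equation \eqref{DNdelay3} for $c_{j+1,k}^{j+1}(t)$ expresses it as $\big(i\sqrt{j+1}\,\gamma_{N-j-1}e^{i\delta_{N-j-1}t}\big)^{-1}$ times $\dot c_{j,k}^{j}(t)-i\sqrt{j}\,\gamma_{N-j}e^{-i\delta_{N-j}t}c_{j-1,k}^{j-1}(t)+\kappa\big[c_{j,k}^{j}(t)-e^{i\Delta_0\tau}c_{j,k}^{j}(t-\tau)\big]$, which is legitimate precisely because $\gamma_1,\dots,\gamma_{N-1}\neq 0$. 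Finally, the chain amplitudes act as sources for the waveguide-photon amplitudes $c_{j,k}^{m}$ with $m<j$ through \eqref{Ddelaymodel2}--\eqref{Ddelaymodel300}, again in a bounded-coefficient cascade.

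The plan is then: from $\dot c_{1,k}^{1}(t)\to 0$, first upgrade to $X(t)\to 0$. Proposition~\ref{Trapstate} rules out $\dot X(t)\to 0$ together with $X(t)\to\bar X\neq 0$ when all $\gamma_i\neq 0$, so it suffices to argue that a bounded solution of \eqref{con:LinearControlDelay} whose $c_{1,k}^{1}$-component has vanishing derivative must in fact settle to a limit; that limit is then forced to be $0$. With $X(t)\to 0$ in hand, the right-hand side of \eqref{con:LinearControlDelay} tends to $0$ because its coefficients are bounded, hence $\dot X(t)\to 0$; differentiating \eqref{con:LinearControlDelay} and using boundedness of $\dot A(t)$ gives $\ddot X(t)\to 0$, and inductively $X^{(n)}(t)\to 0$ for every $n$. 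Substituting this decay of the chain amplitudes into \eqref{Ddelaymodel2}--\eqref{Ddelaymodel300} then drives the derivatives of all waveguide-photon amplitudes to $0$ as well, which is exactly non-oscillation of the whole network.

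The main obstacle is the step $\dot c_{1,k}^{1}(t)\to 0\Rightarrow X(t)\to 0$. One cannot simply iterate the recursion above, since it expresses $c_{j+1,k}^{j+1}$ through derivatives of $c_{1,k}^{1}$ up to order $j$, whereas $\dot c_{1,k}^{1}\to 0$ controls only the first derivative; the way around this is to get convergence of $X$ from Proposition~\ref{Trapstate} (together with the boundedness noted above) \emph{before} any further differentiation. Two side points still need care: the delay terms are harmless because $c_{1,k}^{1}(t)-c_{1,k}^{1}(t-\tau)=\int_{t-\tau}^{t}\dot c_{1,k}^{1}(s)\,ds\to 0$, so past and present values share their asymptotics; and since the detunings $\delta_n\neq 0$ make $A(t)$ genuinely time-varying, the Laplace/pole argument of Section~\ref{Sec:Model} is unavailable and the estimate must be carried out in the time domain, using only boundedness of $A(t)$ and $\dot A(t)$.
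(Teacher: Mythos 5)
Your reading of the statement is more ambitious than what the paper actually proves, and the extra ambition is exactly where the gap sits. The paper's own proof consists of two easy implications only: (i) if the whole chain is non-oscillating, i.e.\ $\lim_{t\to\infty}\dot X(t)=0$, then Proposition~\ref{Trapstate} gives $X(t)\to 0$ and in particular $c_{1,k}^1(t)\to 0$; and (ii) if $c_{1,k}^1(t)$ oscillates then $c_0(t)$ oscillates by Eq.~(\ref{Ddelaymodel1}). It never attempts the converse you set yourself --- that $\dot c_{1,k}^{1}(t)\to 0$ \emph{alone} forces every other amplitude to settle. You flag that step honestly as the main obstacle, but your proposed way around it is circular: Proposition~\ref{Trapstate} has as \emph{hypotheses} both $\dot X(t)\to 0$ for the full vector and $X(t)\to\bar X$, and concludes only that such a limit, if it exists, must be zero. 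It provides no mechanism for deducing that $X(t)$ converges, or that the other components' derivatives vanish, from information about the single component $c_{1,k}^{1}$. The missing implication is genuinely delicate: $\dot c_{1,k}^{1}(t)\to 0$ does not even force $c_{1,k}^{1}(t)$ to converge (a bounded function like $\sin(\ln(1+t))$ has vanishing derivative but no limit), and if $c_{1,k}^{1}(t)\to c^\ast\neq 0$ then Eq.~(\ref{Ddelaymodel1}) makes $\dot c_0(t)=i\gamma_{N-1}e^{i\delta_{N-1}t}c_{1,k}^{1}(t)$ either persistently oscillate (detuned case) or tend to a nonzero constant, contradicting boundedness (resonant case). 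So before anything about the rest of the chain follows, you would need a separate argument excluding these behaviours, and none is supplied.

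The rest of your scaffolding --- boundedness from normalization, the closed tridiagonal structure of the $X$-subsystem, inverting the chain recursion using $\gamma_j\neq 0$, bootstrapping decay of higher derivatives, and cascading into the waveguide amplitudes via Eqs.~(\ref{Ddelaymodel2})--(\ref{Ddelaymodel300}) --- is sensible and goes well beyond what the paper writes down, but it all rests on the unproved convergence step. If your goal is only to match what the paper establishes, replace your hard direction by the paper's item (i): assume non-oscillation of the \emph{whole} vector $X(t)$, invoke Proposition~\ref{Trapstate} to get $X(t)\to 0$, and keep your trivial direction; that reproduces the two implications the paper actually records, at the cost of not proving the literal ``equivalence'' in the stronger sense you were aiming for.
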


\begin{proof}
(1)~As proved in Proposition~\ref{Trapstate}, the condition $\lim_{t\rightarrow \infty}\dot{X}(t) = 0$ induces that $\lim_{t\rightarrow \infty}X(t) = 0$, and obviously $\lim_{t\rightarrow \infty}c_{1,k}^1(t) = 0$.

(2)~When $c_{1,k}^1(t)$ oscillates,  $\lim_{t\rightarrow \infty}c_{1,k}^1(t)$ does not exist. Then $c_0(t)$ oscillates according to Eq.~(\ref{Ddelaymodel1}).
\end{proof}

\subsection{Example: feedback control for a three-level atom with detunings}
\
\newline
For a $\Xi$-type three-level atom with detunings, Eq.~\eqref{con:controleq2Detune} becomes
\begin{small}
\begin{subequations} \label{con:DetunethreelevelDelay}
\begin{align}
&\dot{c}_0(t) =  i\gamma_{2} e^{i\delta_{2}t} c_{1,k}^{1}(t),   \label{Dthreedelay1}\\
&\dot{c}_{1,k}^1(t) =  i\gamma_2 e^{-i\delta_{2}t}c_0(t) + i \gamma_1 e^{i\delta_{1}t} c_{2,k}^2(t) \notag\\
&~~~~~~~~~~~-\kappa \left [c_{1,k}^1(t) -e^{i\Delta_0\tau} c_{1,k}^1(t-\tau) \right ], \label{Dthreedelay2b}\\
&\dot{c}_{2,k}^2(t) = i\gamma_1 e^{-i\delta_{1}t}c_{1,k}^1(t)
-  \kappa \left [ c_{2,k}^2(t) -e^{i\Delta_0\tau} c_{2,k}^2(t-\tau)\right ], \label{Dthreedelay4}\\
&\dot{c}_{1,k}^0(t,k) =i G^*(k,t) c_{1,k}^1(t)  + i\gamma_1  e^{i\delta_{1}t}c_{2,k}^1(t,k), \label{Dthreedelay3}\\
&\dot{c}_{2,k}^1(t,k)  = i\gamma_1 e^{-i\delta_{1}t}c_{1,k}^0(t,k) + i G^*(k,t) c_{2,k}^2(t)\notag\\
&~~~~~~~~~~~~~~-2\kappa \left [c_{2,k}^1(t,k) - e^{i\Delta_0\tau}c_{2,k}^1(t-\tau,k)\right ],\label{Dthreedelay5}\\
&\dot{c}_{2,k}^0(t,k_1,k_2) = iG^*(k_2,t) c_{2,k}^1(t,k_1) +  iG^*(k_1,t) c_{2,k}^1(t,k_2). \label{Dthreedelay6}
\end{align}
\end{subequations}
\end{small}%
Applying the Laplace transform to Eqs.~(\ref{Dthreedelay1}-\ref{Dthreedelay4}) we get
\begin{small}
\begin{subequations} \label{con:Laplace1to3b}
\begin{align}
&sC_0(s) -1  =  i\gamma_{2}  C_{1,k}^{1}(s-i\delta_2),   \label{Lapthreedelay1}\\
&sC_{1,k}^1(s) =  i\gamma_2 C_0(s+i\delta_2) + i \gamma_1 C_{2,k}^2(s-i\delta_1) \notag\\
&~~~~~~~~~~~~~~~-\kappa \left [C_{1,k}^1(s) -e^{i\Delta_0\tau} e^{-s\tau}C_{1,k}^1(s)\right ], \label{Lapthreedelay2}\\
&sC_{2,k}^2(s) = i\gamma_1 C_{1,k}^1(s+i\delta_1)  - \kappa[ C_{2,k}^2(s) -e^{i\Delta_0\tau} e^{-s\tau} C_{2,k}^2(s)]. \label{Lapthreedelay400}
\end{align}
\end{subequations}
\end{small}%

Next, we consider a simplified case
when $\tau \ll 1$ and $e^{-s\tau}\approx 1$. 
We denote the amplitudes in Eq.~(\ref{con:DetunethreelevelDelay}) in this parameter setting, i.e., $c_0(t)$, $c_{j,k}^m(t)$ as  $\breve{c}_0(t)$, $\breve{c}_{j,k}^m(t)$, and the Laplace transform in Eq.~(\ref{con:Laplace1to3}), i.e., $C_0(s)$, $C_{j,k}^m(s)$ as $\breve{C}_0(s)$, $\breve{C}_{j,k}^m(s)$, respectively.
Then the dynamics in Eq.~(\ref{con:DetunethreelevelDelay}) and Eq.~\eqref{con:Laplace1to3b} can be approximately solved as
%in Solving Eqs.~\eqref{con:Laplace1to3b} we have
\begin{small}
\begin{equation} \label{con:detuneC1k1s}
\begin{aligned}
&s\breve{C}_{1,k}^1(s) =   \frac{ -\gamma_{2}^2  \breve{C}_{1,k}^{1}(s) + i\gamma_2}{s+i\delta_2}\\
&- \frac{\gamma_1^2}{s-i\delta_1+ \kappa (1- e^{i\Delta_0\tau})} \breve{C}_{1,k}^1(s) -\kappa (1- e^{i\Delta_0\tau})\breve{C}_{1,k}^1(s) ,
\end{aligned}
\end{equation}
\end{small}%
and
\begin{small}
\begin{equation} \label{con:detuneC1k1s2}
\begin{aligned}
\breve{C}_{1,k}^1(s) =   \frac{\frac{i\gamma_2}{s+i\delta_2}}{\left [s+\kappa (1- e^{i\Delta_0\tau}) + \frac{\gamma_{2}^2}{s+i\delta_2} + \frac{\gamma_1^2}{s-i\delta_1+ \kappa (1- e^{i\Delta_0\tau})}\right]}.
\end{aligned}
\end{equation}
\end{small}%

In particular, when $\Delta_0\tau = 2n\pi$, $1-e^{i\Delta_0\tau} = 0$. In this case Eq.~ \eqref{con:detuneC1k1s2} reduces to
\begin{small}
\begin{equation} \label{con:detuneC1k1s22npi}
\begin{aligned}
\left[s + \frac{\gamma_{2}^2}{s+i\delta_2} + \frac{\gamma_1^2}{s-i\delta_1}\right ]\breve{C}_{1,k}^1(s) =   \frac{i\gamma_2}{s+i\delta_2} .
\end{aligned}
\end{equation}
\end{small}%
Consequently, Eq.~\eqref{con:detuneC1k1s} becomes
\begin{small}
\begin{equation} \label{con:detuneC1k1s22npi00}
\begin{aligned}
&~~~~\breve{C}_{1,k}^1(s) \\
&=  \frac{i\gamma_2(s-i\delta_1)}{s^3 +i(\delta_2-\delta_1)s^2 +(\gamma_1^2 + \gamma_2^2 +\delta_1\delta_2)s + i(\delta_2\gamma_1^2 - \delta_1\gamma_2^2)}.
\end{aligned}
\end{equation}
\end{small}%
When $\delta_1 = \delta_2 = 0$, Eq.~\eqref{con:detuneC1k1s22npi00} reduces to Eq.~(\ref{con:C1k1sZero}). We have the following result.

\begin{mypro} \label{Detuneproperty}
When $\Delta_0\tau = 2n\pi$ with $\tau \ll 1$ and $\frac{\delta_2}{\delta_1} = \frac{\gamma_2^2}{\gamma_1^2}$, $\breve{c}_{1,k}^1(t)$ oscillates persistently.
\end{mypro}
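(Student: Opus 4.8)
The plan is to work entirely from the Laplace–domain formula \eqref{con:detuneC1k1s22npi00}: I will impose the ratio condition to simplify the denominator, factor the resulting cubic, show that its roots lie on the imaginary axis and are not cancelled by the numerator, and then read off from the partial–fraction expansion that $\breve{c}_{1,k}^1(t)$ is a non-trivial superposition of purely oscillatory modes, which by definition cannot settle to a constant.

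First, note that $\delta_2/\delta_1=\gamma_2^2/\gamma_1^2$ is equivalent to $\delta_2\gamma_1^2-\delta_1\gamma_2^2=0$, so the constant term of the denominator in \eqref{con:detuneC1k1s22npi00} vanishes and
\begin{equation*}
\breve{C}_{1,k}^1(s)=\frac{i\gamma_2\,(s-i\delta_1)}{s\,\bigl[s^2+i(\delta_2-\delta_1)s+\gamma_1^2+\gamma_2^2+\delta_1\delta_2\bigr]}.
\end{equation*}
Hence $s=0$ is one pole and the other two are the roots of $q(s)=s^2+i(\delta_2-\delta_1)s+\gamma_1^2+\gamma_2^2+\delta_1\delta_2$. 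Using the identity $(\delta_2-\delta_1)^2+4\delta_1\delta_2=(\delta_1+\delta_2)^2$, the discriminant of $q$ equals $-\bigl[(\delta_1+\delta_2)^2+4(\gamma_1^2+\gamma_2^2)\bigr]$, which is strictly negative since $\gamma_1,\gamma_2\neq0$; therefore $q$ has two distinct, purely imaginary roots $s_\pm=\tfrac{i}{2}\bigl[-(\delta_2-\delta_1)\pm\sqrt{(\delta_1+\delta_2)^2+4(\gamma_1^2+\gamma_2^2)}\bigr]$, and all three poles of $\breve{C}_{1,k}^1$ lie on the imaginary axis.

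Next I would verify that no pole is coincident or spurious. Since the ratio condition forces $\delta_1\delta_2\geq0$, Vieta's formula gives $s_+s_-=\gamma_1^2+\gamma_2^2+\delta_1\delta_2>0$, so $s_\pm\neq0$ and the three poles are pairwise distinct; moreover $q(i\delta_1)=\gamma_1^2+\gamma_2^2\neq0$, so the numerator zero at $s=i\delta_1$ does not coincide with either $s_\pm$. The partial-fraction expansion is then $\breve{C}_{1,k}^1(s)=A/s+B_+/(s-s_+)+B_-/(s-s_-)$ with residues $B_\pm=i\gamma_2(s_\pm-i\delta_1)/\bigl[s_\pm(s_\pm-s_\mp)\bigr]\neq0$, and inverting the Laplace transform yields, for $t\geq0$,
\begin{equation*}
\breve{c}_{1,k}^1(t)=A+B_+e^{i\omega_+t}+B_-e^{i\omega_-t},\qquad \omega_\pm:=-i\,s_\pm\in\mathbf{R}\setminus\{0\},\quad\omega_+\neq\omega_-.
\end{equation*}
A bounded non-constant sum of distinct-frequency exponentials with nonzero coefficients and nonzero frequencies has no limit as $t\to\infty$ (equivalently, $\dot{\breve{c}}_{1,k}^1(t)$ does not tend to zero), so by the oscillation criterion recalled before Proposition~\ref{OscillationProperty} the amplitude $\breve{c}_{1,k}^1(t)$ oscillates persistently.

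I expect the main obstacle to be not the algebra but the organisation of the three degeneracy checks that upgrade ``poles on the imaginary axis'' to ``persistent oscillation'': excluding a repeated root of $q$, a root of $q$ at the origin, and the cancellation of a pole by the numerator zero $s=i\delta_1$. Each is ruled out by a strict inequality coming from $\gamma_1,\gamma_2\neq0$ together with the sign $\delta_1\delta_2\geq0$ forced by the ratio condition, so the proof is essentially mechanical once these are listed. Finally, the boundary case $\delta_1=\delta_2=0$ reduces to the undetuned oscillation $\breve{c}_{1,k}^1(t)=\tfrac{i\gamma_2}{\sqrt{\gamma_1^2+\gamma_2^2}}\sin\!\bigl(\sqrt{\gamma_1^2+\gamma_2^2}\,t\bigr)$ of Eq.~\eqref{con:C1k1sZero}, so the conclusion holds throughout the regime of the hypotheses.
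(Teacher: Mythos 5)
Your proposal is correct and follows essentially the same route as the paper: impose the ratio condition so the constant term of the cubic denominator in Eq.~\eqref{con:detuneC1k1s22npi00} vanishes, factor out the pole at $s=0$, and read off persistent oscillation from the purely imaginary poles of the remaining quadratic factor via partial fractions and the inverse Laplace transform. Your version is somewhat more careful than the paper's (which simply asserts the residues $A_1,A_2$ are nonzero), since you explicitly rule out repeated roots, a root at the origin, and cancellation by the numerator zero $s=i\delta_1$; these checks are all valid and tighten the argument without changing its substance.
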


\begin{proof}
When $\Delta_0\tau = 2n\pi$ and $\frac{\delta_2}{\delta_1} = \frac{\gamma_2^2}{\gamma_1^2}$, in Eq.~(\ref{con:detuneC1k1s22npi00}) $\delta_2\gamma_1^2 - \delta_1\gamma_2^2 = 0$. Hence,
\begin{small}
\begin{equation} \label{con:detuneC1k1s22npiproof}
\begin{aligned}
&\breve{C}_{1,k}^1(s) =  \frac{i\gamma_2(s-i\delta_1)}{s^3 +i(\delta_2-\delta_1)s^2 +(\gamma_1^2 + \gamma_2^2 +\delta_1\delta_2)s }\\
& = \frac{A_0}{s} + \frac{A_1s+A_2}{s^2 +i(\delta_2-\delta_1)s +(\gamma_1^2 + \gamma_2^2 +\delta_1\delta_2) }\\
&= \frac{A_0}{s} + \frac{A_1s+A_2}{\left [s+\frac{i(\delta_2-\delta_1)}{2}\right]^2 + \left[\gamma_1^2 + \gamma_2^2 + \left(\frac{\delta_1+\delta_2}{2}\right)^2\right ]},
\end{aligned}
\end{equation}
\end{small}%
where $A_0 =\frac{\gamma_2\delta_1}{\gamma_1^2 + \gamma_2^2 + \delta_1\delta_2}$, $A_1$ and $A_2$ are nonzero constant numbers whose specific values are irrelevant. Due to  the second item of the last line of Eq.~(\ref{con:detuneC1k1s22npiproof}),  $\breve{c}_{1,k}^1(t)$ persistently oscillates around $\frac{\gamma_2\delta_1}{\gamma_1^2 + \gamma_2^2 + \delta_1\delta_2}$.
\end{proof}

\begin{remark}
The conclusion in Proposition~\ref{Detuneproperty} means that when there are detunings between the multi-level atom and cavity, there can be an atomic oscillating state. However, this is unlikely to occur because the condition $\frac{\delta_2}{\delta_1} = \frac{\gamma_2^2}{\gamma_1^2}$ is difficult to be satisfied in practice.
\end{remark}

\subsection{Stability of the $N$-level system}

Generalizing the definition of the exponential stability in  the real domain \cite{kharitonov2005exponential}, we give its definition in the complex space.
\newtheorem{myDef}{Definition}
\begin{myDef} \label{defstable}
 The system (\ref{con:LinearControlDelay}) is exponentially stable if there exist $\chi \geq 1$ and $\alpha > 0$ such that for every solution $X(t)$, the following exponential estimate holds:
\begin{small}
\begin{equation} \label{con:expon}
\begin{aligned}
\|X(t)\| \leq \chi  e^{-\alpha t} |\varphi|_{\tau}^*,
\end{aligned}
\end{equation}
\end{small}%
where $\|\cdot\|$ represents an arbitrary vector norm and $|\varphi|_{\tau}^* = \max_{t\in[-\tau,0]}\{\|\varphi(t)\|\}$.
\end{myDef}
\begin{remark}
In our system, $\varphi(t) \equiv 1$ for $t\in [-\tau, 0]$ according to the initial condition in Eq.~(\ref{con:LinearControlDelay}).
\end{remark}

\begin{remark}
The distributions of the photons between the cavity and waveguide of the coherent feedback network in Fig.~\ref{fig:Nlevel} can be studied by means of the exponential stability. Specifically, if the system is exponentially stable, there will be no photons in the cavity eventually, thus finally there is a multi-photon state in the waveguide. If the system oscillates rather than being exponentially stable, the photons are always being exchanged between the waveguide and cavity.
\end{remark}

Now we rewrite Eq.~(\ref{con:statevector}) by separating the real and imaginary parts as:
\begin{scriptsize}
\begin{equation} \label{con:Seperatestatevector}
\begin{aligned}
\tilde{X}(t) = [\bar{c}_0(t), \check{c}_0(t), \bar{c}_{1,k}^1(t),\check{c}_{1,k}^1(t),\cdots, \bar{c}_{N-1,k}^{N-1}(t),\check{c}_{N-1,k}^{N-1}(t)]^{\top}\in \mathbf{R}^{2N},
\end{aligned}
\end{equation}
\end{scriptsize}%
where $\bar{c}_0(t)$ represents the real part of $c_0(t)$, and $\check{c}_0(t)$ represents its imaginary part,
similarly for that of $c_{j,k}^j(t)$  for all $j = 1,2,\cdots,N-1$.

Define matrices
\begin{small}
\begin{equation}
\begin{aligned}
  &\tilde{A}(t,\gamma_1,\cdots,\gamma_{N-1},\delta_1,\cdots,\delta_{N-1})\triangleq\\
   & \begin{bmatrix}
    \mathbf{0}  &\gamma_{N-1} \mathbf{R}_{N-1}  &\cdots & \mathbf{0}  &\mathbf{0} \\
    \gamma_{N-1} \mathbf{R}_{N-1}  &-\kappa  I  &\cdots & \mathbf{0}  &\mathbf{0} \\
    \vdots  &\vdots & \ddots  & \vdots &\vdots\\
    \mathbf{0}  & \mathbf{0}   & \cdots   & \sqrt{N-1}\gamma_{1} \mathbf{R}_1 & -\kappa  I  \\
  \end{bmatrix},\\
  \end{aligned}
\end{equation}
\end{small}%
where
\begin{small}
\begin{equation} \label{Eq. R_j}
\begin{aligned}
  &\mathbf{R}_j\triangleq \begin{bmatrix}
   \sin(\delta_jt) & -\cos(\delta_jt)\\
   \cos(\delta_jt) & \sin(\delta_jt)
  \end{bmatrix},\\
  \end{aligned}
\end{equation}
\end{small}%
and
\begin{small}
\begin{equation} \label{con:Btilde}
\begin{aligned}
  \tilde{ B} &\triangleq \begin{bmatrix}
    0 & 0 & 0 &\cdots  &0\\
    0 & \kappa \mathbf{P}(\tau) & 0 &\cdots  &0 \\
    \vdots & \vdots &\vdots & \ddots  & \vdots  \\
    0 &  0 & 0 & \cdots  &\kappa \mathbf{P}(\tau) \\
  \end{bmatrix},\\
  \end{aligned}
\end{equation}
\end{small}%
where
\begin{small}
\begin{equation}
\begin{aligned}
  &\mathbf{P}(\tau)\triangleq\begin{bmatrix}
   \cos(\Delta_0\tau) & -\sin(\Delta_0\tau)\\
   \sin(\Delta_0\tau) & \cos(\Delta_0\tau)
  \end{bmatrix}.\\
  \end{aligned}
\end{equation}
\end{small}%
Then the system in Eq.~(\ref{con:LinearControlDelay}) can be written as
\begin{small}
\begin{subequations} \label{con:RealLinearControlDelay}
\begin{numcases}{}
\dot{\tilde{X}}(t) = \tilde{A}(t) \tilde{X}(t) + \tilde{B} \tilde{X}(t-\tau),\\
\tilde{X}(t) = \tilde{\varphi}(t), ~~\forall t\in [-\tau,0],
\end{numcases}
\end{subequations}
\end{small}%
in the  real state-space configuration.

\begin{remark}
When $\delta_j = 0$, the energy difference between arbitrary neighboring levels is the same, which is $\omega_c$. In this case $\tilde{A}$ in Eq.~\eqref{con:RealLinearControlDelay} is time-invariant because $\mathbf{R}_j$ in Eq.~\eqref{Eq. R_j} now reduces to $\mathbf{R}_j= \begin{bmatrix} 0&-1\\1& 0\end{bmatrix}$.
\end{remark}

In the following, we investigate the dynamics of the quantum coherent feedback  according to whether or not the detunings $\delta_j=0$.

a)~ The case of $\delta_j = 0$.

When $\delta_j = 0$, applying the Laplace transform to Eq.~(\ref{con:RealLinearControlDelay}) gives
\[
s\tilde{X}(s) - \tilde{X}(0) = \tilde{A}\tilde{X}(s) + \tilde{B}\tilde{X}(s)e^{-s\tau}.
\]
Hence
\begin{small}
\begin{equation} \label{con:Xs}
\begin{aligned}
\tilde{X}(s) =  (sI -\tilde{B}e^{-s\tau} - \tilde{A})^{-1} \tilde{X}(0),
\end{aligned}
\end{equation}
\end{small}%
where $I$ is the $2N\times2N$  identity matrix.
It can be seen that
\begin{small}
\begin{equation} \label{con:YJs}
\begin{aligned}
&~~~~(sI -\tilde{B}e^{-s\tau} - \tilde{A})^{-1} \\
&=  \begin{bmatrix}
   s I   & -\gamma_{N-1} \mathbf{R}  &\cdots   &\mathbf{0}  &\mathbf{0} \\
   -\gamma_{N-1} \mathbf{R} & \mathbf{D}   &\cdots &\mathbf{0}  &\mathbf{0} \\
   \vdots & \vdots &\ddots &\vdots  &\vdots \\
   \mathbf{0}  & \mathbf{0}  &\cdots   & -\sqrt{N-1}\gamma_{1} \mathbf{R} & \mathbf{D}
  \end{bmatrix}^{-1}\\
&=\frac{1}{|sI -\tilde{B}e^{-s\tau} - \tilde{A}|} (sI -\tilde{B}e^{-s\tau} - \tilde{A})^*,
\end{aligned}
\end{equation}
\end{small}%
where the superscript $*$ represents the adjugate operation, $\mathbf{R}=\begin{bmatrix} 0&-1\\1& 0\end{bmatrix}$, and $\mathbf{D}  =  (s+\kappa) I   - \kappa e^{-s\tau} \mathbf{P}(\tau)$.

\begin{mypro} \label{Nleveloscillate}
When $\Delta_0\tau = 2n\pi$ and $\delta_j = 0$, $\tilde{X}(t)$ oscillates with the frequency determined by the coupling strengths between the multi-level atom and the cavity.
\end{mypro}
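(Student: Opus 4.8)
The plan is to read off the poles of the Laplace-domain solution~\eqref{con:Xs} at the special parameter values. First I would substitute $\Delta_0\tau=2n\pi$ and $\delta_j=0$ into the real-form system~\eqref{con:RealLinearControlDelay}: the rotation block $\mathbf{P}(\tau)$ becomes the $2\times2$ identity, so $\tilde B$ is block-diagonal with every nonzero block equal to $\kappa I_2$, while each $\mathbf{R}_j$ collapses to the constant block $\mathbf{R}=\begin{bmatrix}0&-1\\1&0\end{bmatrix}$, making $\tilde A$ time-invariant. Since the resonant frequency $\Delta_0=\omega_c$ is large, the loop delay $\tau=2n\pi/\Delta_0$ is small, so near the imaginary axis $e^{-s\tau}\approx1$ and $\tilde B e^{-s\tau}\approx\tilde B$; the $-\kappa I_2$ blocks on the diagonal of $\tilde A$ then cancel the $+\kappa I_2$ blocks of $\tilde B$, and the characteristic matrix reduces to $sI_{2N}-\tilde A_0$ with $\tilde A_0\triangleq\tilde A+\tilde B$, i.e. $\tilde A_0$ has zero diagonal blocks and the same off-diagonal blocks $\sqrt{j}\,\gamma_{N-j}\mathbf{R}$ as $\tilde A$. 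Note that all of the $\kappa$- and $\tau$-dependence has disappeared.

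Next I would exploit the structure of $\tilde A_0$, which factors as $\tilde A_0=H_0\otimes\mathbf{R}$, where $H_0$ is the $N\times N$ real, symmetric, tridiagonal matrix with zero diagonal and sub-/super-diagonal entries $\sqrt{j}\,\gamma_{N-j}$ — a matrix built entirely from the atom--cavity coupling strengths $\gamma_1,\dots,\gamma_{N-1}$. Being real symmetric, $H_0$ has real eigenvalues $\lambda_1,\dots,\lambda_N$, which are moreover symmetric about $0$ because the diagonal vanishes (conjugation by $\mathrm{diag}(1,-1,1,\dots)$ sends $H_0$ to $-H_0$); since $\mathbf{R}$ has eigenvalues $\pm i$, the Kronecker product gives $\mathrm{spec}(\tilde A_0)=\{\pm i\lambda_k:k=1,\dots,N\}$, purely imaginary and determined only by the $\gamma$'s. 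From $\tilde X(s)=(sI-\tilde A_0)^{-1}\tilde X(0)$ the poles of $\tilde X(s)$ lie exactly at $s=\pm i\lambda_k$, together with $s=0$ when $N$ is odd (then $0\in\mathrm{spec}(H_0)$); inverting the Laplace transform, $\tilde X(t)$ is a finite superposition of $\cos(\lambda_k t)$ and $\sin(\lambda_k t)$ (plus a constant term when $N$ is odd), so it neither decays nor grows, i.e. it oscillates persistently with frequencies $|\lambda_k|$ fixed by the atom--cavity couplings. As a consistency check, for $N=3$ one has
\[
H_0=\begin{bmatrix} 0 & \gamma_2 & 0 \\ \gamma_2 & 0 & \gamma_1 \\ 0 & \gamma_1 & 0 \end{bmatrix},
\]
with spectrum $\{0,\pm\sqrt{\gamma_1^2+\gamma_2^2}\}$, reproducing Eqs.~\eqref{con:C0Zerot} and \eqref{con:C1k1sZero}; one could alternatively invoke Proposition~\ref{OscillationProperty} and argue only about the scalar amplitude $c_{1,k}^1(t)$.

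The step I expect to be the main obstacle is the passage $e^{-s\tau}\approx1$: the exact characteristic equation $\det\!\big(sI-\tilde A-\tilde B e^{-s\tau}\big)=0$ is transcendental with infinitely many roots, so I must justify that discarding the delay does not change the qualitative conclusion. I would do this with a small-$\tau$ perturbation argument for the retarded functional-differential system~\eqref{con:RealLinearControlDelay}: as $\tau\to0^{+}$ the $2N$ ``principal'' roots converge to the roots of $\det(sI-\tilde A-\tilde B)=\det(sI-\tilde A_0)=0$, namely $\pm i\lambda_k$ (and $0$) — this limit is exact, since $\tilde A+\tilde B=\tilde A_0$ with no approximation — while every remaining root has real part tending to $-\infty$ and hence contributes only a fast transient. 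The delicate point is controlling the sign of the $O(\tau)$ real-part correction of the principal roots (so as to claim genuine persistence, or at least arbitrarily slow decay, for the small $\tau$ enforced by $\Delta_0\tau=2n\pi$ with $\Delta_0$ large); once this spectral localization is established, the tensor factorization and the identification of the oscillation frequencies with $\mathrm{spec}(H_0)$ are routine.
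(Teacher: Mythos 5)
Your proof is correct and follows the paper's overall strategy---at $\Delta_0\tau=2n\pi$ with $\delta_j=0$ and $\tau\approx 0$ the characteristic matrix collapses to $sI-(\tilde A+\tilde B)$ and one shows its zeros lie on the imaginary axis---but your key step is genuinely different and, as it happens, more reliable. The paper evaluates the determinant directly and asserts the factorization in Eq.~\eqref{con:detpole}, $|sI-\tilde Be^{-s\tau}-\tilde A|=(s^2+\gamma_{N-1}^2)^2(s^2+2\gamma_{N-2}^2)^2\cdots(s^2+(N-1)\gamma_1^2)^2$; as written this is a polynomial of degree $4(N-1)$ for a $2N\times 2N$ determinant, and it assigns each coupling its own private frequency, which contradicts the paper's own exact $N=3$ computation in Eqs.~\eqref{con:C1k1sZero}--\eqref{con:C2k2sZerot}, where the single nonzero frequency is $\sqrt{\gamma_1^2+\gamma_2^2}$. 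Your Kronecker factorization $\tilde A+\tilde B=H_0\otimes\mathbf{R}$, with $H_0$ the zero-diagonal real symmetric tridiagonal coupling matrix, gives the spectrum $\{\pm i\lambda_k\}$ with $\lambda_k\in\mathrm{spec}(H_0)$, reproduces the $N=3$ answer, and still delivers the substance of the proposition (all poles purely imaginary, frequencies fixed solely by the $\gamma$'s); what it buys is a correct block-tridiagonal determinant and an explicit identification of the oscillation frequencies as eigenvalues of the coupling matrix rather than the individual $\sqrt{j}\,\gamma_{N-j}$. Two caveats. First, be consistent about the $\sqrt{j}$ factors: your general definition of $H_0$ uses entries $\sqrt{j}\,\gamma_{N-j}$, but your $N=3$ check drops the $\sqrt{2}$ (the paper's three-level example makes the same omission relative to Eq.~\eqref{con:DetuneNlevelDelay}). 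Second, the $e^{-s\tau}\approx 1$ step you flag as the main obstacle is indeed the weak point, but the paper offers no more justification than you do---it simply sets $\mathbf{D}\approx sI$---so your small-$\tau$ root-perturbation discussion, even without a full control of the sign of the $O(\tau)$ real-part correction, already goes beyond the paper's treatment.
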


\begin{proof}
The dynamics of the quantum system in Eq.~(\ref{con:RealLinearControlDelay}) is determined by its poles, which are roots of the equation $|sI -\tilde{B}e^{-s\tau} - \tilde{A}|=0$. When $\Delta_0\tau = 2n\pi$, $\mathbf{P}(\tau)= \begin{bmatrix}
   1 & 0\\
   0 & 1
  \end{bmatrix}$ in Eq.~(\ref{con:Btilde}), $\mathbf{R}_j= \begin{bmatrix} 0&-1\\1& 0\end{bmatrix}$, and $\tau \approx  0$ because $\Delta_0 \gg 1$ in practical systems. Then in Eq.~(\ref{con:YJs}), $\mathbf{D}  \approx  s I   + \kappa( I   - \mathbf{P}(\tau)) = s I$.
After some tedious calculation we find  the determinant
\begin{small}
\begin{equation} \label{con:detpole}
\begin{aligned}
&~~~~|sI -\tilde{B}e^{-s\tau} - \tilde{A}| \\
&=|s\mathbf{D}  + \gamma_{N-1}^2  I  | |\mathbf{D} ^2 +2\gamma_{N-2}^2  I  | \cdot \cdots \cdot |\mathbf{D} ^2 +(N-1)\gamma_{1}^2  I  |\\
&=(s^2+\gamma_{N-1}^2)^2 (s^2+2\gamma_{N-2}^2)^2 \cdot \cdots \cdot (s^2+(N-1)\gamma_{1}^2)^2.
\end{aligned}
\end{equation}
\end{small}%
Obviously, all the roots of the equation $|sI -\tilde{B}e^{-s\tau} - \tilde{A}|=0$ are on the imaginary axis of the complex plane, which means that $\tilde{X}(t)$ oscillates with the frequency determined by $\gamma_1,\gamma_2,\cdots,\gamma_{N-1}$.
\end{proof}

\begin{mypro} \label{TACLya}
\cite{mondie2005exponential,kuang2018stability} For the real time-delayed system (\ref{con:RealLinearControlDelay}) with $\delta_j = 0$, if there exist real  positive definite matrices $P$, $Q$ and a positive constant $\beta$ such that the inequality
\begin{small}
\begin{equation} \label{con:Lya}
\begin{aligned}
 \mathcal{M}(P,Q) + 2\beta \mathcal{N}(P) <0
  \end{aligned}
\end{equation}
\end{small}
holds, where
\begin{small}
\begin{equation} \label{MNdef}
\begin{aligned}
  \mathcal{M}(P,Q)&= \begin{bmatrix}
   P \tilde{A} +\tilde{A}^\top  P+Q  & P\tilde{B}\\
   \tilde{B}^\top  P & -e^{-2\beta\tau}Q
  \end{bmatrix},\\
  \mathcal{N}(P)&=\begin{bmatrix} I\\0 \end{bmatrix}P \begin{bmatrix} I &0 \end{bmatrix} =\begin{bmatrix}
   P  & 0\\
   0 & 0
  \end{bmatrix},
  \end{aligned}
\end{equation}
\end{small}%
then
\begin{small}
\begin{equation} \label{LyaNeq}
\begin{aligned}
\|\tilde{X}(t,\tilde{\varphi}(t))\| \leq \sqrt{\frac{\alpha_2}{\alpha_1}} e^{-\beta t} \|\tilde{\varphi}(t))\|_\tau,
  \end{aligned}
\end{equation}
\end{small}%
where the positive constants $\alpha_1$ and $\alpha_2$ are defined as
\begin{subequations} \label{con:alpha12}
\begin{numcases}{}
\alpha_1 = \lambda_{min}(P),\\
\alpha_2 = \lambda_{max}(P) + \tau \lambda_{max}(Q).
\end{numcases}
\end{subequations}
\end{mypro}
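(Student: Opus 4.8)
The plan is to run the standard Lyapunov--Krasovskii argument with an exponentially weighted functional, as in \cite{mondie2005exponential,kuang2018stability}. Since $\delta_j=0$ the matrix $\tilde{A}$ is constant, so along any solution $\tilde{X}(t)=\tilde{X}(t,\tilde{\varphi})$ of \eqref{con:RealLinearControlDelay} (which exists and is absolutely continuous on $[0,\infty)$ by the method of steps for linear delay systems) I would introduce
\[
V(t) \;=\; \tilde{X}(t)^\top P\,\tilde{X}(t) \;+\; \int_{t-\tau}^{t} e^{2\beta(\theta-t)}\, \tilde{X}(\theta)^\top Q\, \tilde{X}(\theta)\, \mathrm{d}\theta ,
\]
with $\|\cdot\|$ taken to be the Euclidean norm. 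Because $P,Q$ are positive definite, one immediately gets the lower bound $\alpha_1\|\tilde{X}(t)\|^2 \le V(t)$ with $\alpha_1=\lambda_{\min}(P)$; and, using $e^{2\beta(\theta-0)}\le1$ for $\theta\in[-\tau,0]$ together with $\tilde{X}(\theta)=\tilde{\varphi}(\theta)$ on that interval, $V(0)\le \big(\lambda_{\max}(P)+\tau\lambda_{\max}(Q)\big)\|\tilde{\varphi}\|_\tau^2=\alpha_2\|\tilde{\varphi}\|_\tau^2$.

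The second step is to differentiate $V$ along the trajectory. Differentiating the integral term yields $\tilde{X}(t)^\top Q\tilde{X}(t)-e^{-2\beta\tau}\tilde{X}(t-\tau)^\top Q\tilde{X}(t-\tau)-2\beta\int_{t-\tau}^{t}e^{2\beta(\theta-t)}\tilde{X}(\theta)^\top Q\tilde{X}(\theta)\,\mathrm{d}\theta$, while $2\tilde{X}(t)^\top P\dot{\tilde{X}}(t)$ becomes $\tilde{X}(t)^\top(P\tilde{A}+\tilde{A}^\top P)\tilde{X}(t)+2\tilde{X}(t)^\top P\tilde{B}\tilde{X}(t-\tau)$ after substituting $\dot{\tilde{X}}(t)=\tilde{A}\tilde{X}(t)+\tilde{B}\tilde{X}(t-\tau)$. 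Collecting terms and writing $\xi(t)=[\tilde{X}(t)^\top,\ \tilde{X}(t-\tau)^\top]^\top$, I expect to obtain
\[
\dot V(t)+2\beta V(t) \;=\; \xi(t)^\top\big(\mathcal{M}(P,Q)+2\beta\,\mathcal{N}(P)\big)\,\xi(t),
\]
because the $-2\beta\int(\cdots)$ piece exactly cancels the weighted-integral contribution to $2\beta V(t)$, and the leftover $2\beta\,\tilde{X}(t)^\top P\tilde{X}(t)$ equals $2\beta\,\xi(t)^\top\mathcal{N}(P)\xi(t)$ by the definition of $\mathcal{N}$ in \eqref{MNdef}. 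This bookkeeping is the crux of the argument; it is routine once the weight $e^{2\beta(\theta-t)}$ is in place, and the only care needed is in tracking the signs and the $e^{-2\beta\tau}$ factor on the $Q$ block of $\mathcal{M}$.

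Finally, hypothesis \eqref{con:Lya} gives $\mathcal{M}(P,Q)+2\beta\mathcal{N}(P)<0$, hence $\dot V(t)+2\beta V(t)\le0$ for all $t\ge0$, i.e. $\frac{\mathrm{d}}{\mathrm{d}t}\big(e^{2\beta t}V(t)\big)\le0$. Integrating (equivalently, the comparison lemma) yields $V(t)\le e^{-2\beta t}V(0)$. Chaining the two-sided bounds then gives $\alpha_1\|\tilde{X}(t)\|^2\le V(t)\le e^{-2\beta t}V(0)\le \alpha_2 e^{-2\beta t}\|\tilde{\varphi}\|_\tau^2$, and taking square roots produces precisely \eqref{LyaNeq}. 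I anticipate no essential obstacle beyond this bookkeeping: the functional is well defined and differentiable almost everywhere for a linear system with a single constant delay and continuous initial data, so the argument closes cleanly.
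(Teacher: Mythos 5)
Your proposal is correct and follows essentially the same route the paper takes: the identical exponentially weighted Lyapunov--Krasovskii functional, the same cancellation giving $\dot V+2\beta V=\xi^\top(\mathcal{M}+2\beta\mathcal{N})\xi$, and the same two-sided bounds via $\alpha_1$ and $\alpha_2$ appear in the paper's Appendix~\ref{Sec:ProofLyaApend}, which proves the generalized Proposition~\ref{TACLyaUncertainty} and reduces to this statement when $\Upsilon=0$ (i.e., $\delta_j=0$). The bookkeeping you describe, including the exact cancellation of the weighted integral and the $e^{-2\beta\tau}$ factor on the $Q$ block, is precisely what the paper does.
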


\newtheorem{corol}{Corollary}

\begin{remark}
According to Proposition~\ref{Nleveloscillate}, when $\Delta_0\tau = 2n\pi$ and $\delta_j = 0$, the system oscillates. In this case, the relationship in Eq.~(\ref{con:Lya}) cannot be satisfied.
\end{remark}

A generalized version of Proposition \ref{TACLya} will be given later.

\medskip

b)~ The case of $\delta_j \neq 0$.

When $\delta_j \neq 0$, denote
\begin{small}
\begin{equation}
\begin{aligned} \label{con:timevarydelta}
&\tilde{A}(t,\gamma_1,\cdots,\gamma_{N-1},\delta_1,\cdots,\delta_{N-1})\\
=& \tilde{A}_0 + \Upsilon(t,\gamma_1,\cdots,\gamma_{N-1},\delta_1,\cdots,\delta_{N-1}),
\end{aligned}
\end{equation}
\end{small}%
where
\begin{small}
\begin{equation}
\begin{aligned}
  &\tilde{A}_0 =
   \begin{bmatrix}
    \mathbf{0}  &\gamma_{N-1} \mathbf{R}  &\cdots  & \mathbf{0}  &\mathbf{0} \\
    \gamma_{N-1} \mathbf{R} &-\kappa  I    &\cdots & \mathbf{0}  &\mathbf{0} \\
    \vdots  &\vdots & \ddots &\vdots &\vdots\\
    \mathbf{0}  & \mathbf{0}   & \cdots   & \sqrt{N-1}\gamma_{1} \mathbf{R} & -\kappa  I  \\
  \end{bmatrix},\\
  \end{aligned}
\end{equation}
\end{small}%
and
\begin{small}
\begin{equation}
\begin{aligned}
  &\Upsilon(t,\gamma_1,\cdots,\gamma_{N-1},\delta_1,\cdots,\delta_{N-1})=\\
   & \begin{bmatrix}
    \mathbf{0}  &\gamma_{N-1} \tilde{\mathbf{R}}_{N-1}  &\cdots  & \mathbf{0}  &\mathbf{0} \\
    \vdots  &\vdots & \ddots &\vdots &\vdots\\
    \mathbf{0}  & \mathbf{0}   & \cdots & \sqrt{N-1}\gamma_{1} \tilde{\mathbf{R}}_1 &\mathbf{0} \\
  \end{bmatrix},\\
  \end{aligned}
\end{equation}
\end{small}%
%\end{scriptsize}
with $\mathbf{R}=\begin{bmatrix} 0&-1\\1& 0\end{bmatrix}$, $\mathbf{R}_j= \begin{bmatrix}
   \sin(\delta_jt) & -\cos(\delta_jt)\\
   \cos(\delta_jt) & \sin(\delta_jt)
  \end{bmatrix}$, and $\tilde{\mathbf{R}}_{j} = \mathbf{R}_{j}-\mathbf{R}$.

The induced $2$-norm of $\Upsilon(t,\gamma_1,\cdots,\gamma_{N-1},\delta_1,\cdots,\delta_{N-1})$ is the square root of the maximum eigenvalue of $\Upsilon^\top \Upsilon$. It can be found that
\begin{scriptsize}
\begin{equation}
\begin{aligned}
&\Upsilon^\top (t,\gamma_1,\cdots,\gamma_{N-1},\delta_1,\cdots,\delta_{N-1}) \Upsilon(t,\gamma_1,\cdots,\gamma_{N-1},\delta_1,\cdots,\delta_{N-1})\\
  &=\begin{bmatrix}
  \gamma_{N-1}^2 \tilde{\mathbf{R}}_{N-1}^\top \tilde{\mathbf{R}}_{N-1}    &\cdots    &\mathbf{0}  &\cdots &\mathbf{0} \\
  \vdots  &\ddots &\vdots &\ddots &\vdots\\
    \mathbf{0}    &\cdots &\mathfrak{R}_j
  &\cdots  &\mathbf{0}  \\
  \vdots  &\ddots &\vdots  &\ddots &\vdots\\
  \mathbf{0}   &\cdots &\mathbf{0}  &\cdots  &(N-1)\gamma_1^2 \tilde{\mathbf{R}}_{1}^\top \tilde{\mathbf{R}}_{1}
  \end{bmatrix},
  \end{aligned}
\end{equation}
\end{scriptsize}%
with $\mathfrak{R}_j = j\gamma_{N-j}^2 \tilde{\mathbf{R}}_{N-j}^\top \tilde{\mathbf{R}}_{N-j} + (j+1)\gamma_{N-j-1}^2 \tilde{\mathbf{R}}_{N-j-1}^\top \tilde{\mathbf{R}}_{N-j-1}$ and
\begin{small}
\begin{equation}
\begin{aligned}
\tilde{\mathbf{R}}_{j}^\top \tilde{\mathbf{R}}_{j} = [\sin^2(\delta_jt) + (1-\cos(\delta_jt))^2]  I   = 2(1-\cos(\delta_jt)) I  .
  \end{aligned}
\end{equation}
\end{small}%
Hence, the induced $2$-norm of $\Upsilon(t,\gamma_1,\cdots,\gamma_{N-1},\delta_1,\cdots,\delta_{N-1})$ reads
\begin{small}
\begin{equation}
\begin{aligned} \label{con:norm}
&\|\Upsilon(t,\gamma_1,\cdots,\gamma_{N-1},\delta_1,\cdots,\delta_{N-1}) \|_2 \\
=&\max_j \sqrt{ \sum_{\mathfrak{j}=j}^{j+1} 2(N-\mathfrak{j}) \gamma_{\mathfrak{j}}^2  (1-\cos(\delta_{\mathfrak{j}}t))},
  \end{aligned}
\end{equation}
\end{small}%
which is finite when $1\leq j \leq N-1$, and is determined by the coupling strength and detunings between the atom and the cavity.

The following result generalizes \textbf{Proposition}~{\ref{TACLya}} by allowing nonzero detunings.

\begin{mypro} \label{TACLyaUncertainty}
For the quantum control system described by Eq.~(\ref{con:RealLinearControlDelay}), if there exist real  positive-definite matrices $\tilde{P}$, $\tilde{Q}$, and a positive constant $\beta$ such that
\begin{small}
\begin{equation} \label{con:LyatimeVary}
\begin{aligned}
 &\mathcal{M}(\tilde{P},\tilde{Q}) + 2 \lambda_{max}(\tilde{P}) \|\Upsilon(t,\gamma_1,\cdots,\gamma_{N-1},\delta_1,\cdots,\delta_{N-1}) \|_2 I\\
 &+ 2\beta \mathcal{N}(\tilde{P}) <0,
  \end{aligned}
\end{equation}
\end{small}%
where $\mathcal{M}$ and $\mathcal{N}$ are those defined in Eq.~(\ref{MNdef}), then $\tilde{X}(t)$ is exponentially stable.
\end{mypro}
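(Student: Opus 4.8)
The plan is to re-run the Lyapunov--Krasovskii argument underlying Proposition~\ref{TACLya}, but now anchored on the constant part $\tilde{A}_0$ of the decomposition $\tilde{A}(t)=\tilde{A}_0+\Upsilon(t)$ in Eq.~\eqref{con:timevarydelta}, treating $\Upsilon(t,\gamma_1,\cdots,\gamma_{N-1},\delta_1,\cdots,\delta_{N-1})$ as a bounded time-varying perturbation. First I would take the functional used in \cite{mondie2005exponential,kuang2018stability},
\[
V(t,\tilde{X}_t)=\tilde{X}(t)^{\top}\tilde{P}\tilde{X}(t)+\int_{t-\tau}^{t}e^{-2\beta(t-s)}\tilde{X}(s)^{\top}\tilde{Q}\tilde{X}(s)\,\mathrm{d}s,
\]
and compute $\dot{V}+2\beta V$ along the solutions of Eq.~\eqref{con:RealLinearControlDelay}. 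Writing $\xi(t)=[\tilde{X}(t)^{\top},\,\tilde{X}(t-\tau)^{\top}]^{\top}$, the terms coming from $\tilde{A}_0$, $\tilde{B}$, the differentiation of the $\tilde{Q}$-integral, and the $2\beta\,\tilde{X}^{\top}\tilde{P}\tilde{X}$ contribution assemble precisely into $\xi^{\top}\bigl(\mathcal{M}(\tilde{P},\tilde{Q})+2\beta\mathcal{N}(\tilde{P})\bigr)\xi$, with $\mathcal{M},\mathcal{N}$ as in Eq.~\eqref{MNdef} and with $\tilde{A}$ there read as the constant matrix $\tilde{A}_0$, leaving only the extra cross term $2\tilde{X}(t)^{\top}\tilde{P}\,\Upsilon(t)\tilde{X}(t)$.

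Next I would bound that cross term: since $\tilde{P}$ is positive definite, $\|\tilde{P}\|_2=\lambda_{max}(\tilde{P})$, so
\[
2\tilde{X}(t)^{\top}\tilde{P}\Upsilon(t)\tilde{X}(t)=\tilde{X}(t)^{\top}\bigl(\tilde{P}\Upsilon(t)+\Upsilon(t)^{\top}\tilde{P}\bigr)\tilde{X}(t)\le 2\lambda_{max}(\tilde{P})\,\|\Upsilon(t)\|_2\,\|\tilde{X}(t)\|^2\le 2\lambda_{max}(\tilde{P})\,\|\Upsilon(t)\|_2\,\xi^{\top}\xi,
\]
the last step merely discarding $\|\tilde{X}(t-\tau)\|^2\ge0$. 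Hence $\dot{V}+2\beta V\le\xi^{\top}\bigl(\mathcal{M}(\tilde{P},\tilde{Q})+2\lambda_{max}(\tilde{P})\|\Upsilon(t)\|_2 I+2\beta\mathcal{N}(\tilde{P})\bigr)\xi$, which is strictly negative by the hypothesis \eqref{con:LyatimeVary}; here the explicit uniform bound on $\|\Upsilon(t)\|_2$ in Eq.~\eqref{con:norm} is what makes \eqref{con:LyatimeVary} a legitimate condition holding for all $t\ge0$. Therefore $\dot{V}(t)\le-2\beta V(t)$, so $V(t)\le e^{-2\beta t}V(0)$. Closing with the standard sandwich estimates $\lambda_{min}(\tilde{P})\|\tilde{X}(t)\|^2\le V(t)$ and, using $e^{2\beta s}\le1$ for $s\le0$ and $\tilde{\varphi}$ bounded on $[-\tau,0]$, $V(0)\le\bigl(\lambda_{max}(\tilde{P})+\tau\lambda_{max}(\tilde{Q})\bigr)\|\tilde{\varphi}\|_{\tau}^2$, one gets $\|\tilde{X}(t)\|\le\sqrt{\alpha_2/\alpha_1}\,e^{-\beta t}\|\tilde{\varphi}\|_{\tau}$ with $\alpha_1,\alpha_2$ exactly as in Eq.~\eqref{con:alpha12}. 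Since $\alpha_2\ge\alpha_1$, this matches Definition~\ref{defstable} with $\chi=\sqrt{\alpha_2/\alpha_1}\ge1$ and $\alpha=\beta>0$, so $\tilde{X}(t)$ is exponentially stable.

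The main obstacle is the careful handling of the perturbation cross term: one must recast $2\tilde{X}^{\top}\tilde{P}\Upsilon(t)\tilde{X}$ as a quadratic form in the augmented vector $\xi$ so it can be absorbed into $\mathcal{M}+2\beta\mathcal{N}$, and one must check that the rather crude overestimate $2\lambda_{max}(\tilde{P})\|\Upsilon(t)\|_2 I$ still leaves \eqref{con:LyatimeVary} feasible — this ultimately rests on $\|\Upsilon(t)\|_2$ being finite, i.e.\ on the detunings and couplings being bounded, which is precisely the content of Eq.~\eqref{con:norm}. A subsidiary but worthwhile point is to make explicit that the block $\mathcal{M}$ appearing in \eqref{con:LyatimeVary} is built from the constant matrix $\tilde{A}_0$ (not the full time-varying $\tilde{A}(t)$), so that the matrix inequality is genuinely checkable, and that the time dependence of the system enters only through the scalar factor $\|\Upsilon(t)\|_2$, for which it suffices to use its supremum over $t\ge0$.
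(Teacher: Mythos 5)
Your proposal is correct and follows essentially the same route as the paper's own proof: the same Lyapunov--Krasovskii functional, the same decomposition $\tilde{A}(t)=\tilde{A}_0+\Upsilon(t)$, the same bound $2\tilde{X}^{\top}\tilde{P}\Upsilon(t)\tilde{X}\le 2\lambda_{max}(\tilde{P})\|\Upsilon(t)\|_2\,\xi^{\top}\xi$ on the perturbation cross term, and the same closing sandwich estimates yielding the exponential bound of Definition~\ref{defstable}. Your explicit remark that the block $\mathcal{M}$ in the hypothesis must be read with the constant matrix $\tilde{A}_0$ is a useful clarification that the paper's proof also implicitly adopts.
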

\begin{proof}
We prove this by constructing a Lyapunov-Krasovkii function, and the details are given in \textbf{Appendix~\ref{Sec:ProofLyaApend}}.
\end{proof}
\begin{remark}
When $\delta_j = 0$ for $j =1,2,\cdots, N-1$, $\|\Upsilon(t,\gamma_1,\cdots,\gamma_{N-1},\delta_1,\cdots,\delta_{N-1}) \|_2 = 0$ in Eq.~(\ref{con:norm}). That is, in the resonant case \textbf{Proposition}~\ref{TACLyaUncertainty} reduces to \textbf{Proposition}~\ref{TACLya}. Non-zero detunings introduce the second item on the RHS of Eq.~(\ref{con:LyatimeVary}), which is absent in Eq.~(\ref{con:Lya}) for the resonant case. Thus, the existence of detunings between the multi-level atom and cavity makes it more difficult for the linear quantum control system to be exponentially stable.   
\end{remark}

We take the four-level atom as an example for demonstration, and compare the feedback control performance between the cases whether there are atom-cavity detunings or not. As illustrated in Fig.~\ref{fig:Fourlevel}, the simulations are taken based on the original evolution equation Eq.~(\ref{con:controleq2Detune}) without any approximation. We take the parameters as $\Delta_0 = 50$, $G_0 = 0.2$, $\gamma_j = 0.6$ for $j=1,2,3$. The solid lines represent the evolution of populations when $\delta_j = 0$ and the dashed lines represent the case $\delta_j = 1$. The comparison reveals that when there are detunings between the cavity and the multi-level atom, the coherent feedback control stability is worse, which agrees with  Proposition~\ref{TACLyaUncertainty} saying that the stability inequality is more difficult to be satisfied when there are detunings between the cavity and the multi-level atom.

\begin{figure}[h]
\centerline{\includegraphics[width=1\columnwidth]{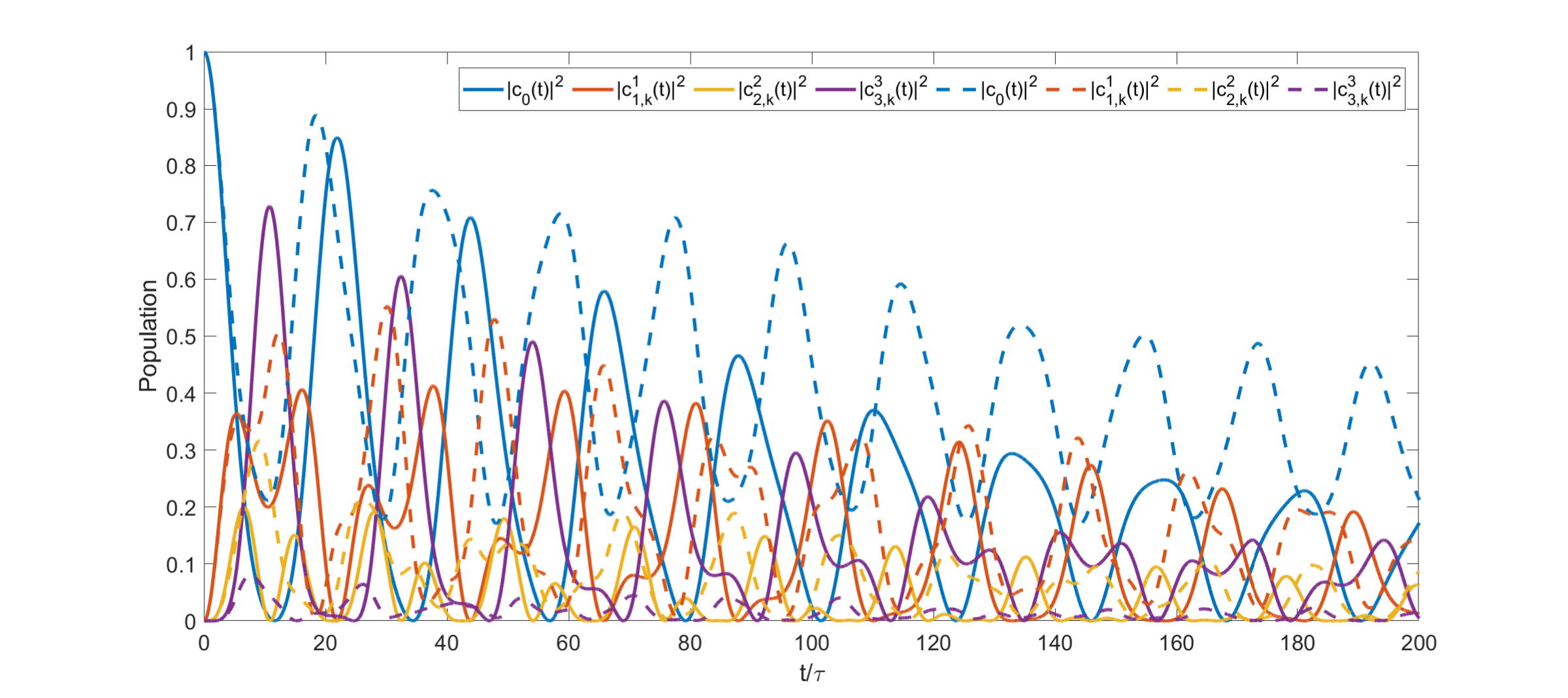}}
\caption{Feedback control performance when the four-level atom is resonant (the solid lines) or non-resonant (the dashed lines) with the cavity.} 
\label{fig:Fourlevel}
\end{figure}

\section{Coherent feedback control with multiple parallel waveguides} \label{Sec:Parallel}

Apart from the architecture that a single waveguide is coupled with a cavity to construct a feedback loop, in many practical optical systems, a cavity can be coupled to multiple parallel waveguides. Such configurations have abundant potential applications in photon routing~\cite{ParaWphotonRout}, creating correlated photonic states~\cite{WWaveInt,CorreMulWPRL,poulios2014quantum}, among others.
Therefore in this section, we  study the quantum coherent feedback network as shown in Fig.~\ref{fig:multiwaveguide} (a), which  generalizes the one in Fig.~\ref{fig:Nlevel} by allowing {\it multiple} waveguides. Specifically, the cavity is coupled to the inner red waveguide with coupling strengths $G^*(k,t)$ and $G(k,t)$, and  each waveguide is only coupled with its nearest neighbors. Thus the photons in the inner red waveguide can be transmitted into the outer black waveguides through the cascade coupling designated by coupling strengths $K_{12},K_{23},\cdots,K_{(W-1)W}$, and vice versa. This process can be equivalently represented in Fig.~\ref{fig:multiwaveguide} (b), where the parallel waveguides are coupled to each other designated by coupling strengths $K_{12},K_{23},\cdots,K_{(W-1)W},K_{W(W-1)},\cdots,K_{32},K_{21}$, thus forming a coherent feedback network so that photons are transmitted back and forth between the cavity and parallel waveguides.

\begin{figure}[h]
\centerline{\includegraphics[width=1\columnwidth]{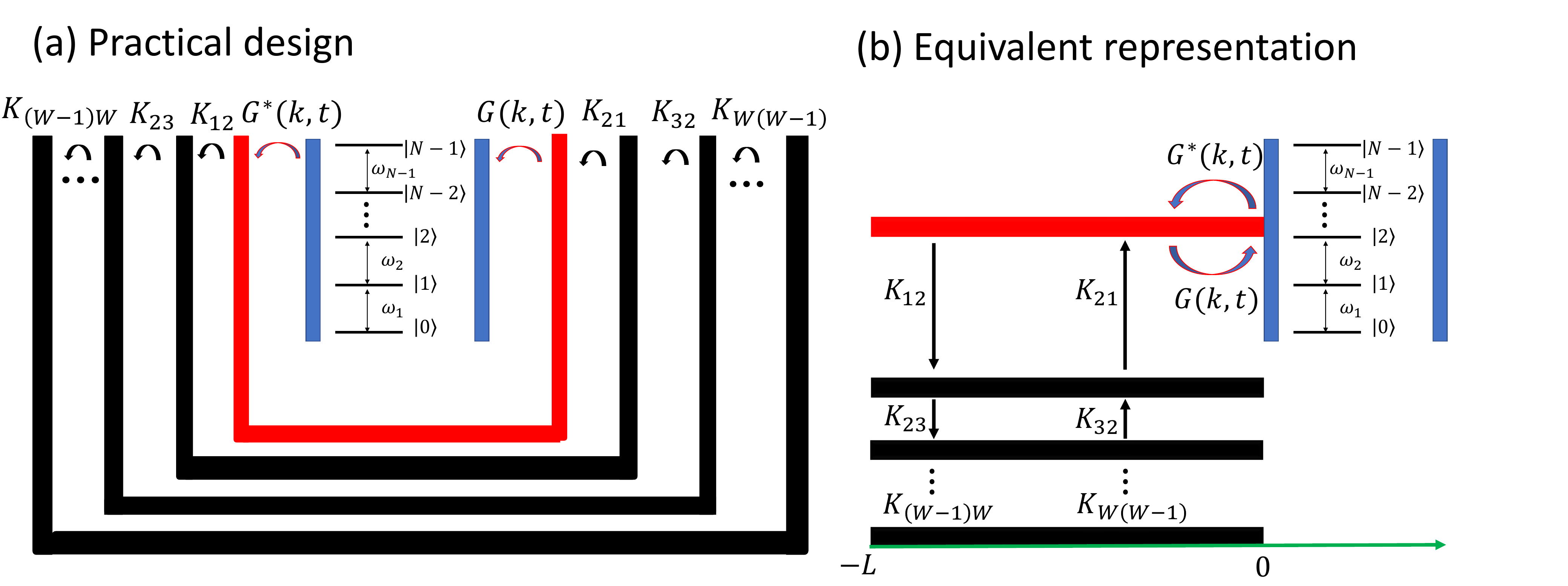}}
\caption{An $N$-level atom in a cavity coupled with multiple parallel waveguides.}
	\label{fig:multiwaveguide}
\end{figure}

According to the studies in Ref.~\cite{ding2022quantum} and Section~\ref{Sec:Model} above, when $\delta_n \ll \omega_c$ and $\gamma_n \ll \omega_c$, once there are observable photons in a waveguide, its photonic modes will be Lorentzian around the central mode $\omega_c$, which means that the couplings among the waveguide array are largely determined by the central mode $\omega_c$ of the Lorentzian spectrum; see also  Refs.~\cite{poulios2014quantum,meinecke2013coherent,Parallelmultiphoton}. Due to this, we make the following assumption.
\begin{assumption} \label{waveguideCoupling}
The coupling strengths $K_{\mathbf{w}\mathbf{(w+1)}}$ and $K_{(\mathbf{w+1})\mathbf{w}}$ between the two neighboring waveguides are constant, for all $\mathbf{w}=1,2,\ldots W-1$.
\end{assumption}

Generalized from Eq.~(\ref{con:Hamiltonian}), the Hamiltonian of the system in  Fig.~\ref{fig:multiwaveguide} reads
\begin{small}
\begin{equation} \label{con:ParallelHam}
\begin{aligned}
H' = H_A  + \omega_c a^{\dag}a +\sum_{\mathbf{w}=1}^W\int \omega_kd_k^{\dag \mathbf{w}}d_k^{\mathbf{w}} \mathrm{d}k+ H_{I}, 
\end{aligned}
\end{equation}
\end{small}%
where the waveguide array Hamiltonian is a summation of all the waveguides with continuous modes. The interaction picture Hamiltonian reads~\cite{WWaveInt}
\begin{scriptsize}
\begin{equation} \label{con:HamIParawaveguide}
\begin{aligned}
&\tilde{H}_I' = -\sum_{n=1}^{N-1} \gamma_n \left(e^{-i\delta_nt}\sigma_n^-a^{\dag} + e^{i\delta_nt}\sigma_n^+a \right)\\
& - \int \left[G^*(k,t)ad^{\dag 1}_k + G(k,t) a^{\dag}d_k^1\right]
 -\sum_{\mathbf{w}=1}^{W-1} K_{\mathbf{w}(\mathbf{w}+1)} d_k^{\mathbf{w}} d_k^{\dag (\mathbf{w}+1)} \mathrm{d}k\\
& -\sum_{\mathbf{w}=2}^W K_{\mathbf{w}(\mathbf{w}-1)} d_k^{\mathbf{w}} d_k^{\dag (\mathbf{w}-1)} -\sum_{\mathbf{w}=1}^W \beta_{\mathbf{w}}  d_k^{\dag \mathbf{w}}d_k^{\mathbf{w}},\\
\end{aligned}
\end{equation}
\end{scriptsize}%
where the real number $\beta_{\mathbf{w}}$  is the propagation constant of the modes in the $\mathbf{w}$-th waveguide and  $\beta_{\mathbf{w}} d_k^{\dag \mathbf{w}}d_k^{\mathbf{w}} $ represents that the photon number in the $\mathbf{w}$-th waveguide is not changed. For simplicity, it is assumed that  $K_{\mathbf{w}(\mathbf{w}+1)} = K_{(\mathbf{w}+1)\mathbf{w}}^*$ for $\mathbf{w} = 1,2,\cdots,W-1$, as has been done in \cite{Parallelmultiphoton,ParalleWcouple,chiralmedia,ParallelCoupledModes}.

\subsection{Photon states in the parallel waveguides}
If there are $n_\mathbf{w}$ photons in the $\mathbf{w}$-th waveguide, then the $n_\mathbf{w}$-photon state in the $\mathbf{w}$-th waveguide can be represented as the linear superposition of the unnormalized basis states
%\begin{small}
\begin{equation} \label{con:wwaveguidephoton}
\begin{aligned}
|\{k_{\odot n_\mathbf{w}}^{\mathbf{w}}\}\rangle \triangleq
d^{\dag \mathbf{w}}_{k_{n_\mathbf{w}\odot n_\mathbf{w}}} \cdots d^{\dag \mathbf{w}}_{k_{1\odot n_\mathbf{w}}} |0\rangle,
\end{aligned}
\end{equation}
%\end{small}%
where $d^{\dag \mathbf{w}}_{k_p\odot n_{\mathbf{w}}}|0\rangle$ with $p=1,2,\cdots,n_{\mathbf{w}}$ creates a single photon of mode $k_p$ in the $\mathbf{w}$-th waveguide, and the symbol $\odot n_\mathbf{w}$ indicates there are altogether $n_{\bf w}$  photons in  the $\mathbf{w}$-th waveguide. Of course, $|\{k_{\odot n_\mathbf{w}}^{\mathbf{w}}\}\rangle$ is the vacuum state $\ket{0}$ if $n_{\mathbf{w}}=0$, namely no photons in the $\mathbf{w}$-th waveguide.  

Assume the total number of photons in the waveguide array is $N_W \triangleq \sum_{\mathbf{w} = 1}^{W }n_{\mathbf{w}}$.
Then the basis states for the photons in the waveguide array read
\begin{small}
\begin{equation} \label{con:MwaveguidePhoton}
\begin{aligned}
&~~~~|\{k_{\odot n_1}^{1}\}, \cdots, \{k_{\odot n_\mathbf{w}}^{\mathbf{w}}\}, \cdots,\{k_{\odot n_W}^{W}\}\rangle \\
&\triangleq  |\{k_{1\odot n_1}^{1}, \cdots,k_{n_1\odot n_1}^{1}\}, \cdots,\{k_{1\odot n_{\mathbf{w}}}^{\mathbf{w}}, \cdots,k_{n_\mathbf{w}\odot n_\mathbf{w}}^{\mathbf{w}}\} ,  \cdots,\\
&~~~~\{k_{1\odot n_W}^{W} \cdots,k_{n_W\odot n_W}^{W}\}\rangle ,\\
\end{aligned}
\end{equation}
\end{small}%
where $\{k_{1\odot n_{\mathbf{w}}}^{\mathbf{w}}, \cdots,k_{n_\mathbf{w}\odot n_\mathbf{w}}^{\mathbf{w}}\}$ represents that the basis states $ \{k_{\odot n_\mathbf{w}}^{\mathbf{w}}\}$ in the $\mathbf{w}$-th waveguide are composed with $n_\mathbf{w}$ photons.
Because of the exchange of photons between the neighboring waveguides, the photon state in the parallel waveguides is not simply the product of the states of different channels; instead it is a superposition of the  tensor-product basis states  given in Eq. \eqref{con:MwaveguidePhoton}.
\subsection{Quantum states of the feedback system}

For the quantum system in Fig.~\ref{fig:multiwaveguide}, when the atomic state is $|N-1-j\rangle$ and there are $m$ photons in the cavity, then there are $j-m$ photons in the parallel waveguides. Assume the number of photons in the $i$-th waveguide is $n_i$, $i=1,2,\cdots,W$, then the quantum system can be represented as~\cite{zhang2014analysis,Z17,ZD22,DZWW23}
\begin{scriptsize}
\begin{equation} \label{con:stateMwaveguide}
\begin{aligned}
&|\Psi(t)\rangle =\sum_{j=0}^{N-1} c_j^j(t)|N-1-j,j,\{0\}\rangle \\
&+ \sum_{j = 1}^{N-1} \sum_{m=0}^{j-1} \sum_{N_W = j-m}\int \cdots \int c_{j,\odot n_1\cdots\odot n_W}^{m}\left(t,k_{\odot n_1}^{1},\cdots,k_{\odot n_W}^{W}\right) \\
&~~~~\left|N-1-j,m,\{k_{\odot n_1}^{1}\}, \cdots,\{k_{\odot n_W}^{W}\}\right\rangle \mathrm{d}k_{1\odot n_1}^{1} \cdots \mathrm{d}k_{n_W\odot n_W}^{W},
\end{aligned}
\end{equation}
\end{scriptsize}%
where 
$c_{j,\odot n_1\cdots\odot n_W}^{m}$ is the amplitude of the state with overall $j$ excitations including $m$ photons in the cavity and $n_1,n_2,\cdots,n_W$ photons in the first, second, $\cdots$, $W$-th waveguide, respectively.
Obviously, $1 \leq \max\{n_{\mathbf{w}}\} \leq j-m$ and $N_{\mathbf{w}}\equiv \sum_{\mathbf{w} = 1}^M n_{\mathbf{w}} = j-m $. Initially, the atom is excited at $|N-1\rangle$ and $n_1 = n_2 = \cdots = n_W = 0$.

The normalization condition of the state in Eq.~\eqref{con:stateMwaveguide} is
\begin{small}
\begin{equation} \label{con:Norm}
\begin{aligned}
&\sum_{j=1}^{N-1} |c_j^j(t)|^2 +  \sum_{j = 1}^{N-1} \sum_{m=0}^{j-1} \sum_{N_W = j-m}\int \cdots \int\\
&|c_{j,\odot n_1\cdots\odot n_W}^{m}(t,k_{\odot n_1}^{1},\cdots,k_{\odot n_W}^{W})|^2 \mathrm{d}k_{1\odot n_1}^{1} \cdots \mathrm{d}k_{n_W\odot n_W}^{W}=1.
\end{aligned}
\end{equation}
\end{small}%

The population representing that there are $n$ photons in the $\mathbf{w}$-th waveguide can be calculated as
\begin{scriptsize}
\begin{equation} \label{con:Pwn}
\begin{aligned}
&P_n^{\mathbf{w}} = \sum_{j = n}^{N-1} \sum_{m=0}^{j-n} \sum_{n_1 + \cdots + n_{\mathbf{w}-1} +  n_{\mathbf{w}+1}  + n_W = j-m-n} \\
&\int \cdots \int \left|c_{j,\odot n_1\cdots\odot n_W}^{m}(t,k_{\odot n_1}^{1},\cdots,k_{\odot n_W}^{W})\right|^2 \mathrm{d}k_{1\odot n_1}^{1} \cdots \mathrm{d}k_{n_W\odot n_W}^{W},
\end{aligned}
\end{equation}
\end{scriptsize}%
because there can be $n$ photons in a single waveguide only when the atom is at the state $|N-1-j\rangle$ with $j = n,n+1,\cdots,N-1$.

Substituting Eq.~(\ref{con:stateMwaveguide}) into the Schr\"{o}dinger equation with the Hamiltonian in Eq.~(\ref{con:HamIParawaveguide}), we can get
\begin{tiny}
\begin{subequations} \label{con:ParallelControl}
\begin{align}%{}
 &\dot{c}_j^j(t) = i\sqrt{j}\gamma_{N-j}  e^{-i\delta_{N-j}t} c_{j-1}^{j-1}(t)\notag\\
 &+ i\sqrt{j+1}\gamma_{N-j-1}  e^{i\delta_{N-j-1}t} c_{j+1}^{j+1}(t) -\kappa  [c_{j,k}^{j}(t) - e^{i\Delta_0\tau} c_{j,k}^{j}(t-\tau) ],  \label{photoncavity}\\
&\dot{c}_{j,\odot n_1\cdots\odot n_W}^{m}(t,k_{\odot n_1}^{1},\cdots,k_{\odot n_W}^{W}) \notag\\
&= i\sqrt{m}\gamma_{N-j}  e^{-i\delta_{N-j}t} c_{j-1,\odot n_1\cdots\odot n_W}^{m-1}(t,k_{\odot n_1}^{1},\cdots,k_{\odot n_W}^{W}) \notag\\
&  + i\sqrt{m+1}\gamma_{N-j-1}  e^{i\delta_{N-j-1}t} c_{j+1,\odot n_1\cdots\odot n_W}^{m+1}(t,k_{\odot n_1}^{1},\cdots,k_{\odot n_W}^{W}) \notag \\
&+i\sum_{p=1}^{n_1+1}\int G(k_p,t) c_{j,\odot n_1+1\cdots\odot n_W}^{m-1}(t,k_{1\odot n_1+1}^1,\cdots,k_{p\odot n_1+1}^1,\cdots,\notag\\
& k_{n_1+1\odot n_1+1}^1;k_{1\odot n_2}^2,\cdots,k_{n_2\odot n_2}^2;\cdots; k_{1\odot n_W}^W;\cdots,k_{n_W\odot n_W}^W) \mathrm{d}k_p \notag\\
&+i\sum_{p=1}^{n_1} G^*(k_p,t) c_{j,\odot n_1-1\cdots\odot n_W}^{m+1} \notag\\
&~~~~~~(t,k_{1\odot n_1-1}^1,\cdots,k_{p-1\odot n_1-1}^1,k_{p+1\odot n_1-1}^1,\cdots, k_{n_1-1\odot n_1-1}^1; \notag\\
&k_{1\odot n_2}^2,\cdots,k_{n_2\odot n_2}^2;\cdots; k_{1\odot n_W}^W;\cdots,k_{n_W\odot n_W}^W)\notag\\
&+ i\sum_{\mathbf{w} =1}^{W-1} K_{\mathbf{w} (\mathbf{w}+1)} c_{j,\odot n_1\cdots \odot n_{\mathbf{w}}+1\odot n_{\mathbf{w}+1}-1\cdots\odot n_W}^{m}\notag\\
&~~~~~~(t,k_{\odot n_1}^{1},\cdots,k_{\odot n_{\mathbf{w}}+1}^{\mathbf{w}},k_{\odot n_{\mathbf{w}+1}-1}^{\mathbf{w}+1},\cdots,k_{\odot n_W}^{W})\notag\\
&+ i\sum_{\mathbf{w} =2}^{W} K_{\mathbf{w}(\mathbf{w}-1)}
c_{j,\odot n_1\cdots \odot n_{\mathbf{w}-1}-1\odot n_{\mathbf{w}}+1\cdots\odot n_W}^{m}\notag\\
&~~~~~~(t,k_{\odot n_1}^{1},\cdots,k_{\odot n_{\mathbf{w}-1}-1}^{\mathbf{w}-1},k_{\odot n_{\mathbf{w}}+1}^{\mathbf{w}},\cdots,k_{\odot n_W}^{W})\notag\\
&+i\sum_{\mathbf{w} =1}^W n_{\mathbf{w}}\beta_{\mathbf{w}} c_{j,\odot n_1\cdots\odot n_W}^{m}(t,k_{\odot n_1}^{1},\cdots,k_{\odot n_W}^{W}), m>0,\label{CjkmParaW}\\
&\dot{c}_{j,\odot n_1\cdots\odot n_W}^0(t,k_{\odot n_1}^{1},\cdots,k_{\odot n_W}^{W}) \notag\\
&=  i\gamma_{N-j-1}   e^{i\delta_{N-j-1}t}  c_{j+1,\odot n_1\cdots\odot n_W}^{1}(t,k_{\odot n_1}^{1},\cdots,k_{\odot n_W}^{W}) \notag\\
&+i\sum_{p=1}^{n_1-1} G^*(k_p,t) c_{j,\odot n_1-1\cdots\odot n_W}^{1}\notag\\
&~~~~~~(t,k_{1\odot n_1}^1, \cdots, k_{p-1\odot n_1}^1,k_{p+1\odot n_1}^1,\cdots, k_{n_1\odot n_1}^1,k_{\odot n_2}^2 \cdot k_{\odot n_W}^W)  \notag\\
&+ i\sum_{\mathbf{w} =1}^{W-1} K_{\mathbf{w} (\mathbf{w}+1)} c_{j,k}^{0}(t,k_{\odot n_1}^{1},\cdots,k_{\odot n_{\mathbf{w}}+1}^{\mathbf{w}},k_{\odot n_{\mathbf{w}+1}-1}^{\mathbf{w}+1},\cdots,k_{\odot n_W}^{W})\notag\\
&+ i\sum_{\mathbf{w} =2}^{W} K_{\mathbf{w}(\mathbf{w}-1)}c_{j,k}^{0}(t,k_{\odot n_1}^{1},\cdots,k_{\odot n_{\mathbf{w}-1}-1}^{\mathbf{w}-1},k_{\odot n_{\mathbf{w}}+1}^{\mathbf{w}},\cdots,k_{\odot n_W}^{W})\notag\\
&+i\sum_{\mathbf{w} =1}^W n_{\mathbf{w}} \beta_{\mathbf{w}}c_{j,\odot n_1\cdots\odot n_W}^0(t,k_{\odot n_1}^{1},\cdots,k_{\odot n_W}^{W}), m=0.\label{Ddelaymodel3} %\label{Ddelaymodel3}
\end{align}
\end{subequations}
\end{tiny}%

For the third component of the RHS of Eq.~(\ref{CjkmParaW}), we have
\begin{tiny}
\begin{equation} \label{con:Paracjkm3}
\begin{aligned}
&~~~~i\sum_{p=1}^{n_1+1}\int G(k_p,t) c_{j,k}^{m-1}(t,k_{1\odot n_1+1}^1,\cdots,k_{p\odot n_1+1}^1,\cdots, k_{n_1+1\odot n_1+1}^1;\\
&~~~~k_{1\odot n_2}^2,\cdots,k_{n_2\odot n_2}^2;\cdots; k_{1\odot n_W}^W;\cdots,k_{n_W\odot n_W}^W) \mathrm{d}k_p \\
&=i\sum_{p=1}^{n_1+1}\int G(k_p,t) \int_0^t \dot{c}_{j,k}^{m-1}(u,k_{1\odot n_1+1}^1,\cdots,k_{p\odot n_1+1}^1,\cdots, k_{n_1+1\odot n_1+1}^1;\\
&~~~~k_{1\odot n_2}^2,\cdots,k_{n_2\odot n_2}^2;\cdots; k_{1\odot n_W}^W;\cdots,k_{n_W\odot n_W}^W) \mathrm{d}u \mathrm{d}k_p\\
&=i\sum_{p=1}^{n_1+1}\int G(k_p,t) \int_0^t i\sum_{q=1}^{n_1+1} G^*(k_q,u) c_{j,k}^{m}(u,k_{1\odot n_1}^1,\cdots,k_{q-1\odot n_1}^1,\\
&~~~~k_{q+1\odot n_1}^1,\cdots, k_{n_1-1\odot n_1}^1; k_{1\odot n_2}^2,\cdots,k_{n_2\odot n_2}^2;\cdots;\\
&~~~~k_{1\odot n_W}^W;\cdots,k_{n_W\odot n_W}^W) \mathrm{d}u \mathrm{d}k_p\\
&=-\sum_{p=1}^{n_1+1}\int G(k_p,t) \int_0^t  G^*(k_p,u) c_{j,k}^{m}(u,k_{1\odot n_1}^1,\cdots,k_{p-1\odot n_1}^1,k_{p+1\odot n_1}^1,\\
&~~~~\cdots, k_{n_1-1\odot n_1}^1; k_{1\odot n_2}^2,\cdots,k_{n_2\odot n_2}^2;\cdots; k_{1\odot n_W}^W;\cdots,k_{n_W\odot n_W}^W) \mathrm{d}u \mathrm{d}k_p\\
&=-\sum_{p=1}^{n_1+1}\int_0^t \int G(k_p,t)   G^*(k_p,u) c_{j,k}^{m}(u,k_{1\odot n_1}^1,\cdots,k_{p-1\odot n_1}^1,k_{p+1\odot n_1}^1,\\
&~~~~\cdots, k_{n_1-1\odot n_1}^1; k_{1\odot n_2}^2,\cdots,k_{n_2\odot n_2}^2;\cdots; k_{1\odot n_W}^W;\cdots,k_{n_W\odot n_W}^W) \mathrm{d}k_p \mathrm{d}u \\
&=-\left|G_0\right|^2 \sum_{p=1}^{n_1+1}\int_0^t \int \sin^2(k_p L) e^{-i(\omega_p-\Delta_0)(t-u)} c_{j,k}^{m}(u,k_{1\odot n_1}^1,\cdots,k_{p-1\odot n_1}^1,\\
&~~~~k_{p+1\odot n_1}^1,\cdots, k_{n_1-1\odot n_1}^1; k_{1\odot n_2}^2,\cdots,k_{n_2\odot n_2}^2;\cdots;\\
&~~~~k_{1\odot n_W}^W;\cdots,k_{n_W\odot n_W}^W) \mathrm{d}k_p \mathrm{d}u \\
&=-\kappa n_1 \left [c_{j,k}^{m}(t,k_{\odot n_1}^{1},\cdots,k_{\odot n_W}^{W}) - e^{i\Delta_0\tau} c_{j,k}^{m}(t-\tau,k_{\odot n_1}^{1},\cdots,k_{\odot n_W}^{W})\right ],
\end{aligned}
\end{equation}
\end{tiny}%
where the derivation of the influence by the round trip delay $\tau$ is the same as in the single-waveguide circumstance in Eq.~(\ref{con:examplefourpart2}).

\begin{remark}
The dynamics of the states that all the generated photons are in the cavity (Eq.~(\ref{photoncavity})) is not influenced by the parallel waveguide architecture because the cavity is only coupled with the nearest waveguide.
\end{remark}

\subsection{Transmission property of photons in the parallel waveguides}
The transmission of photons between the cavity and the parallel waveguides is determined by the design of the feedback loop and the coupling strengths among the parallel waveguides. By tuning the length of the feedback loop, the photons can be controlled to be emitted or not emitted from the cavity into the waveguides.

\begin{mypro}
When $\Delta_0\tau = 2n\pi$ and $\delta_j =0$, there are no photons in the parallel waveguides.
\end{mypro}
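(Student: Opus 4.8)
The plan is to reduce the claim to the behaviour of the ``all photons in the cavity'' subsystem and then close the argument by normalization, exactly in the spirit of the single-waveguide Proposition~\ref{photonstate}.

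First I would note that in any realistic device $\Delta_0 \gg 1$, so the hypothesis $\Delta_0\tau = 2n\pi$ forces the round-trip delay $\tau = 2L/c$ to be small; one may therefore pass to the Laplace domain with $e^{-s\tau}\approx 1$, precisely the reduction already used throughout Section~\ref{Sec:Model}. With $\delta_j = 0$ and $\Delta_0\tau = 2n\pi$ one has $1 - e^{i\Delta_0\tau} = 0$, so every delay-induced contribution in Eq.~(\ref{con:ParallelControl}) — each of which, after the $\tau \ll 1$ reduction described in Eq.~(\ref{con:Paracjkm3}), is proportional to $\kappa\, n_{\mathbf{w}}\,(1 - e^{i\Delta_0\tau})$ — vanishes identically. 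In particular the delay term in Eq.~(\ref{photoncavity}) drops out.

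Next I would isolate the vector $X_c(t) \triangleq [c_0^0(t), c_1^1(t), \dots, c_{N-1}^{N-1}(t)]^{\top} \in \mathbf{C}^{N}$ of amplitudes for which all $j$ excitations are photons in the cavity. By the Remark immediately preceding this subsection these amplitudes are untouched by the waveguide array, and once the delay term is removed Eq.~(\ref{photoncavity}) becomes a closed, finite-dimensional, $W$-independent linear ODE $\dot{X}_c(t) = i M X_c(t)$, where $M$ is the real symmetric tridiagonal matrix built from the rescaled atom--cavity couplings $\sqrt{j}\,\gamma_{N-j}$; this is exactly the resonant, negligible-delay $N$-level system of Section~\ref{Sec:Model}, of which Eq.~(\ref{con:C1k1sZero}) is the three-level instance. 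Since $iM$ is skew-Hermitian, $e^{iMt}$ is unitary, so $\sum_{j=0}^{N-1}|c_j^j(t)|^2 = \|X_c(t)\|^2 = \|X_c(0)\|^2 = 1$ for all $t \ge 0$, using the initial condition that only $c_0^0(0) = 1$.

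Finally I would compare with the normalization identity Eq.~(\ref{con:Norm}): its leading sum is precisely $\sum_j |c_j^j(t)|^2 = 1$, so every remaining, manifestly nonnegative, integral — each attached to a basis state with $N_W = j - m \ge 1$ photons distributed over the waveguides — must vanish for all $t$. Hence $c_{j,\odot n_1\cdots\odot n_W}^m(t,\cdot)\equiv 0$ whenever $N_W \ge 1$, and by the definition of the populations in Eq.~(\ref{con:Pwn}) this gives $P_n^{\mathbf{w}}(t) = 0$ for every $n \ge 1$, every $\mathbf{w} = 1,\dots,W$ and all $t \ge 0$; that is, the parallel waveguides are never populated. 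The main obstacle is in the third step: one has to verify carefully that the $\{c_j^j\}$ block genuinely closes, i.e.\ that the inter-waveguide couplings $K_{\mathbf{w}(\mathbf{w}\pm1)}$ and the propagation phases $n_{\mathbf{w}}\beta_{\mathbf{w}}$ only shuffle amplitudes that already carry a waveguide photon and never feed back into $\dot{X}_c$, and that the $e^{-s\tau}\approx 1$ approximation invoked here is the very same one already adopted in Section~\ref{Sec:Model}, so that no additional error is incurred; the explicit form of $M$ and the bookkeeping in Eq.~(\ref{con:Norm}) are routine.
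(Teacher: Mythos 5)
Your proposal is correct and follows the same overall strategy as the paper: show that the ``all photons in the cavity'' amplitudes $c_j^j$ form a closed subsystem whose total population stays at $1$, and then invoke the normalization identity \eqref{con:Norm} to force every amplitude carrying a waveguide photon to vanish. The paper's own proof is a one-line appeal to Proposition~\ref{photonstate}, whose supporting computation is carried out only for $N=3$ by explicit Laplace inversion (Eqs.~\eqref{con:C1k1sZero}--\eqref{con:sumpop}). Your replacement of that computation by the observation that, once the delay term is cancelled by $1-e^{i\Delta_0\tau}=0$ (under the same $e^{-s\tau}\approx 1$, i.e.\ $c(t-\tau)\approx c(t)$, reduction the paper already adopts), the reduced generator $iM$ is skew-Hermitian and hence norm-preserving, is cleaner and works uniformly in $N$; it in fact closes the small gap left by the paper's reliance on an $N=3$ calculation. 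Your worry about the block closing is unfounded: Eq.~\eqref{photoncavity} involves only $c_{j\pm1}^{j\pm1}$ and the delayed $c_j^j$, the inter-waveguide couplings $K_{\mathbf{w}(\mathbf{w}\pm1)}$ and phases $\beta_{\mathbf{w}}$ appearing only in \eqref{CjkmParaW}--\eqref{Ddelaymodel3}. Both arguments share the same caveat, which you correctly flag: the conclusion is exact only in the $\tau\ll1$ limit, since for finite $\tau$ the residual term $-\kappa\left[c_j^j(t)-c_j^j(t-\tau)\right]$ does not vanish identically.
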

\begin{proof}
When $\Delta_0\tau = 2n\pi$ and $\delta_j =0$, the photons cannot be transmitted from the cavity to the first waveguide according to Eq.~(\ref{photoncavity}) for the same reason as that in Proposition~\ref{photonstate}.  Thus, there are no photons in the parallel waveguides.
\end{proof}

Noticing that the dynamics of Eq.~(\ref{photoncavity}) is the same as that in Eq.~(\ref{con:LinearControlDelay}) based on the same initial condition that the atom is at the highest energy level and there are no photons in the cavity or waveguide,   when the system in Eq.~(\ref{con:LinearControlDelay}) is exponentially stable, then $\lim_{t\rightarrow \infty}c_j^j(t) = 0$ for arbitrary $j$ in Eq.~(\ref{photoncavity}) and all the emitted photons are in the parallel waveguide array. Therefore, we consider the following simplified model for the photonic states in the parallel waveguides as
\begin{small}
\begin{equation} \label{con:tlarget0}
\begin{aligned}
&\dot{c}_{j,\odot n_1\cdots\odot n_W}^0(t,k_{\odot n_1}^{1},\cdots,k_{\odot n_W}^{W}) =\\
&i\sum_{\mathbf{w} =1}^{W-1} K_{\mathbf{w} (\mathbf{w}+1)} c_{j,k}^{0}(t,k_{\odot n_1}^{1},\cdots,k_{\odot n_{\mathbf{w}}+1}^{\mathbf{w}},k_{\odot n_{\mathbf{w}+1}-1}^{\mathbf{w}+1},\cdots,k_{\odot n_W}^{W})\\
& + i\sum_{\mathbf{w} =2}^{W} K_{\mathbf{w}(\mathbf{w}-1)}c_{j,k}^{0}(t,k_{\odot n_1}^{1},\cdots,k_{\odot n_{\mathbf{w}-1}-1}^{\mathbf{w}-1},k_{\odot n_{\mathbf{w}}+1}^{\mathbf{w}},\cdots,k_{\odot n_W}^{W})\\
&+i\sum_{\mathbf{w} =1}^W n_{\mathbf{w}} \beta_{\mathbf{w}}c_{j,\odot n_1\cdots\odot n_W}^0(t,k_{\odot n_1}^{1},\cdots,k_{\odot n_W}^{W}).
\end{aligned}
\end{equation}
\end{small}%

In particular, when $N_W  = 1$, i.e., the single-excitation case,  denoting $c_{j,\odot n_1\cdots\odot n_W}^0(t,k_{\odot n_1}^{1},\cdots,k_{\odot n_W}^{W})$ by $c_{\mathbf{w}}(t)$ for the state that there is one photon of the given mode $k$ in the $\mathbf{w}$-th waveguide. Then Eq.~(\ref{con:tlarget0})  can be rewritten as
\begin{small}
\begin{equation} \label{con:OnephotonParaWave}
\begin{aligned}
\dot{c}_{\mathbf{w}}(t) = iK_{(\mathbf{w}-1) \mathbf{w}} c_{\mathbf{w}-1}(t) +  iK_{\mathbf{w} (\mathbf{w}+1)}c_{\mathbf{w}+1}(t) + i \beta_{\mathbf{w}}c_{\mathbf{w}}(t).
\end{aligned}
\end{equation}
\end{small}%
Denote $C_W(t) = [c_{1}(t), \cdots, c_{\mathbf{w}}(t), \cdots,c_{W}(t)]^{\top} \in \mathbf{C}^{W}$. Then
\[
\dot{C}_W(t)  = iG_W C_W(t),
\]
where
\begin{small}
\begin{equation}\label{con:triangle}
\begin{aligned}
  &G_W  =
   \begin{bmatrix}
    \beta_1  &K_{12}  & 0    & \cdots & 0  &0 \\
    K_{12}  &\beta_2  & K_{23} &\cdots    & 0  &0 \\
    0  &K_{23}  &\beta_3  &\cdots &0  &0 \\
    \vdots & \vdots  & \vdots  &\ddots &\vdots &\vdots\\
     0 & 0 & 0 &\cdots   &\beta_{W-1} &K_{(W-1)W} \\
     0 & 0 & 0 &\cdots   &K_{(W-1)W} &\beta_{W} \\
  \end{bmatrix}.
  \end{aligned}
\end{equation}
\end{small}%
As all the eigenvalues of the real symmetric matrix $G_W$ are real,  the eigenvalues of $iG_W$ must be purely imaginary. Thus $C_W(t)$ oscillates, i.e., the photon is superposed over the parallel waveguides.

\subsection{A simulation example with $N=2$, $W=3$}
Take the circumstance that a two-level atom is coupled with a cavity, and the cavity is coupled with three parallel waveguides. Let the initial state be $|1,0,0\rangle$. Then there is at most one photon in the cavity or the waveguides for all time.  The quantum state of the system can be represented as
\begin{small}
\begin{equation} \label{con:AtomCavityThreeW}
\begin{aligned}
|\Psi(t)\rangle &= c_e(t)|1,0,0\rangle + c_{g}(t)|0,1,0\rangle + \int c_k^1(t,k)|0,0,k_{\odot 1}^1\rangle \mathrm{d}k \\
&+ \int c_k^2(t,k)|0,0,k_{\odot 1}^2\rangle\mathrm{d}k + \int c_k^3(t,k)|0,0,k_{\odot 1}^3\rangle\mathrm{d}k,
\end{aligned}
\end{equation}
\end{small}%
where $|1,0,0\rangle$ means that the atom is excited and there are no photons in the cavity or waveguides, $|0,1,0\rangle$ means that the atom is at the ground state and there is one photon in the cavity, $|0,0,k_{\odot 1}^{\mathbf{w}}\rangle$ means that the atom is at the ground state, there is no photon in the cavity, and there is one photon in the $\mathbf{w}$-th waveguide. Additionally, we denote the time-varying amplitude of the central mode of the photon in the $\mathbf{w}$-th waveguide as $c_{\mathbf{w}}(t)$ with $\mathbf{w} = 1,2,3$. The simulation results when $G_0 = 0.25$, $\gamma =1$, $\Delta_0\tau = \pi$, $K_{12} = K_{23} = 0.5$ and $\beta_{1} = \beta_{2} = \beta_{3} = 0$ are shown in Fig.~\ref{fig:TwolevelThreeWaveguide}. The oscillation of the photons in waveguides is determined by the zero points of the following determinant according to Eq.~(\ref{con:triangle}) as
\begin{small}
\begin{equation} \label{con:ThreeWaveguidePole}
\begin{aligned}
|sI-iG_W| = s\left(s^2+K_{12}^2 + K_{23}^2\right ),
\end{aligned}
\end{equation}
\end{small}%
which reveals that the steady photon amplitudes oscillate around different nonzero mean values with the frequency determined by $\sqrt{K_{12}^2 + K_{23}^2}$ when $\beta_{\mathbf{w}} = 0$.

\begin{figure}[h]
\centerline{\includegraphics[width=0.9\columnwidth]{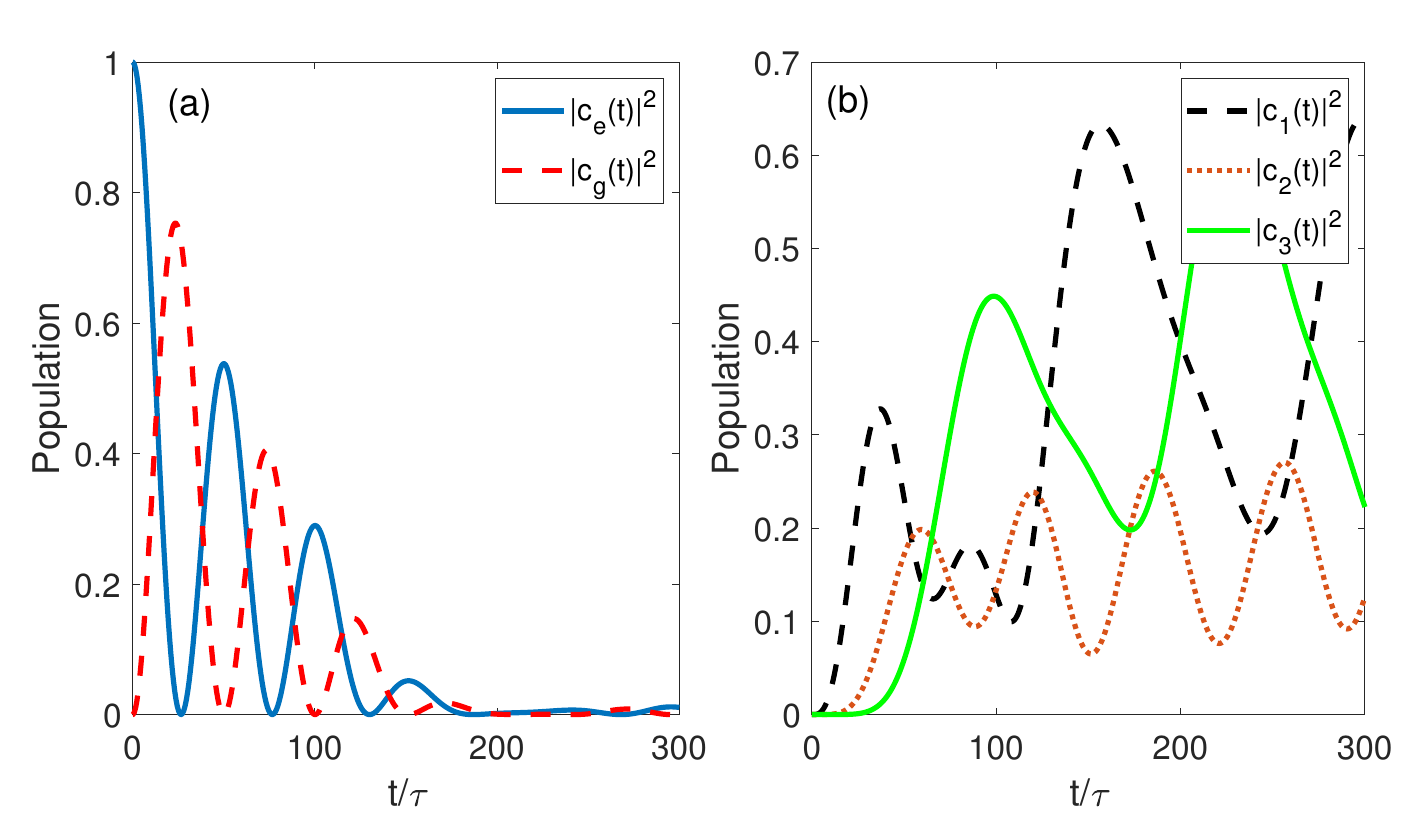}}
\caption{Coherent feedback control of a two-level atom in a cavity coupled with three parallel waveguides.}	\label{fig:TwolevelThreeWaveguide}
\end{figure}

\subsection{A simulation example with $N=3$, $W=2$}
Take the circumstance that a three-level ladder type atom is coupled with a cavity, and the cavity is coupled with two parallel waveguides. Then the state of the quantum system is
\begin{scriptsize}
\begin{equation} \label{con:three32}
\begin{aligned}
&|\Psi(t)\rangle = c_0^0(t)|2,0,0\rangle + c_{1}^1(t)|1,1,0\rangle + c_2^2(t)|0,2,0\rangle \\
& + \int c_{1,\odot 1\odot 0}^{0}(t,k_{1,1\odot 1}^1)|1,0,\{k_{\odot 1}^1\},\{k_{\odot 0}^2\}\rangle \mathrm{d}k_{1,1\odot 1}^1 \\
&+ \int c_{1,\odot 0\odot 1}^{0}(t,k_{1,1\odot 1}^2)|1,0,\{k_{\odot 0}^1\},\{k_{\odot 1}^2\}\rangle \mathrm{d}k_{1,1\odot 1}^2\\
&+ \int c_{2,\odot 1\odot 0}^{1}(t,k_{2,1\odot 1}^1)|0,1,\{k_{\odot 1}^1\},\{k_{\odot 0}^2\}\rangle \mathrm{d}k_{2,1\odot 1}^1 \\
&+ \int c_{2,\odot 0\odot 1}^{1}(t,k_{2,1\odot 1}^2)|0,1,\{k_{\odot 0}^1\},\{k_{\odot 1}^2\}\rangle \mathrm{d}k_{2,1\odot 1}^2\\
&+ \int \int c_{2,\odot 2\odot 0}^{0} (t,k_{2,1\odot 2}^1,k_{2,2\odot 2}^1)|0,0,\{k_{\odot 2}^1\},\{k_{\odot 0}^2\}\rangle \mathrm{d}k_{2,1\odot 2}^1 \mathrm{d}k_{2,2\odot 2}^1 \\
&+ \int \int c_{2,\odot 0\odot 2}^{0} (t,k_{2,1\odot 2}^2,k_{2,2\odot 2}^2)|0,0,\{k_{\odot 0}^1\},\{k_{\odot 2}^2\}\rangle \mathrm{d}k_{2,1\odot 2}^2 \mathrm{d}k_{2,2\odot 2}^2 \\
&+ \int \int c_{2,\odot 1\odot 1}^{0} (t,k_{2,1\odot 1}^1,k_{2,1\odot 1}^2)|0,0,\{k_{\odot 1}^1\},\{k_{\odot 1}^2\}\rangle \mathrm{d}k_{2,1\odot 1}^1 \mathrm{d}k_{2,1\odot 1}^2 ,\\
\end{aligned}
\end{equation}
\end{scriptsize}%
where the meaning of the state vectors are as follows:

(1)~$|2,0,0\rangle$: the atom is at the excited state $|2\rangle$ and there are not any photons in the cavity or waveguides;

(2)~$|1,1,0\rangle$: the atom is $|1\rangle$ and there is one photon in the cavity but no photons in the waveguides;

(3)~$|0,2,0\rangle$: the atom is at the ground state $|0\rangle$ and there are two photons in the cavity but no photons in the waveguides;

(4)~$|1,0,\{k_{\odot 1}^1\},\{k_{\odot 0}^2\}\rangle$: the atom is $|1\rangle$ and there is one photon in the first waveguide, and the mode of the photon is $k_{2,1\odot 1}^1$, where the subscript represent that there are overall one excitons and the photon is in the first waveguide;

(5)~$|1,0,\{k_{\odot 0}^1\},\{k_{\odot 1}^2\}\rangle$: the atom is $|1\rangle$ and there is one photon in the second waveguide;

(6)~$|0,1,\{k_{\odot 1}^1\},\{k_{\odot 0}^2\}\rangle$: the atom is $|0\rangle$, there is one photon in the cavity and one photon in the first waveguide;

(7)~$|0,1,\{k_{\odot 0}^1\},\{k_{\odot 1}^2\}\rangle$: the atom is $|0\rangle$, there is one photon in the cavity and one photon in the second waveguide;

(8)~$|0,0,\{k_{\odot 2}^1\},\{k_{\odot 0}^2\}\rangle$: the atom is $|0\rangle$, there are two photons in the first waveguide;

(9)~$|0,0,\{k_{\odot 0}^1\},\{k_{\odot 2}^2\}\rangle$: the atom is $|0\rangle$, there are two photons in the second waveguide;

(10)~$|0,0,\{k_{\odot 1}^1\},\{k_{\odot 1}^2\}\rangle$: the atom is $|0\rangle$, there is one photon in the first waveguide and the other in the second waveguide.

The meanings of the photon mode in the waveguide are as following:

(1)~$k_{1,1\odot 1}^1$: the mode of the photon in the first waveguide when $j=1$ and there is one photon in the first waveguide;

(2)~$k_{1,1\odot 1}^2$: the mode of the photon in the second waveguide when $j=1$ and there is one photon in the second waveguide;

(3)~$k_{2,1\odot 1}^1$: the mode of the photon in the first waveguide when $j=2$, there is one photon in the cavity and the other is in the first waveguide;

(4)~$k_{2,1\odot 1}^2$: the mode of the photon in the second waveguide when $j=2$, there is one photon in the cavity and the other is in the second waveguide;

(5)~$k_{2,1\odot 2}^1,k_{2,2\odot 2}^1$: the modes of the two photons in the first waveguide when $j=2$ and there are two photons in the first waveguide;

(6)~$k_{2,1\odot 2}^2,k_{2,2\odot 2}^2$: the modes of the two photons in the second waveguide when $j=2$ and there are two photons in the second waveguide;

(7)~$k_{2,1\odot 1}^1,k_{2,1\odot 1}^2$: when $j=2$ and there is one photon in the first and second waveguide, respectively, $k_{2,1\odot 1}^1$ is for the photon mode in the first waveguide, $k_{2,1\odot 1}^2$ for the photon in the second waveguide.

The population representing that there is one photon in the first or the second waveguide can be represented
according to Eq.~(\ref{con:Pwn}).
The dynamics of the quantum states in Eq.~(\ref{con:three32}) is governed by Eq.~(\ref{con:ParallelControl}) with $N=3$ and $W = 2$. In the following numerical simulations, $\Delta_0 =50$, $G_0 = 0.2$, $\gamma_1 = \gamma_2 =0.3$, $\Delta_0\tau = 3\pi$, $\beta_1 = \beta_2 = 0.1$, $K_{12} = K_{21} = 0.5$.
\begin{figure}[h]
\centerline{\includegraphics[width=1\columnwidth]{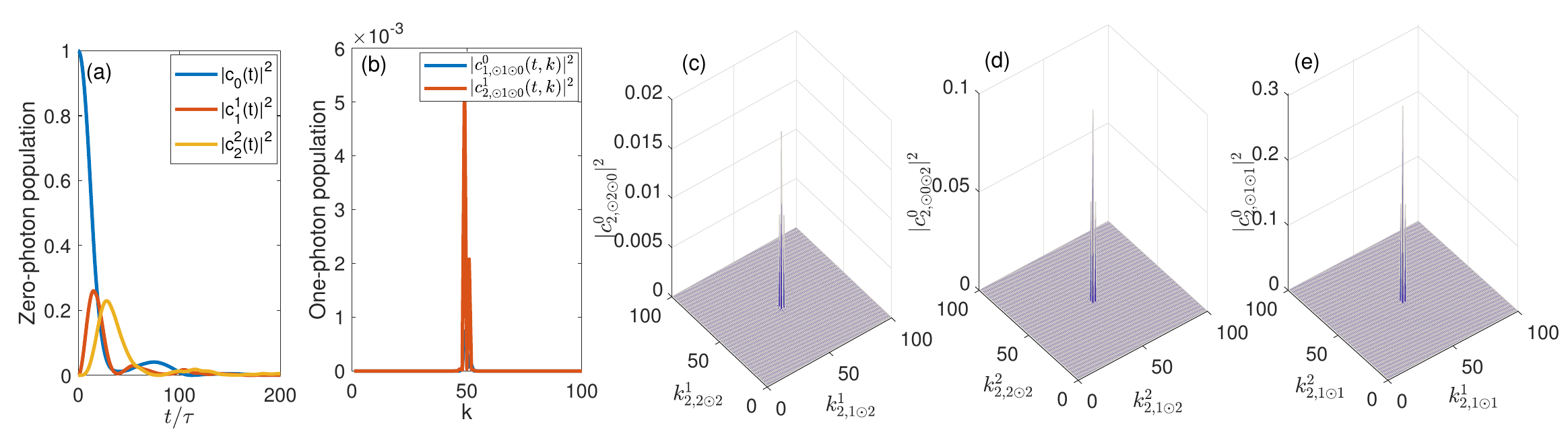}}
\caption{The populations of the zero-, one- and two-photon states when $N=3$, $W =2$.}
	\label{fig:threelevelTwowaveguide}
\end{figure}
As illustrated in Fig.~\ref{fig:threelevelTwowaveguide}, the excited three-level atom can emit two photons, and finally the two-photon state oscillates in the parallel waveguide array.

\section{Conclusion} \label{Sec:Conclusion}
In this paper, we have studied a coherent feedback scheme in the architecture that a multi-level atom is coupled with a cavity and the cavity is coupled with a single waveguide or multiple parallel waveguides. The number of photons in the waveguide is determined by the evolution of the quantum states, especially by their exponential stability. By tuning the length of the feedback loop, all the photons can be emitted into the waveguides, thus the amplitudes of eigenstates representing that there are photons in the cavity or the atom is excited  simultaneously converge to zero; or alternatively the multi-level atom exchanges photons with the cavity via Rabi oscillations and there are no photons in the waveguides. The photonic states in the parallel waveguides can be represented in the tensor format and the coupling parameters among the parallel waveguides can further influence the distribution of photons in the waveguide array. We have shown how the quantum control dynamics can be modeled as linear control systems with a round-trip delay, and the convergence of the eigenstates and the generation of photons can be analyzed by means of the stability theory for linear time-invariant or time-varying systems, both of which can be studied via Lyapunov-Krasovkii functions. This demonstrates a nice application of time-delayed linear control theory in quantum systems.

\begin{appendices}
\section{Derivation of the delay dependent control equation} \label{Sec:ModeldelayAppend}
In this Appendix, we derive the delay-dependent ODE system in Eq.~(\ref{con:controleq2}) in the main text.

The third term of the right-hand side of Eq.~(\ref{Nmodel2}) can be represented as
\begin{scriptsize}
\begin{equation} \label{con:delay0}
\begin{aligned}
&~~~~i\sum_{p=1}^{j-m+1}\int G(k_p,t) c_{j,k}^{m-1}(t,k_1,\cdots,k_{p-1},k_p,k_{p+1}\dots,k_{j-m+1}) \mathrm{d}k_p \\
&=i\sum_{p=1}^{j-m+1}\int  G_0\sin(k_pL)e^{-i(\omega_p-\Delta_0)t} \\
&~~~~c_{j,k}^{m-1}(t,k_1,\cdots,k_{p-1},k_p,k_{p+1}\dots,k_{j-m+1}) \mathrm{d}k_p \\
&=i\sum_{p=1}^{j-m+1}\int  G_0\sin(k_pL)e^{-i(\omega_p-\Delta_0)t} \\
&~~~~\int_0^t \dot{c}_{j,k}^{m-1}(u,k_1,\cdots,k_{p-1},k_p,k_{p+1}\dots,k_{j-m+1}) \mathrm{d}u \mathrm{d}k_p \\
&=i\sum_{p=1}^{j-m+1}\int \int_0^t   G_0\sin(k_pL)e^{-i(\omega_p-\Delta_0)t}\\
&~~~~\dot{c}_{j,k}^{m-1}(u,k_1,\cdots,k_{p-1},k_p,k_{p+1}\dots,k_{j-m+1}) \mathrm{d}u \mathrm{d}k_p,
\end{aligned}
\end{equation}
\end{scriptsize}%
where the initial condition $c_{j,k}^{m-1}(0,k_1,\cdots,k_{p-1},k_p,k_{p+1}\dots,k_{j-m+1})  =0$ has been used to get the second last step. Moreover, according to Eq. \eqref{Nmodel2},
\begin{scriptsize}
\begin{equation} \label{con:delay}
\begin{aligned}
&~~~~\dot{c}_{j,k}^{m-1}(u,k_1,\cdots,k_{p-1},k_p,k_{p+1}\dots,k_{j-m+1}) \\
&= i\sqrt{m-1}\gamma_{N-j}  c_{j-1,k}^{m-2}(u,k_1,\cdots,k_{j-m+1}) \\
&+ i\sqrt{m}\gamma_{N-j-1}  c_{j+1,k}^{m}(u,k_1,\cdots,k_{j-m+1})\\
&+i\sum_{p=1}^{j-m+2}\int G(k_p,u) c_{j,k}^{m-2}(u,k_1,\cdots,k_{p-1},k_p,k_{p+1}\dots,k_{j-m+2}) \mathrm{d}k_p \\
&+i\sum_{p=1}^{j-m+1} G^*(k_p,u) c_{j,k}^{m}(u,k_1,\cdots,k_{p-1},k_{p+1},\dots, k_{j-m}), m>0. \\
\end{aligned}
\end{equation}
\end{scriptsize}%

Firstly, let us look at the first term on the RHS of Eq.~\eqref{con:delay}. With $\tau = 2L/c$, then
\begin{scriptsize}
\begin{equation} \label{con:example}
\begin{aligned}
&~~~~ -\sqrt{m-1}\gamma_{N-j} G_0 \int\int_0^t  \sin(k_pL)e^{-i(\omega_p-\Delta_0)t}   \\
&~~~~c_{j-1,k}^{m-2}(u,k_1,\cdots,k_{j-m+1}) \mathrm{d}u \mathrm{d}k_p \\
&=  -\sqrt{m-1}\gamma_{N-j} G_0 \int\int_0^t  \frac{e^{ik_pL} - e^{-ik_p L}}{2i}e^{-i(\omega_p-\Delta_0)t}  \\
&~~~~c_{j-1,k}^{m-2}(u,k_1,\cdots,k_{j-m+1}) \mathrm{d}u \mathrm{d}k_p \\
&=\frac{i\sqrt{m-1}\gamma_{N-j} G_0}{2} \int\int_0^t  \left (e^{-i(\omega_p-\Delta_0)t + ik_pL} -e^{-i(\omega_p-\Delta_0)t - ik_pL}\right ) \\
&~~~~c_{j-1,k}^{m-2}(u,k_1,\cdots,k_{j-m+1}) \mathrm{d}u \mathrm{d}k_p \\
&=\frac{i\sqrt{m-1}\gamma_{N-j} G_0}{2}e^{i\Delta_0 t} \int \left ( e^{-i\omega_p(t-\frac{\tau}{2})} -e^{-i\omega_p(t+\frac{\tau}{2})}\right ) \\
&~~~~\int_0^t c_{j-1,k}^{m-2}(u,k_1,\cdots,k_{j-m+1}) \mathrm{d}u \mathrm{d}k_p. \\
\end{aligned}
\end{equation}
\end{scriptsize}

Denote $\tilde{c}(t,k_p) = \int_0^t c_{j-1,k}^{m-2}(u,k_1,\cdots,k_{j-m+1}) \mathrm{d}u$, then Eq.~(\ref{con:example}) reads
\begin{scriptsize}
\begin{equation} \label{con:example2}
\begin{aligned}
&~~~~\frac{i\sqrt{m-1}\gamma_{N-j} G_0}{2} e^{i\Delta_0 t} \int \left( e^{-i\omega_p\left(t-\frac{\tau}{2}\right)} -e^{-i\omega_p\left(t+\frac{\tau}{2}\right)}\right) \tilde{c}(t,k_p)\mathrm{d}k_p \\%=0.
&=\frac{i\sqrt{m-1}\gamma_{N-j} G_0}{2} e^{i\Delta_0 t} \left[\delta\left(t-\frac{\tau}{2}\right) -\delta\left(t+\frac{\tau}{2}\right)\right]\\
&~~~~\int_0^t c_{j-1,k}^{m-2}(u,k_1,\cdots,k_{j-m+1}) \mathrm{d}u \\
&~~~~- \frac{i\sqrt{m-1}\gamma_{N-j} G_0}{2} e^{i\Delta_0 t} \left[\delta\left(t-\frac{\tau}{2}\right) -\delta\left(t+\frac{\tau}{2}\right)\right] \\
&~~~~\int \int_0^t \frac{\partial c_{j-1,k}^{m-2}(u,k_1,\cdots,k_{j-m+1})}{\partial k_p} \mathrm{d}u \mathrm{d}k_p\\
&=0,
\end{aligned}
\end{equation}
\end{scriptsize}%
because the integration equals zero when $t\neq \tau/2$ and the amplitude is continuous in the time domain.
Similarly the second term on the RHS  of Eq.~(\ref{con:delay}) equals zero.

Substituting the third component of Eq.~(\ref{con:delay}) into Eq.~(\ref{con:delay0}), the integration reads
\begin{footnotesize}
\begin{equation} \label{con:examplethreepart}
\begin{aligned}
&-\left|G_0\right|^2 \sum_{p=1}^{j-m+1}\int \sin(k_pL)e^{-i(\omega_p-\Delta_0)t}
\int_0^t  \sum_{q=1}^{j-m+2}
\int \sin(k_qL)\\
&e^{-i(\omega_q-\Delta_0)t}  c_{j,k}^{m-2}(u,k_1,\cdots,k_{q-1},k_q,k_{q+1}\dots,k_{j-m+2}) \mathrm{d}k_q.
\end{aligned}
\end{equation}
\end{footnotesize}%

When $p\neq q$, the above integration equals zero. When $p=q$, by the calculations in Eq.~(\ref{con:example2})
\begin{scriptsize}
\begin{equation} \label{con:examplethreepart2}
\begin{aligned}
&~~~~-\left|G_0\right|^2 \sum_{p=1}^{j-m+1}\int \sin(k_pL)e^{-i(\omega_p-\Delta_0)t}  \int_0^t \int \sin(k_pL)e^{-i(\omega_p-\Delta_0)t} \\
&~~~~c_{j,k}^{m-2}(u,k_1,\cdots,k_{p-1},k_p,k_{p+1}\dots,k_{j-m+2}) \mathrm{d}k_p \mathrm{d}u \mathrm{d}k_p \\
&=-\left|G_0\right|^2  \sum_{p=1}^{j-m+1}\int \int \sin^2(k_pL)e^{-i2(\omega_p-\Delta_0)t}  \\
&\int_0^t  c_{j,k}^{m-2}(u,k_1,\cdots,k_{p-1},k_p,k_{p+1}\dots,k_{j-m+2}) \mathrm{d}u \mathrm{d}k_p \mathrm{d}k_p \\
&=-\left|G_0\right|^2 \sum_{p=1}^{j-m+1}\int \int \sin^2(k_pL)e^{-i2(\omega_p-\Delta_0)t} \bar{c}(t,k_p) \mathrm{d}k_p \mathrm{d}k_p \\
&=-\left|G_0\right|^2 \sum_{p=1}^{j-m+1}\int \int \left (\frac{1}{2} -\frac{1}{4}e^{i\omega_p\tau} -\frac{1}{4}e^{-i\omega_p\tau}\right )e^{-i2(\omega_p-\Delta_0)t} \\
&~~~~\bar{c}(t,k_p) \mathrm{d}k_p \mathrm{d}k_p\\
&=0.
\end{aligned}
\end{equation}
\end{scriptsize}%
Substituting the fourth term on the RHS of Eq.~(\ref{con:delay}) into Eq.~(\ref{con:delay0}), the integration reads
\begin{scriptsize}
\begin{equation} \label{con:examplefourpart2}
\begin{aligned}
&~~~~-\sum_{p=1}^{j-m+1}\int \int_0^t   G_0\sin(k_pL)e^{-i(\omega_p-\Delta_0)t} \sum_{q=1}^{j-m+1} G^*(k_q,u) \\
&~~~~c_{j,k}^{m}(u,k_1,\cdots,k_{q-1},k_{q+1},\dots, k_{j-m}) \mathrm{d}u \mathrm{d}k_p \\
&=-\left|G_0\right|^2 \sum_{p=1}^{j-m+1}\int_0^t \int \sin^2(k_pL) e^{-i(\omega_p-\Delta_0)(t-u)} \\
&~~~~c_{j,k}^{m}(u,k_1,\cdots,k_{p-1},k_{p+1},\dots, k_{j-m})  \mathrm{d}k_p \mathrm{d}u\\
&=-\frac{\left|G_0\right|^2}{4c} \sum_{p=1}^{j-m+1}\int_0^t \int \left [2e^{-i(\omega_p-\Delta_0)(t-u)} - e^{i\Delta_0\tau} e^{-i(\omega_p-\Delta_0)(t-u-\tau)} \right.\\
&~~~~\left.- e^{-i\Delta_0\tau}e^{-i(\omega_p-\Delta_0)(t-u+\tau)}\right ] \\
&~~~~c_{j,k}^{m}(u,k_1,\cdots,k_{p-1},k_{p+1},\dots, k_{j-m}) \mathrm{d}k_p \mathrm{d}u\\
&=-\frac{\left|G_0\right|^2}{4c} \sum_{p=1}^{j-m+1}\int_0^t \left [2\delta(t-u) - e^{i\Delta_0\tau}\delta(t-u-\tau) \right.\\
&~~~~\left.- e^{-i\Delta_0\tau}\delta(t-u+\tau)\right ]c_{j,k}^{m}(u,k_1,\cdots,k_{p-1},k_{p+1},\dots, k_{j-m})  \mathrm{d}u\\
&=-\frac{\left|G_0\right|^2}{4c} \sum_{p=1}^{j-m+1} \left [c_{j,k}^{m}(t,k_1,\cdots,k_{p-1},k_{p+1},\dots, k_{j-m}) \right.\\
&~~~~\left.- e^{i\Delta_0\tau} c_{j,k}^{m}(t-\tau,k_1,\cdots,k_{p-1},k_{p+1},\dots, k_{j-m}) \right],
\end{aligned}
\end{equation}
\end{scriptsize}%
which is the component with delay in Eq.~(\ref{delaymodel2}) in the main text.

\section{Proof of \textbf{Proposition}~\ref{TACLyaUncertainty}} \label{Sec:ProofLyaApend}
In this Appendix, we prove \textbf{Proposition}~\ref{TACLyaUncertainty}.
Consider the Lyapunov-Krasovkii function
\begin{small}
\begin{equation} \label{con:LyaVxt}
\begin{aligned}
V(\tilde{X}_t) = \tilde{X}^\top(t)\tilde{P}\tilde{X}(t) + \int_{-\tau}^0 \tilde{X}^\top  (t+\theta) e^{2\beta\theta} \tilde{Q}\tilde{X}(t+\theta)  \mathrm{d}\theta,
  \end{aligned}
\end{equation}
\end{small}%
where $\tilde{P}$ and $\tilde{Q}$ are real positive-definite matrices.  Denote
\begin{numcases}{}
\tilde{\alpha}_1 = \lambda_{min}(\tilde{P}),\\
\tilde{\alpha}_2 = \lambda_{max}(\tilde{P}) + \tau \lambda_{max}(\tilde{Q}).
\end{numcases}%
Then clearly,
\begin{small}
\begin{equation}\label{con:VxtInequ}
   V(\tilde{X}_t) \geq \tilde{\alpha}_1 \| \tilde{X}_t \|^2 .
\end{equation}
\end{small}%
Notice that $V(\tilde{X}_t)$ in Eq. \eqref{con:LyaVxt} can be re-written as
\begin{small}
\begin{equation} \label{con:VxtCal00}
\begin{aligned}
V(\tilde{X}_t)  =& \begin{bmatrix}
\tilde{X}(t)\\
\tilde{X}(t-\tau)
\end{bmatrix}^\top   \mathcal{N}(\tilde{P})\begin{bmatrix}
\tilde{X}(t)\\
\tilde{X}(t-\tau)
\end{bmatrix} \\
&+ \int_{-\tau}^0 \tilde{X}^\top  (t+\theta) e^{2\beta\theta} \tilde{Q}\tilde{X}(t+\theta)  \mathrm{d}\theta,
\end{aligned}
\end{equation}
\end{small}%
where the matrix $\mathcal{N}$ is that defined in Eq.~(\ref{MNdef}). Differentiating both sides of Eq.~\eqref{con:VxtCal00} yields
\begin{small}
\begin{equation} \label{con:VxtCal}
\begin{aligned}
\frac{\mathrm{d}}{\mathrm{d}t} V(\tilde{X}_t) = & \tilde{X}^\top  (t)\tilde{P}\dot{\tilde{X}}(t) + \dot{\tilde{X}}(t) ^\top  \tilde{P}\tilde{X}(t)\\
&+  \frac{\mathrm{d}}{\mathrm{d}t} \int_{-\tau}^0 \tilde{X}^\top  (t+\theta) e^{2\beta\theta} \tilde{Q}\tilde{X}(t+\theta)  \mathrm{d}\theta.
\end{aligned}
\end{equation}
\end{small}%
Denote $u =t+\theta$ and  $F(t) = \int_{-\tau}^0 \tilde{X}^\top  (t+\theta) e^{2\beta\theta} \tilde{Q}\tilde{X}(t+\theta)  \mathrm{d}\theta= \int_{t-\tau}^t \tilde{X}^\top  (u) e^{2\beta(u-t)} \tilde{Q}\tilde{X}(u)  \mathrm{d}u$. We have
\begin{small}
\begin{equation} \label{con:VxtCal2}
\begin{aligned}
&~~~~\frac{\mathrm{d}}{\mathrm{d}t} \int_{-\tau}^0 \tilde{X}^\top  (t+\theta) e^{2\beta\theta} \tilde{Q}\tilde{X}(t+\theta)  \mathrm{d}\theta\\
&= -2\beta\int_{t-\tau}^t \tilde{X}^\top  (u) e^{2\beta(u-t)} \tilde{Q}\tilde{X}(u)  \mathrm{d}u \\
&~~~~+ \tilde{X}^\top  (t) \tilde{Q}\tilde{X}(t) - \tilde{X}^\top  (t-\tau) e^{-2\beta\tau} \tilde{Q}\tilde{X}(t-\tau)\\
&=-2\beta \int_{-\tau}^0 \tilde{X}^\top  (t+\theta) e^{2\beta\theta} \tilde{Q}\tilde{X}(t+\theta)  \mathrm{d}\theta \\
&~~~~+ \tilde{X}^\top  (t) \tilde{Q}\tilde{X}(t) - \tilde{X}^\top  (t-\tau) e^{-2\beta\tau} \tilde{Q}\tilde{X}(t-\tau).
\end{aligned}
\end{equation}
\end{small}%
Substituting Eq.~\eqref{con:VxtCal2} into Eq.~\eqref{con:VxtCal} yields
\begin{footnotesize}
\begin{equation} \label{con:VxtCal3}
\begin{aligned}
&~~~~\frac{\mathrm{d}}{\mathrm{d}t} V(\tilde{X}_t) \\
&= \tilde{X}^\top  (t)\tilde{P}\dot{\tilde{X}}(t) + \dot{\tilde{X}}^\top  (t) \tilde{P}\tilde{X}(t)\\
&~~~~-2\beta \int_{-\tau}^0 \tilde{X}^\top  (t+\theta) e^{2\beta\theta} \tilde{Q}\tilde{X}(t+\theta)  \mathrm{d}\theta + \tilde{X}^\top  (t) \tilde{Q}\tilde{X}(t) \\
&~~~~ - \tilde{X}^\top  (t-\tau) e^{-2\beta\tau} \tilde{Q}\tilde{X}(t-\tau)\\
&= \tilde{X}^\top  (t)\tilde{P} \left[(\tilde{A}_0 + \Upsilon(t)) \tilde{X}(t) + \tilde{B} \tilde{X}(t-\tau)\right] \\
&~~~~ + \left[(\tilde{A}_0 + \Upsilon(t)) \tilde{X}(t)
 + \tilde{B} \tilde{X}(t-\tau)\right]^\top   \tilde{P}\tilde{X}(t)\\
&~~~~-2\beta \int_{-\tau}^0 \tilde{X}^\top  (t+\theta) e^{2\beta\theta} \tilde{Q}\tilde{X}(t+\theta)  \mathrm{d}\theta + \tilde{X}^\top  (t) \tilde{Q}\tilde{X}(t) \\
&~~~~ - \tilde{X}^\top  (t-\tau) e^{-2\beta\tau} \tilde{Q}\tilde{X}(t-\tau)\\
&=\begin{bmatrix}
\tilde{X}(t)\\
\tilde{X}(t-\tau)
\end{bmatrix}^\top   \left \{\begin{bmatrix}
   \tilde{P} \tilde{A}_0 +\tilde{A}^\top  _0\tilde{P}+\tilde{Q}  & \tilde{P}\tilde{B}\\
   \tilde{B}^\top  \tilde{P} & -e^{-2\beta\tau}\tilde{Q}
  \end{bmatrix} \right. \\
 &~~~~\left. + \begin{bmatrix}
\Upsilon(t)^\top  \\
\mathbf{0}_{2(N-1)*1}
\end{bmatrix} \tilde{P} [I ~~~~\mathbf{0}_{2(N-1)*1}] \right.\\
&\left. ~~~~+\begin{bmatrix}
I\\
\mathbf{0}_{2(N-1)*1}
\end{bmatrix} \tilde{P} [\Upsilon(t) ~~~~\mathbf{0}_{2(N-1)*1}] \right \}\begin{bmatrix}
\tilde{X}(t)\\
\tilde{X}(t-\tau)
\end{bmatrix}\\
&~~~~-2\beta \int_{-\tau}^0 \tilde{X}^\top  (t+\theta) e^{2\beta\theta} \tilde{Q}\tilde{X}(t+\theta)  \mathrm{d}\theta + \tilde{X}^\top  (t) \tilde{Q}\tilde{X}(t) \\
&~~~~ - \tilde{X}^\top  (t-\tau) e^{-2\beta\tau} \tilde{Q}\tilde{X}(t-\tau)\\
&\leq \begin{bmatrix}
\tilde{X}(t)\\
\tilde{X}(t-\tau)
\end{bmatrix}^\top    \left (\begin{bmatrix}
   \tilde{P} \tilde{A}_0 +\tilde{A}^\top  _0\tilde{P}+\tilde{Q}  & \tilde{P}\tilde{B}\\
   \tilde{B}^\top  \tilde{P} & -e^{-2\beta\tau}\tilde{Q}
  \end{bmatrix} \right.\\
  &\left. ~~~~ + 2  \lambda_{max}(\tilde{P}) \|\Upsilon(t,\gamma_1,\cdots,\gamma_{N-1},\delta_1,\cdots,\delta_{N-1}) \|_2 \right ) \begin{bmatrix}
\tilde{X}(t)\\
\tilde{X}(t-\tau)
\end{bmatrix}\\
&~~~~-2\beta \int_{-\tau}^0 \tilde{X}^\top  (t+\theta) e^{2\beta\theta} \tilde{Q}\tilde{X}(t+\theta)  \mathrm{d}\theta + \tilde{X}^\top  (t) \tilde{Q}\tilde{X}(t)\\
&~~~~- \tilde{X}^\top  (t-\tau) e^{-2\beta\tau} \tilde{Q}\tilde{X}(t-\tau).
\end{aligned}
\end{equation}
\end{footnotesize}%
Consequently,
\begin{small}
\begin{equation} \label{con:VxtCal4}
\begin{aligned}
&~~~~\frac{\mathrm{d}}{\mathrm{d}t} V(\tilde{X}_t) + 2\beta   V(\tilde{X}_t) \\
&\leq
\begin{bmatrix}
\tilde{X}(t)\\
\tilde{X}(t-\tau)
\end{bmatrix}^\top    \left (\begin{bmatrix}
   \tilde{P} \tilde{A}_0 +\tilde{A}^\top  _0\tilde{P}+\tilde{Q}  & \tilde{P}\tilde{B}\\
   \tilde{B}^\top  \tilde{P} & -e^{-2\beta\tau}\tilde{Q}
  \end{bmatrix}  \right.\\
  &~~~~\left. + 2  \lambda_{max}(\tilde{P}) \|\Upsilon(t,\gamma_1,\cdots,\gamma_{N-1},\delta_1,\cdots,\delta_{N-1}) \|_2 \right.\\
&~~~~\left.  +2\beta \mathcal{N}(\tilde{P}) \right )
 \begin{bmatrix}
\tilde{X}(t)\\
\tilde{X}(t-\tau)
\end{bmatrix}.
\end{aligned}
\end{equation}
\end{small}%
Once the condition in Eq.~(\ref{con:LyatimeVary}) is satisfied, one has
\[
\frac{\mathrm{d}}{\mathrm{d}t} V(\tilde{X}_t) + 2\beta   V(\tilde{X}_t) \leq 0,
\]
and this, together with Eq.~\eqref{con:VxtInequ}, gives
\[
\tilde{\alpha}_1 \| \tilde{X}_t \|^2 \leq V(\tilde{X}_t) \leq e^{-2\beta t} V (\tilde{X}_0 ),
\]
where $V(\tilde{X}_0) = \tilde{X}^\top  (0)\tilde{P}\tilde{X}(0) + \int_{-\tau}^0 \tilde{X}^\top  (\theta) e^{2\beta\theta} \tilde{Q}\tilde{X}(\theta)  \mathrm{d}\theta$. Thus
\begin{scriptsize}
\begin{equation}
\begin{aligned}
 V(\tilde{X}_t) &\leq e^{-2\beta t} \lambda_{max}(\tilde{P}) \tilde{X}^\top  (0)\tilde{X}(0) + \int_{-\tau}^0 \tilde{X}^\top  (\theta) e^{2\beta(\theta-t)} \tilde{Q}\tilde{X}(\theta)  \mathrm{d}\theta \\
& \leq e^{-2\beta t} \lambda_{max}(\tilde{P})  |\tilde{X}_t |_{\tau}^{*^2} + e^{-2\beta t} \tau \lambda_{max}(\tilde{Q}) |\tilde{X}_t |_{\tau}^{*^2} \\
&=  \tilde{\alpha}_2 e^{-2\beta t}  |\tilde{X}_t |_{\tau}^{*^2},
\end{aligned}
\end{equation}
\end{scriptsize}%
where $|\tilde{X}_t |_{\tau}^* = \max_{t\in[-\tau,0]}\{\|\tilde{\varphi}(t)\|\}$, as defined in \textbf{Definition}~\ref{defstable}. That is, the system is exponentially stable according to \textbf{Definition}~\ref{defstable}.

\end{appendices}

\bibliographystyle{IEEEtran}
\bibliography{Nlevel}
%\begin{IEEEbiography}[{\includegraphics[width=1in,height=1.1in,clip,keepaspectratio]{haijin.jpg}}]{Haijin Ding}received the Ph.D. degree in quantum control from Tsinghua University, Beijing, China, in 2021.
%He worked as a research assistant at the Hong Kong Polytechnic University, Hong Kong, a research assistant at the Hong Kong Polytechnic University Shenzhen Research Institute, Shenzhen, China, and a postdoc at Laboratoire des Signaux et Syst\`{e}mes (L2S), CNRS-CentraleSup\'{e}lec-Universit\'{e} Paris-Sud, Universit\'{e} Paris-Saclay. He is currently a postdoc fellow at the Hong Kong Polytechnic University, Hong Kong. His research interests include quantum control and quantum optics.
%\end{IEEEbiography}
%
%\begin{IEEEbiography}[{\includegraphics[width=1in,height=1.1in,clip,keepaspectratio]{guofeng.jpg}}]{Guofeng Zhang} received the Ph.D. degree in applied mathematics from the University of Alberta, Edmonton, AB, Canada, in 2005.
%He joined the University of Electronic Science and Technology of China, Chengdu, China, in 2007. He joined the Hong Kong Polytechnic University, Hong Kong, in December 2011, and is currently an Associate Professor. His research interests include quantum control and tensor-based quantum computing.
%\end{IEEEbiography}
%\end{spacing}
\end{document}